\newcommand\numberthis{\addtocounter{equation}{1}\tag{\theequation}}
\theoremstyle{plain}
\newtheorem{thm}{Theorem}[section]
\newtheorem{prop}[thm]{Proposition}
\newtheorem{cor}[thm]{Corollary}
\newtheorem{example}[thm]{Example}
\newtheorem{theorem}{Theorem}[section]
\newtheorem{proposition}[thm]{Proposition}
\newtheorem{lemma}[thm]{Lemma}
\theoremstyle{definition}
\newtheorem*{definition}{Definition}
\newtheorem{remark}[thm]{Remark}
\newenvironment{proofoutline}
{\proof[Proof outline]}
\begin{document}
\title{Ulam Sphere Size Analysis for \\Permutation 
and Multipermutation Codes \\Correcting Translocation Errors
}
%
%
%

\author{Justin~Kong

\thanks{J. Kong is with the 
Department of Mathematics and Informatics, 
Graduate School of Science, 
Chiba University 
1-33 Yayoi-cho, Inage-ku, Chiba City 
Chiba Pref., 263-0022 JAPAN, email: jkong@math.s.chiba-u.ac.jp }}

\maketitle

\begin{abstract}
Permutation and multipermutation codes in the Ulam metric have 
been suggested for use in non-volatile memory storage systems 
such as flash memory devices. 
In this paper we introduce a new method to calculate permutation 
sphere sizes in the Ulam metric using Young Tableaux and prove 
the non-existence of non-trivial perfect permutation codes in the 
Ulam metric. We then extend the study to multipermutations, providing 
tight upper and lower bounds on multipermutation Ulam sphere 
sizes and resulting upper and lower bounds on the maximal 
size of multipermutation codes in the Ulam metric. 
\end{abstract}


%
\IEEEpeerreviewmaketitle

\section{Introduction}
The history of permutation codes 
dates as far back as the 1960's and 70's, with Slepian, 
Berger, Blake, and others 
\cite{Berger, Blake, Slepian}. 
However, the application of permutation codes and multipermutation codes 
for use in non-volatile memory storage 
systems such as flash memory has received attention in 
the coding theory literature in recent years \cite{Barg, Jiang, Jiangtwo,  Lim, Vinck, Wadayama}.
One of the main distance metrics in the literature has been the 
Kendall-$\tau$ metric, which is suitable for correction of the 
type of error expected to occur in flash memory devices
\cite{Gad, Hagiwara, Jiang, Jiangtwo, Wang}.
Errors occur in these devices when the electric cell charges 
storing information leak over time or there is an overshoot of charge 
level in the rewriting process.  For relatively small 
leak or overshoot errors the Kendtall-$\tau$ metric is 
appropriate. However, it may not be well-suited for large errors 
within a single cell. 

In 2013, Farnoud et al. proposed permutation codes 
using the Ulam metric 
 \cite{Farnoud}. They showed that the use of the Ulam metric 
 would allow a large leakage or overshoot 
error within a single cell to be viewed as a single error. 
Subsequent papers expounded on the use of Ulam metric 
in multipermutation codes and bounds on the size of permutation codes 
in the Ulam metric \cite{Farnoudtwo, Gologlu}. 
Meanwhile, Buzaglo et al. discovered the existence 
of a perfect permutation code under the cyclic Kendall-$\tau$
metric, and proved the non-existence of perfect permutation 
codes under the Kendall-$\tau$ metric for certain parameters
\cite{Buzaglo}. 
However, the possibility of perfect permutation codes in the 
Ulam metric had not previously 
been considered. Exploring this possibility requires first
understanding the sizes of Ulam permutation spheres, 
of which only limited research exists. Even
less is known about the size of multipermutation Ulam spheres.

In this paper we consider four main questions. 
Their answers are the main contributions of this paper. 
The first question is: How can permutation Ulam sphere 
sizes be calculated? One answer to this question 
is to use Young Tableaux and the RSK-Correspondence 
(Theorem \ref{klambda}). 
The second question is: Do perfect Ulam permutation 
codes exists? The answer to this question is that 
nontrivial perfect Ulam permutation codes do not exist
(Theorem \ref{nonexistence}). 
These two questions are closely related to each other 
since perfect Ulam permutation code sizes are 
characterized by Ulam sphere sizes. 
These main results are summarized in Tables 
\ref{permspheresizes} and \ref{permcodelimits} 
on the following page. 
Notation appearing on the tables is defined in subsequent sections.

The discussion is then extended to multipermutation codes, 
where we consider the third question. 
The third question is: 
How can multipermutation Ulam sphere sizes be calculated? 
Theorem \ref{klambda} and Theorem \ref{ballcalc} 
show how to calculate sphere sizes 
for certain parameters.
Finally, the fourth question is: 
What is the maximum possible Ulam multipermutation code size? 
Lemmas \ref{upperbound}, \ref{perfect bound}, and 
\ref{G-V bound} 
(as well as Lemmas \ref{bound1} and \ref{bound2} for the
special binary case) provide
new upper and lower bounds on the maximal code size. 
These main results are summarized in Tables 
\ref{multipermspheresizes} and \ref{multipermcodesizes}.
Notation appearing on the tables is defined in subsequent sections.

\begin{table}[h!]
\caption{(New Results 1) Permutation Ulam sphere sizes }
\label{permspheresizes}
\begin{small}
\begin{center}
\begin{TAB}(r,3cm,.8cm)[5pt]{|c|c|}{|c|c|c|}
\\
\textbf{Permutation Ulam Sphere Size Formulas} & \textbf{Reference}
\\
$|S(\sigma,t)|\;=\;|S(e,t)|\;=\;\underset{\lambda\in\Lambda}{\sum}(f^{\lambda})^{2}$ & Theorem \ref{klambda} \\
$|S(\sigma,1)|\;=\;1+(n-1)^{2}$ & Proposition \ref{one sphere} \\
\end{TAB}
\end{center} 
\end{small}
\end{table}

\begin{table}[h!]
\begin{small}
\caption{(New Results 2) Theoretical limit on maximum Ulam permutation code size}
\label{permcodelimits}
\begin{center}
\begin{TAB}(r,3cm,.8cm)[5pt]{|c|c|}{|c|c|}
\\
\textbf{Theorem on perfect Ulam permutation codes} & \textbf{Reference}
\\
Nontrivial perfect $t$-error correcting permutation codes do not exist & Theorem \ref{nonexistence} \\
\end{TAB}
\end{center} 
\end{small}
\end{table}

\begin{table}[h!]
\caption{(New Results 3) Multipermutation Ulam Sphere Sizes}
\label{multipermspheresizes}
 \begin{small}
 \begin{center}
\begin{TAB}(r,3cm,.85cm)[3pt]{|c|c|}{|c|c|c|c|c|c|}
\\
\textbf{ Multipermutation Ulam Sphere Size Formulas and Bounds} & \textbf{Reference} \\
$|S(\mathbf{m}_{e}^{r},t)|\;=\;\underset{\lambda\in\Lambda}{\sum(f^{\lambda})(K_{r}^{\lambda})}$ & Theorem \ref{klambda} \\
$|S(\mathbf{m}_{\sigma}^{r},1)|\;=\;1+(n-1)^{2}
-|SD(\mathbf{m}_{\sigma}^{r})|-|AD(\mathbf{m}_{\sigma}^{r})|$ & Theorem \ref{ballcalc} \\
$1+(n-1)(n/r-1)
\;=\;
|S(\mathbf{m}_{e}^r,1)|
\;\le\;
|S(\mathbf{m}_{\sigma}^{r},1)| $ 
& Lemma \ref{esphere} and Theorem \ref{klambda}\\
\textbf{Non-Binary Case: }
$ |S(\mathbf{m}_\sigma^r,1)|
\;\le\;
|S(\mathbf{m}_\omega^r,1)| 
\;=\;
1+(n-1)^{2}-(r-1)n$ 
& Lemma \ref{omegasphere} \\
\textbf{Binary Case: } 
$|S(\mathbf{m}_{\sigma}^{r},1)|\;<\; U(r) $
& Corollary \ref{spherebound}
\end{TAB}
\end{center}
\end{small}
\end{table}

\begin{table}[h!]
\caption{(New Results 4) Theoretical limits on maximum Ulam multipermutation code size}
\label{multipermcodesizes}
\begin{small}
\begin{center}
\begin{TAB}(r,3cm,.85cm)[3pt]{|c|c|c|}{|c|c|c|c|c|c|}
\\
\textbf{Ulam Multipermutation Code Max Size Bounds} & 
\textbf{Value}& 
\textbf{Reference}
\\
1-error correcting code upper bound
& $|C|  \;\;\le\;\; \frac{n!}{(r!)^{n/r}\left(1+(n-1)(n/r-1)\right)}$
& Lemma \ref{upperbound} \\
\textbf{Non-Binary Case: } Perfect 1-error correcting code lower bound
& $\frac{n!}{(r!)^{n/r}  ((1+(n-1)^2) - (r-1)n)} \;\;\le\;\; |C|$
& Lemma \ref{perfect bound} \\
\textbf{Binary Case: } Perfect 1-error correcting code lower bound
& $ \frac{n!}{(r!)^2(U(r))} \;\;\le\;\; |C|$
&Lemma \ref{bound1}\\
\textbf{Non-Binary Case: } MPC$_\circ(n,r,d)$ lower bound
& $ \frac{n!}{(r!)^{n/r} (1 + (n-1)^2 - (r-1)n )^{d-1}  } 
\;\;\le\;\; |C|$
& Lemma \ref{G-V bound} \\
\textbf{Binary Case: } 
MPC$_\circ(n,r,d)$ lower bound
& $\frac{n!}{(r!)^2(U(r))^{d-1}} \;\;\le\;\; |C|$
&Lemma \ref{bound2} \\
\end{TAB}
\end{center} 
\end{small}
\end{table}

The organization is as follows: 
Section \ref{prelims} defines notation and basic concepts 
used throughout the paper.
Sections \ref{sphere size} and \ref{perfect codes} 
focus primarily on permutations, although many 
results apply to multipermutations as well. 
Section \ref{sphere size} introduces 
a method of calculating Ulam sphere sizes using Young tableaux 
and the RSK-Correspondence.  
Section \ref{perfect codes} 
focuses on proving the non-existence 
of nontrivial perfect permutation codes. 

The remaining sections focus on multipermutations. 
Section \ref{upper} discusses how to calculate $r$-regular 
multipermutation Ulam sphere sizes. 
Section \ref{minmax} discusses minimum and maximum 
sphere sizes and provides new upper and lower bounds 
on maximal multipermutation code size. Included in the last
two subsections of Section \ref{minmax} is an explanation of how 
to determine the maximum sphere size in the binary case, 
which presents unique challenges. 
Finally Section \ref{conclusion} gives concluding remarks.

\section{Preliminaries and Notation}\label{prelims}
In this paper we utilize the following notation and definitions, 
generally following conventions established in \cite{Farnoud} and \cite{Farnoudtwo}. 
Throughout the paper we will assume that $n$ and $r$ are  
positive integers, with $r$ dividing $n$. The symbol 
$[n]$ denotes the set of integers $ \{1, 2, \dots, n\}$. 
The symbol $\mathbb{S}_n$ 
stands for the set of permutations (automorphisms) on $[n]$, i.e., the 
symmetric group of order $n!$.  For a permutation $\sigma \in \mathbb{S}_n,$ we
use the notation $\sigma = [\sigma(1), \sigma(2), \dots, \sigma(n)]$, where 
for all $i \in [n],$ $\sigma(i)$ is the image of $i$ under $\sigma.$  
With some abuse of notation, we may also use $\sigma$ to refer 
to the sequence $(\sigma(1), \sigma(2), \dots, \sigma(n)) \in [n]^n$. 
Given two permutations $\sigma, \pi \in \mathbb{S}_n$, the product 
$\sigma \pi$ is defined by $(\sigma \pi) (i) = \sigma(\pi(i))$. 
In other words, we define multiplication of permutations by composition, e.g.,
$[2,1,5,4,3][5,1,4,2,3] = [3, 2, 4, 1, 5].$
The identity permutation $[1,2, \dots, n] \in \mathbb{S}_n$ is denoted by $e$.

An $r$-{regular multiset} is a multiset such that 
each of its element appears exactly $r$ times 
(i.e., each element is repeated $r$ times).  
For example, 
$\{1,1,2,2,3,3\}$ is a $2$-regular multiset.  
A \textbf{multipermutation} is an ordered tuple whose 
entries exactly correspond to the elements of a multiset, 
and in the instance of an $r$-regular multiset, we call the 
multipermutation an \textbf{$r$-regular multipermutation}.  
For example, $(3, 2, 2, 1, 3, 1)\in [3]^6$ is a $2$-regular multipermutation 
of $\{1,1,2,2,3,3\}$.
Following the work of \cite{Farnoudtwo}, and because 
$r$-regular multipermutations result in the largest 
potential code space \cite{Gad}, in this paper we only consider 
$r$-regular multipermutations. Hence for the remainder of this paper 
``multipermutation" will always refer to an $r$-regular 
multipermutation. 

\begin{definition}[$\mathbf{m}_\sigma^r$]
Given 
$\sigma \in \mathbb{S}_n$ we 
define a corresponding $r$-regular 
multipermutation $\mathbf{m}_\sigma^r$
as follows: 
for all $i \in [n]$ and $j \in [n/r],$ 
\[\mathbf{m}_\sigma^r(i):= j \text{ if and only if } (j-1)r + 1 \le \sigma(i) \le jr,\]
and $\mathbf{m}_\sigma^r :=
(\mathbf{m}_\sigma^r(1), \mathbf{m}_\sigma^r(2), \dots, \mathbf{m}_\sigma^r(n)) \in 
[n/r]^n$.
\end{definition}

As an example of $\mathbf{m}_\sigma^r$, let $n = 6$, 
$r = 2$, and $\sigma = [1,5,2,4,3,6].$ Then 
$\mathbf{m}_\sigma^r = (1,3,1,2,2,3).$ 
Note that this definition differs slightly from the 
correspondence defined in \cite{Farnoudtwo}, which was 
defined in terms of the inverse permutation.  This is so that 
certain properties (Remarks \ref{n-l} and \ref{translocations})
of the Ulam metric for permutations (the case when $r = 1$) will also
hold for general multipermutations. 
Notice that $\mathbf{m}^1_\sigma = (\sigma(1), \dots \sigma(n)) \in [n]^n$,  
so based on our abuse of notation described in the first paragraph of 
this section, we may denote 
$\mathbf{m}^1_\sigma$ simply by $\sigma$. 
In other words, whenever $r=1$, $r$-regular multipermutations 
reduce to permutations, or more accurately their associated 
sequences. 

With the correspondence above, we may define 
an equivalence relation between elements of $\mathbb{S}_n$. 
For permutations $\sigma, \pi \in \mathbb{S}_n$, 
we say that $\sigma \equiv_r \pi$ if and only if 
$\mathbf{m}_\sigma^r = \mathbf{m}_\pi^r$. 
The equivalence class $R_r(\sigma)$ of $\sigma \in \mathbb{S}_n$ 
is defined by $R_r(\sigma) := \{\pi \in \mathbb{S}_n \;:\; \pi \equiv_r \sigma\}.$ 
Note that if $r=1$, then $R_r(\sigma)$ is simply the singleton $\{\sigma\}$. 
For a subset $S \subseteq \mathbb{S}_n,$ define 
$\mathcal{M}_r(S) := \{\mathbf{m}_\sigma^r \; : \; \sigma \in S\},$
i.e. the set of $r$-regular multipermutations corresponding
to elements of $S$. 
When $r=1$, we may identify $\mathcal{M}_r(S)$ simply by $S$.

We next define the $r$-regular Ulam distance. 
For the definition, it is first necessary to define 
$\ell(\mathbf{x}, \mathbf{y})$. 
Given sequences $\mathbf{x}, \mathbf{y} \in \mathbb{Z}^n$, 
then $\ell(\mathbf{x}, \mathbf{y})$ denotes 
 the length of the longest common 
subsequence of $\mathbf{x}$ and $\mathbf{y}$ 
(not to be confused with the longest common substring). 
More precisely, $\ell(\mathbf{x}, \mathbf{y})$ is the largest integer 
$k \in \mathbb{Z}_{>0}$ 
such that there exists a sequence $(a_1, a_2, \dots, a_k)$ 
where for all $l \in [k]$, we have $a_l = \mathbf{x}(i_l) = \mathbf{y}(j_l)$ 
with  $1\le i_1 < i_2 < \dots < i_k \le n$ and $1 \le j_1 < j_2 < \dots < j_k\le n$. 
For example, $\ell((3,1,2,1,2,3), (1,1,2,2,3,3)) = 4$, since 
$(1,1,2,3)$ is a common subsequence of both $(3,1,2,1,2,3)$ and 
$(1,1,2,2,3,3)$ and its length is $4$. It does not 
matter that other equally long common subsequences exist (e.g. $(1,2,2,3)$), 
as long as there do not exist any longer common subsequences. 
If $\sigma \in \mathbb{S}_n$, then $\ell(\sigma,e)$ is the 
length of the longest increasing subsequence of $\sigma$, which we denote 
simply by $\ell(\sigma)$. Similarly, for an $r$-regular multipermutation 
$\mathbf{m}_\sigma^r$, we denote the length of the longest non-decreasing 
subsequence $\ell(\mathbf{m}_\sigma^r,\mathbf{m}_e^r)$ of 
$\mathbf{m}_\sigma^r$ simply by $\ell(\mathbf{m}_\sigma^r)$.


 
\begin{definition}[$\mathrm{d}_\circ(\mathbf{m}_\sigma^r,\mathbf{m}_\pi^r)$, $r$-regular Ulam distance]
Let 
$\mathbf{m}_\sigma^r, \mathbf{m}_\pi^r \in 
\mathcal{M}_r(\mathbb{S}_n)$. Define 
\[
\mathrm{d}_\circ(\mathbf{m}^r_\sigma, \mathbf{m}^r_\pi) := 
\underset{\sigma' \in R_r(\sigma), \pi' \in R_r(\pi) }{\min}
\mathrm{d}_\circ(\sigma', \pi'),
\]
where 
$
\mathrm{d}_\circ(\sigma,\pi) := n - \ell(\sigma,\pi).
$
We call $\mathrm{d}_\circ(\mathbf{m}_\sigma^r,\mathbf{m}_\pi^r)$ the 
\textbf{$r$-regular Ulam distance} 
between $\mathbf{m}_\sigma^r$ and $\mathbf{m}_\pi^r$. 
In the case when $r=1$, we may simply say the 
\textbf{Ulam distance} between $\sigma$ and $\pi$ and 
use the notation $\mathrm{d}_\circ(\sigma,\pi)$.

\end{definition}

The definition of $r$-regular Ulam distance above 
follows the convention of \cite{Farnoudtwo}, 
defining the distance in terms of equivalence 
classes comprised of permutations, although our notation differs. 
However, it is convenient to think of the distance instead in 
terms of the multipermutations themselves.  
A simple argument shows that the $r$-regular Ulam 
distance between multipermutations 
$\mathbf{m}_\sigma^r$ and $\mathbf{m}_\pi^r$ 
is equal to $n$ minus the length of their longest 
common subsequence. The details of the argument 
can be found in the appendices. 

\begin{remark}\label{n-l}
Let 
 $\mathbf{m}_\sigma^r, \mathbf{m}_\pi^r \in 
 \mathcal{M}_r(\mathbb{S}_n)$. Then 
\begin{align*}
\mathrm{d}_\circ(\mathbf{m}_\sigma^r,\mathbf{m}_\pi^r) =
n - \ell(\mathbf{m}_\sigma^r, \mathbf{m}_\pi^r).
\end{align*} 
\end{remark}

Viewed this way, 
it is easily verified that the $r$-regular Ulam distance 
$\mathrm{d}_\circ(\mathbf{m}_\sigma^r,\mathbf{m}_\pi^r)$ is a proper 
metric between the multipermutations 
$\mathbf{m}_\sigma^r$ and $\mathbf{m}_\pi^r$. 
Additionally, it is known that in the permutation case, the 
case when $r=1$, that the Ulam distance can be characterized 
in terms of a specific type of permutation known as 
 translocations \cite{Farnoud, Gologlu}. 
 We can show a similar relationship for multipermutations. 
 We define translocations 
 below and then give the relationship between the 
 Ulam distance and translocations. 

\begin{definition}[$\phi(i,j)$, translocation]
Given 
distinct $i,j\in[n],$ 
define $\phi(i,j) \in \mathbb{S}_n$  as follows: 


\[
\phi (i,j) :=
\begin{cases}
  [1,2,\dots i-1,i+1,i+2, \dots, j, i, j+1,\dots, n]  
~ ~ ~ \text{if } i < j  \\
  [1,2,\dots j-1, i, j, j+1, \dots, i-1, i+1, \dots, n]   
\hfill  \text{if } i > j
  \end{cases}
  \]

If $i = j,$ then define $\phi(i,j) := e$. 
We refer to $\phi(i,j)$ as a \textbf{translocation}, and if we do not 
specify the indexes $i$ and $j$ we may denote a translocation 
simply by $\phi$.  
\end{definition}

Intuitively, a translocation is the permutation that results in a 
delete/insertion operation.  
More specifically, given $\sigma \in \mathbb{S}_n$ 
and the translocation $\phi(i,j) \in \mathbb{S}_n,$ 
the product $\sigma \phi(i,j)$ is the result of 
deleting $\sigma(i)$ from the $i$th position of $\sigma$,
then shifting all positions between the $i$th and $j$th position  
by one (left if $i < j$ and right if $i > j$), and finally 
reinserting $\sigma(i)$ into the new $j$th position.  
The top half of Figure \ref{translocation} illustrates the permutation 
$\sigma = [6,2,8,5,4,1,3,9,7]$ 
(or its related $3$-regular multipermutation 
$\mathbf{m}_\sigma^3 = (2,1,3,2,2,1,1,3,3)$) 
represented physically by relative cell charge levels 
and the effect of multiplying $\sigma$ (or $\mathbf{m}_\sigma^3$) on the right  
by the translocation $\phi(1,9)$. 
The bottom half of Figure \ref{translocation} illustrates the same 
$\sigma$ (or $\mathbf{m}_\sigma^3$) and the effect of $\phi(7,4)$.  
Notice that multiplying by $\phi(1,9)$ corresponds 
to the error that occurs when the highest (1st) ranked cell 
suffers charge leakage that results in it being the lowest (9th) ranked cell. 
Multiplying by $\phi(7,4)$ corresponds to the error that occurs when 
the 7th highest cell is overfilled so that it is the 4th highest cell.

It is well-known that $\mathrm{d}_\circ(\sigma,\pi)$ equals 
the minimum number of translocations needed to transform 
$\sigma$ into $\pi$ \cite{Farnoud, Gologlu}.  That is,  
$\mathrm{d}_\circ(\sigma,\pi) = \min \{k \in \mathbb{Z}_{\ge 0} \; : \;
\text{ there exists } \phi_1, \phi_2, \dots, \phi_k$ such that 
$\sigma \phi_1 \phi_2 \cdots \phi_k = \pi\}$.  
By applying Remark \ref{n-l}, it is also a simple matter 
to prove that an analogous relationship holds for 
the $r$-regular Ulam distance. First, it is necessary to 
define multiplication between multipermutations and 
permutations. 

 The following definition is our own. We define 
the product $\mathbf{m}_\sigma^r \cdot \pi$ 
as $\mathbf{m}_\sigma^r \cdot \pi := 
\mathbf{m}^r_{\sigma \pi}$.  
Technically speaking, this can be seen a right group action of the set 
$\mathbb{S}_n$ of permutations on the set $\mathcal{M}_r(\mathbb{S}_n)$.
Since it is possible for different permutations to 
correspond to the same multipermutation, 
we should clarify that 
 $\mathbf{m}_\sigma^r = \mathbf{m}_\tau^r$ 
implies 
$\mathbf{m}_{\sigma \pi}^r = \mathbf{m}_{\tau \pi}^r.$ 
Indeed this is true because if 
 $\mathbf{m}_\sigma^r = \mathbf{m}_\tau^r$ then 
for all $i \in [n]$ we have 
$\mathbf{m}_\sigma^r(i) = \mathbf{m}_\tau^r(i)$, which implies 
for $j := \mathbf{m}_\sigma^r(i) $ that 
$(j-1)r + 1 \le \sigma(i) \le jr$ and $(j-1)r +1 \le \tau(i) \le jr$. 
This in turn implies that 
$(j-1)r + 1 \le \sigma\pi(\pi(i)) \le jr$ and 
$(j-1)r + 1 \le \tau\pi(\pi(i)) \le jr$, which means 
$\mathbf{m}_{\sigma\pi}(\pi(i)) = \mathbf{m}_{\tau\pi}(\pi(i))$. 
Intuitively speaking, the same corresponding elements of the 
sequences $\sigma$ and $\tau$ still correspond (with a different index) 
after being multiplied on the right by $\pi.$ 
Hence $\mathbf{m}_{\sigma\pi}^r = \mathbf{m}_{\tau\pi}^r$, 
or by our notation $\mathbf{m}_\sigma^r \cdot \pi 
= \mathbf{m}_\tau^r\cdot \pi.$

 If two multipermutations $\mathbf{m}_\sigma^r$ 
and $\mathbf{m}_\pi^r$ have a common subsequence of 
length $k$, then $\mathbf{m}_\sigma^r$ can 
be transformed into $\mathbf{m}_\pi^r$ with $n-k$ (but 
no fewer) delete/insert operations.  As with permutations, 
delete/insert operations correspond to applying 
(multiplying on the right) a translocation.  
Hence by Remark \ref{n-l}  we can 
state the following remark about the $r$-regular 
Ulam distance.  The details of the proof
can be found in the appendices. 

\begin{remark}\label{translocations}
Let 
 $\mathbf{m}_\sigma^r, \mathbf{m}_\pi^r \in 
 \mathcal{M}_r(\mathbb{S}_n)$. Then 
\[
\mathrm{d}_\circ(\mathbf{m}_\sigma^r,\mathbf{m}_\pi^r) = 
\min \{ k \in \mathbb{Z}_{\ge0} \;:\; \text{ there exists } \; (\phi_1, \phi_2, \dots, \phi_k)\; 
\text{ such that }
\; \mathbf{m}_\sigma^r \cdot \phi_1  \phi_2 \cdots \phi_k = \mathbf{m}_\pi^r \}.
\]
\end{remark}


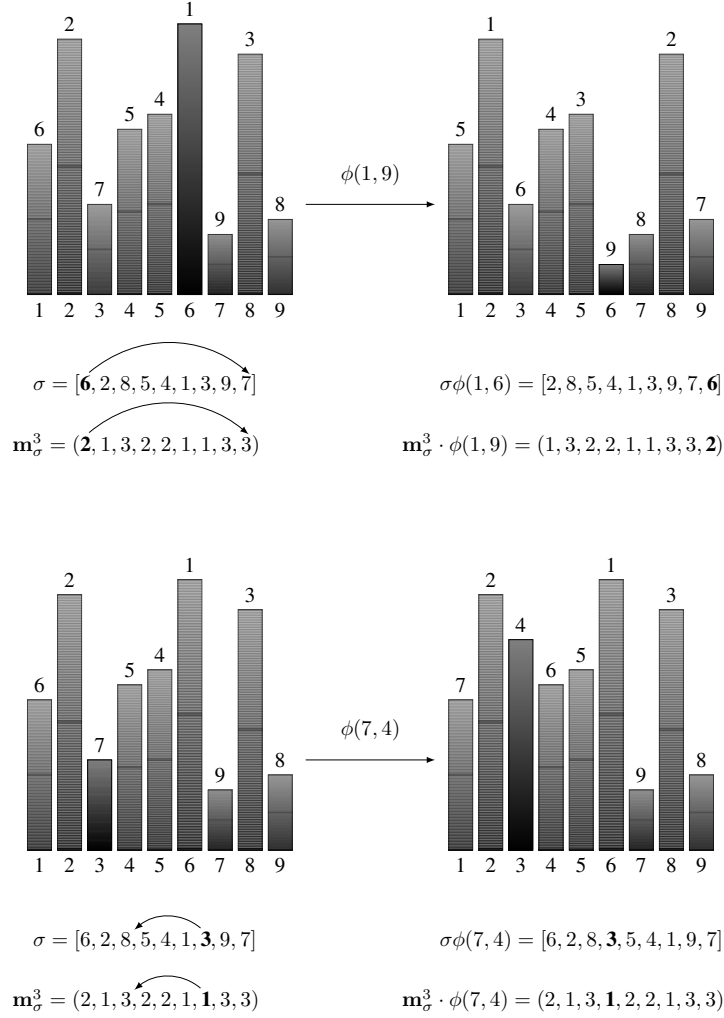
\begin{figure}[h]\caption{Translocation illustration}\label{translocation}
\[
\scalebox{.8}{

	\begin{tikzpicture}[node distance=1cm]
\node [draw, shade, bottom color = black, opacity = 0.6,
          minimum width=.4cm, minimum height=2.5cm] (a1) at (0,-4.75){};
\node [draw, shade,  bottom color = black, opacity = 0.6,
           minimum width=.4cm, minimum height=4.25cm] (b1) at (.5,-(3 + 7/8){};%
\node [draw, shade,  bottom color = black, opacity = 0.6,
           minimum width=.4cm, minimum height=1.5cm] (c1) at (1,-5.25) {};
\node [draw, shade,  bottom color = black, opacity = 0.6,
           minimum width=.4cm, minimum height=2.75cm] (d1) at (1.5,-4-5/8){};
\node [draw, shade,  bottom color = black, opacity = 0.6,
            minimum width=.4cm, minimum height=3.00cm] (e1) at (2,-4.5){};
\node [draw, shade,  bottom color = black, opacity = 1,
           minimum width=.4cm, minimum height=4.5cm] (f1) at (2.5,-3.75) {};
\node [draw, shade,  bottom color = black, opacity = 0.6,
           minimum width=.4cm, minimum height=1cm] (g1) at (3, -5.5) {};
\node [draw, shade, bottom color = black, opacity = 0.6,
           minimum width=.4cm, minimum height=4cm] (h1) at (3.5, -4) {};
\node [draw, shade,  bottom color = black, opacity = 0.6,
           minimum width=.4cm, minimum height=1.25cm] (i1) at (4, -(5+3/8) {};

\node (Ha1) at (0,-3.25){6};
\node (Hb1) at (.5,-1.5) {2};%
\node (Hc1) at (1, -4.25) {7};
\node (Hd1) at (1.5, -3) {5};
\node (He1) at (2, -2.75){4};
\node (Hf1) at (2.5, -1.25) {1};
\node (Hg1) at (3, -4.75) {9};
\node (Hh1) at (3.5, -1.75) {3};
\node (Hi1) at (4, -4.5) {8};

\node (La1) at (0,-6.25){1};
\node (Lb1) at (.5, -6.25){2};
\node (Lc1) at (1,-6.25) {3};
\node (Ld1) at (1.5, -6.25) {4};
\node (Le1) at (2, -6.25) {5};
\node (Lf1) at (2.5, -6.25) {6};
\node (Lg1) at (3, -6.25) {7};
\node (Lh1) at (3.5, -6.25) {8};
\node (Li1) at (4,-6.25) {9};

\node (sigma) at (1.75, -7.5) {$\sigma = [\textbf{6},2,8,5,4,1,3,9,7]$};
\node (msigma) at (1.6,-8.5) 
                      {$\mathbf{m}_\sigma^3 = (\textbf{2},1,3,2,2,1,1,3,3)$};

\node (f1) at (4.3,-4.5) {};
\node (f2) at (6.7, -4.5) {};
\node (f3) at (5.5, -4){\textcolor{black}{$\phi(1,9)$}};
\draw [-latex] (f1) to (f2);

\node [draw, shade,  bottom color = black, opacity = 0.6,
           minimum width=.4cm, minimum height=2.5cm] (a1) at (7,-4.75){};
\node [draw, shade,  bottom color = black, opacity = 0.6,
           minimum width=.4cm, minimum height=4.25cm] (b1) at (7.5,-(3 + 7/8){};%
\node [draw, shade,  bottom color = black, opacity = 0.6,
           minimum width=.4cm, minimum height=1.5cm] (c1) at (8,-5.25) {};
\node [draw, shade,  bottom color = black, opacity = 0.6,
          minimum width=.4cm, minimum height=2.75cm] (d1) at (8.5,-4-5/8){};
\node [draw, shade,  bottom color = black, opacity = 0.6,
          minimum width=.4cm, minimum height=3.00cm] (e1) at (9,-4.5){};
\node [draw, shade, bottom color = black, opacity = 1,  
           minimum width=.4cm, minimum height=.5cm] (e1) at (9.5,-5.75) {};
\node [draw, shade,  bottom color = black, opacity = 0.6,
           minimum width=.4cm, minimum height=1cm] (g1) at (10, -5.5) {};
\node [draw, shade,  bottom color = black, opacity = 0.6,
           minimum width=.4cm, minimum height=4cm] (h1) at (10.5, -4) {};
\node [draw, shade,  bottom color = black, opacity = 0.6,
           minimum width=.4cm, minimum height=1.25cm] (i1) at (11, -(5+3/8) {};

\node (Ha1) at (7,-3.25){5};
\node (Hb1) at (7.5,-1.5) {1};%
\node (Hc1) at (8, -4.25) {6};
\node (Hd1) at (8.5, -3) {4};
\node (He1) at (9, -2.75){3};
\node (Hf1) at (9.5, -5.25) {9};
\node (Hg1) at (10, -4.75) {8};
\node (Hh1) at (10.5, -1.75) {2};
\node (Hi1) at (11, -4.5) {7};

\node (La1) at (7,-6.25){1};
\node (Lb1) at (7.5, -6.25){2};
\node (Lc1) at (8,-6.25) {3};
\node (Ld1) at (8.5, -6.25) {4};
\node (Le1) at (9, -6.25) {5};
\node (Lf1) at (9.5, -6.25) {6};
\node (Lg1) at (10, -6.25) {7};
\node (Lh1) at (10.5, -6.25) {8};
\node (Li1) at (11,-6.25) {9};

\node (sigmaphi) at (9, -7.5) 
{$\sigma \textcolor{black}{\phi(1,6)} = [2,8,5,4,1,3,9,7,\textbf{6}] $};
\node (msigmaphi) at (8.7,-8.5) {$\mathbf{m}_\sigma^3 \cdot
\textcolor{black}{\phi(1,9)} 
 = (1,3,2,2,1,1,3,3,\textbf{2})$};

\node (A) at (.7,-7.4) {};
\node (B) at (3.62, -7.43) {}; 
\draw [-latex, color = black] (A) to [out=40,in=140] (B);

\node (C) at (.7,-8.4) {};
\node (D) at (3.6, -8.43) {}; 
\draw [-latex, color = black] (C) to [out=40,in=140] (D);

\end{tikzpicture}

}
\]
\vspace{.5cm}
\[
\scalebox{0.8}{

	\begin{tikzpicture}[node distance=1cm]
\node [draw, shade, bottom color = black, opacity = 0.6,
          minimum width=.4cm, minimum height=2.5cm] (a1) at (0,-4.75){};
\node [draw, shade,  bottom color = black, opacity = 0.6,
           minimum width=.4cm, minimum height=4.25cm] (b1) at (.5,-(3 + 7/8){};%
\node [draw, shade,  bottom color = black, opacity = 1,
           minimum width=.4cm, minimum height=1.5cm] (c1) at (1,-5.25) {};
\node [draw, shade,  bottom color = black, opacity = 0.6,
           minimum width=.4cm, minimum height=2.75cm] (d1) at (1.5,-4-5/8){};
\node [draw, shade,  bottom color = black, opacity = 0.6,
            minimum width=.4cm, minimum height=3.00cm] (e1) at (2,-4.5){};
\node [draw, shade,  bottom color = black, opacity = 0.6,
           minimum width=.4cm, minimum height=4.5cm] (f1) at (2.5,-3.75) {};
\node [draw, shade,  bottom color = black, opacity = 0.6,
           minimum width=.4cm, minimum height=1cm] (g1) at (3, -5.5) {};
\node [draw, shade, bottom color = black, opacity = 0.6,
           minimum width=.4cm, minimum height=4cm] (h1) at (3.5, -4) {};
\node [draw, shade,  bottom color = black, opacity = 0.6,
           minimum width=.4cm, minimum height=1.25cm] (i1) at (4, -(5+3/8) {};

\node (Ha1) at (0,-3.25){6};
\node (Hb1) at (.5,-1.5) {2};%
\node (Hc1) at (1, -4.25) {7};
\node (Hd1) at (1.5, -3) {5};
\node (He1) at (2, -2.75){4};
\node (Hf1) at (2.5, -1.25) {1};
\node (Hg1) at (3, -4.75) {9};
\node (Hh1) at (3.5, -1.75) {3};
\node (Hi1) at (4, -4.5) {8};

\node (La1) at (0,-6.25){1};
\node (Lb1) at (.5, -6.25){2};
\node (Lc1) at (1,-6.25) {3};
\node (Ld1) at (1.5, -6.25) {4};
\node (Le1) at (2, -6.25) {5};
\node (Lf1) at (2.5, -6.25) {6};
\node (Lg1) at (3, -6.25) {7};
\node (Lh1) at (3.5, -6.25) {8};
\node (Li1) at (4,-6.25) {9};

\node (sigma) at (1.75, -7.5) {$\sigma = [6,2,8,5,4,1,\textbf{3},9,7]$};
\node (msigma) at (1.6,-8.5) 
                      {$\mathbf{m}_\sigma^3 = (2,1,3,2,2,1,\textbf{1},3,3)$};

\node (f1) at (4.3,-4.5) {};
\node (f2) at (6.7, -4.5) {};
\node (f3) at (5.5, -4){\textcolor{black}{$\phi(7,4)$}};
\draw [-latex] (f1) to (f2);

\node [draw, shade,  bottom color = black, opacity = 0.6,
           minimum width=.4cm, minimum height=2.5cm] (a1) at (7,-4.75){};
\node [draw, shade,  bottom color = black, opacity = 0.6,
           minimum width=.4cm, minimum height=4.25cm] (b1) at (7.5,-(3 + 7/8){};%
\node [draw, shade,  bottom color = black, opacity = 1,
           minimum width=.4cm, minimum height=3.5cm] (c1) at (8,-4.25) {};
\node [draw, shade,  bottom color = black, opacity = 0.6,
          minimum width=.4cm, minimum height=2.75cm] (d1) at (8.5,-4-5/8){};
\node [draw, shade,  bottom color = black, opacity = 0.6,
          minimum width=.4cm, minimum height=3.00cm] (e1) at (9,-4.5){};
\node [draw, shade,  bottom color = black, opacity = 0.6,
           minimum width=.4cm, minimum height=4.5cm] (f1) at (9.5,-3.75) {};
\node [draw, shade,  bottom color = black, opacity = 0.6,
           minimum width=.4cm, minimum height=1cm] (g1) at (10, -5.5) {};
\node [draw, shade,  bottom color = black, opacity = 0.6,
           minimum width=.4cm, minimum height=4cm] (h1) at (10.5, -4) {};
\node [draw, shade,  bottom color = black, opacity = 0.6,
           minimum width=.4cm, minimum height=1.25cm] (i1) at (11, -(5+3/8) {};

\node (Ha1) at (7,-3.25){7};
\node (Hb1) at (7.5,-1.5) {2};%
\node (Hc1) at (8, -2.25) {4};
\node (Hd1) at (8.5, -3) {6};
\node (He1) at (9, -2.75){5};
\node (Hf1) at (9.5, -1.25) {1};
\node (Hg1) at (10, -4.75) {9};
\node (Hh1) at (10.5, -1.75) {3};
\node (Hi1) at (11, -4.5) {8};

\node (La1) at (7,-6.25){1};
\node (Lb1) at (7.5, -6.25){2};
\node (Lc1) at (8,-6.25) {3};
\node (Ld1) at (8.5, -6.25) {4};
\node (Le1) at (9, -6.25) {5};
\node (Lf1) at (9.5, -6.25) {6};
\node (Lg1) at (10, -6.25) {7};
\node (Lh1) at (10.5, -6.25) {8};
\node (Li1) at (11,-6.25) {9};

\node (sigmaphi) at (9, -7.5) 
{$\sigma \textcolor{black}{\phi(7,4)} = [6,2,8,\textbf{3},5,4,1,9,7] $};
\node (msigmaphi) at (8.7,-8.5) {$\mathbf{m}_\sigma^3 \cdot
\textcolor{black}{\phi(7,4)} 
 = (2,1,3,\textbf{1},2,2,1,3,3)$};

\node (A) at (2.8,-7.4) {};
\node (B) at (1.45, -7.43) {}; 
\draw [-latex, color = black] (A) to [out=140,in=40] (B);

\node (C) at (2.8,-8.4) {};
\node (D) at (1.45, -8.43) {}; 
\draw [-latex, color = black] (C) to [out=140,in=40] (D);

\end{tikzpicture}   
	
	}
\]
\end{figure}

We now define the notions of a
multipermutation code and 
an $r$-regular Ulam sphere.

\begin{definition}
[$r$-regular multipermutation code, MPC$(n,r)$, MPC$(n,r,d)$]
Recall that $n,r \in \mathbb{Z}_{>0}$ with $r \vert n$.
 An \textbf{$r$-regular multipermutation code} 
 (or simply a \textbf{multipermutation code}) is a subset 
$C \subseteq \mathcal{M}_r(\mathbb{S}_n)$.
Such a code is denoted by MPC$(n,r)$, and 
we say that $C$ is an MPC$(n,r)$.  
If $C$ is an 
MPC$(n,r)$ such that 
$\underset{
\mathbf{m}_\sigma^r, \mathbf{m}_\pi^r \in C, 
\mathbf{m}_\sigma^r \ne \mathbf{m}_\pi^r}{\min} 
\mathrm{d}_\circ(\mathbf{m}_\sigma^r,\mathbf{m}_\pi^r)=d$, 
then we call $C$ an MPC$_\circ(n,r,d)$.
We refer to any $1$-regular multipermutation code 
simply as a \textbf{permutation code}. 
\end{definition}

Our definition of multipermutation codes is in terms 
of multipermutations, i.e. ordered tuples, rather than in 
terms of permutations, i.e. automorphisms. 
This differs slightly from \cite{Farnoudtwo}, 
where multipermutation codes were defined as subsets of 
$\mathbb{S}_n$ with the requirement that the entire 
equivalence class of each element in a code 
was a subset of the code. 
Next, we define $r$-regular Ulam spheres.

\begin{definition}[$S(\mathbf{m}_\sigma^r,t)$, 
$r$-regular multipermutation Ulam sphere] 
Let 
$t \in \mathbb{Z}_{\ge0}$, 
and $\mathbf{m}_\sigma^r \in \mathcal{M}_r(\mathbb{S}_n).$ Define 
\[
S(\mathbf{m}_\sigma^r, t) := \{
\mathbf{m}_\pi^r \in \mathcal{M}_r(\mathbb{S}_n) 
\;:\; \mathrm{d}_\circ(\mathbf{m}_\sigma^r,\mathbf{m}_\pi^r) \le t\}
\]

We call $S(\mathbf{m}_\sigma^r, t)$ 
the \textbf{$r$-regular multipermutation Ulam sphere}, 
(or simply the \textbf{multipermutation Ulam sphere})
centered at $\mathbf{m}_\sigma^r$ of radius $t$.
We refer to any $1$-regular multipermutation Ulam sphere
as a \textbf{permutation Ulam sphere} and use the simplified notation 
$S(\sigma,t)$ instead of 
$S(\mathbf{m}_\sigma^r,t)$.
\end{definition}

By Remark \ref{n-l}, 
$S(\mathbf{m}_\sigma^r,t) = 
\{\mathbf{m}_\pi^r
 \in \mathcal{M}_r(\mathbb{S}_n) \;:\; 
 n - \ell(\mathbf{m}_\sigma^r, \mathbf{m}_\pi^r)  \le t\}$. 
The $r$-regular Ulam sphere definition can also be viewed 
in terms of translocations. Remark \ref{translocations} implies 
that $S(\mathbf{m}_\sigma^r, t)$ is equivalent to 
 $\{\mathbf{m}_\pi^r 
 \in \mathcal{M}_r(\mathbb{S}_n) \;:\; 
\text{ there exists } k \in \{0,1,\dots,t\} \text{ and } (\phi_1, \dots, \phi_k) \text{ such that } 
\mathbf{m}_\sigma^r \cdot \phi_1 \cdots \phi_k  = \mathbf{m}_\pi^r\}$.  
This is the set of all multipermutations reachable by applying $t$ 
translocations to the center multipermutation $\mathbf{m}_\sigma^r$.

It is well-known that an 
MPC$_\circ(n,r,d)$ code is $t$-error 
correcting if and only if $d \ge 2t+1$ 
\cite{Farnoudtwo}. 
This is because if the distance between two codewords is 
greater or equal to $2t + 1$, then after $t$ or fewer errors 
(multiplication by $t$ or fewer translocations), the resulting 
multipermutation remains closer to the original multipermutation 
than any other multipermutation.
We finish this section by defining perfect 
$t$-error correcting codes. 

\begin{definition}[perfect code] 
Let $C \subseteq  \mathcal{M}_r(\mathbb{S}_n)$ be an MPC$(n,r)$. 
Then $C$ is a perfect $t$-error 
correcting code if and only if for all 
$\mathbf{m}_\sigma^r \in \mathcal{M}_r(\mathbb{S}_n),$ there exists a unique 
$\mathbf{m}_c^r \in \mathcal{M}_r (C)$ 
such that $\mathbf{m}_\sigma^r \in S(\mathbf{m}_c^r, t).$ 
We call such $C$ a \textbf{perfect $t$-error correcting } 
$\mathsf{MPC}(n,r)$, or simply a 
\textbf{perfect code} if the context is clear.  
A permutation code that is perfect is called 
a \textbf{perfect permutation code}. 
\end{definition} 

A perfect MPC$(n,r)$ partitions  
$\mathcal{M}_r(\mathbb{S}_n)$. 
This means the spheres centered at codewords 
fill the space without overlapping. 
A perfect code $C \subseteq \mathcal{M}_r(\mathbb{S}_n)$ is said to be 
\textbf{trivial}
if either (1) $C = \mathcal{M}_r(\mathbb{S}_n)$ 
(occurring when $t = 0$); or 
(2) $\vert C \vert = 1$ (occurring when $t = n-r$).

\section{Permutation Ulam Sphere Size}\label{sphere size}

This section focuses on the first of four main questions: 
how can we calculate the sizes of permutation Ulam spheres? 
The answer to this question, in the form of Theorem \ref{klambda}, 
 is the first main result of this paper. 
The theorem is actually stated in terms of multipermutations, 
making it also a partial answer the the third main question of this paper concerning 
how to calculate multipermutation Ulam sphere sizes. 
However, unlike permutations, in the case of multipermutations 
sphere sizes may depend upon the choice of center, 
limiting the applicability of the theorem for multipermutation Ulam spheres. 
The proof of the theorem is provided after a necessary 
lemma is recalled and notation used in the theorem is clarified. 
 
 \begin{theorem} \label{klambda}
Let 
$t \in \{0,1,2\dots, n-r\}$, and 
$\Lambda := \{\lambda \vdash n \; : \; \lambda_1 \ge n-t\}.$
Then 
\begin{align}\label{neweqn1}
|S (\mathbf{m}_e^r, t)| = \underset{\lambda \in \Lambda}{\sum} 
(f^{\lambda})(K^\lambda_r).
\end{align}
\end{theorem}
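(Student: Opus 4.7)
The plan is to apply the Robinson--Schensted--Knuth (RSK) correspondence together with Schensted's theorem on longest non-decreasing subsequences. First, I would translate the sphere condition into a statement about longest non-decreasing subsequences. By Remark \ref{n-l}, $\mathbf{m}_\pi^r \in S(\mathbf{m}_e^r, t)$ if and only if $\ell(\mathbf{m}_e^r, \mathbf{m}_\pi^r) \geq n-t$. Since $\mathbf{m}_e^r$ is the weakly increasing word $(1,\dots,1,2,\dots,2,\dots,n/r,\dots,n/r)$ in which each value of $[n/r]$ appears $r$ consecutive times, a common subsequence of $\mathbf{m}_\pi^r$ and $\mathbf{m}_e^r$ is precisely a non-decreasing subsequence of $\mathbf{m}_\pi^r$. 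Hence the sphere reduces to $\{\mathbf{m}_\pi^r : \ell(\mathbf{m}_\pi^r) \geq n-t\}$.

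Next, I would invoke RSK. Every $r$-regular multipermutation $\mathbf{m}_\pi^r \in [n/r]^n$ is a word of length $n$ on the alphabet $[n/r]$ with content $(r,r,\dots,r)$, and RSK supplies a bijection between such words and pairs $(P,Q)$ of tableaux of a common shape $\lambda \vdash n$, where $P$ is semistandard of content $(r,\dots,r)$ and $Q$ is standard. The number of admissible $P$'s of a given shape $\lambda$ is the Kostka number $K_r^\lambda$, while the number of standard $Q$'s is $f^\lambda$. Therefore the number of $r$-regular multipermutations whose RSK shape is $\lambda$ equals $f^\lambda K_r^\lambda$.

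Then I would apply Schensted's theorem (in its generalization to words), which says that the length of the longest non-decreasing subsequence of a word equals $\lambda_1$, the length of the first row of its RSK shape. Combining this with the previous step, $\mathbf{m}_\pi^r \in S(\mathbf{m}_e^r,t)$ if and only if its RSK shape $\lambda$ satisfies $\lambda_1 \geq n-t$, i.e.\ $\lambda \in \Lambda$. Partitioning $S(\mathbf{m}_e^r,t)$ according to RSK shape and summing $f^\lambda K_r^\lambda$ over $\lambda \in \Lambda$ yields \eqref{neweqn1}.

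The main obstacle is not technical difficulty but careful bookkeeping of conventions: I must be sure to apply RSK in the word form (as opposed to the two-line-array or permutation-matrix form), and to invoke Schensted's theorem in the non-decreasing (rather than strictly increasing) version, so that the relevant statistic on $\mathbf{m}_\pi^r$ correctly matches $\lambda_1$. Once those conventions are fixed, the enumeration is immediate from the bijection.
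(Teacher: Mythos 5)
Your proposal is correct and follows essentially the same route as the paper: translate the sphere condition via Remark \ref{n-l} into the condition $\ell(\mathbf{m}_\pi^r)\ge n-t$, apply the word form of the RSK correspondence so that multipermutations of shape $\lambda$ are counted by $f^\lambda K_r^\lambda$, and use Schensted's theorem (the paper's Lemma \ref{length lemma}) to identify $\ell(\mathbf{m}_\pi^r)$ with $\lambda_1$. Your additional remark that common subsequences with $\mathbf{m}_e^r$ are exactly non-decreasing subsequences is implicit in the paper's definition of $\ell(\mathbf{m}_\sigma^r)$, so nothing essential differs.
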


Although this section is primarily concerned with permutation 
Ulam sphere sizes, many of the results hold for multipermutation 
Ulam spheres as well, and lemmas and propositions in this section 
are stated with as much generality as possible. 
In the case of permutation codes, 
perfect codes and sphere sizes are related as follows: 
a perfect $t$-error correcting permutation code $C \subseteq 
\mathbb{S}_n$, 
if it exists, will have cardinality $|C| = {n!}/{S(c,t)},$ where $c \in C$.  
Hence one of the first questions that may be considered in exploring 
the possibility of a perfect code (the second of four main questions) 
is the feasibility of a code of such size.  
As noted in \cite{Farnoud}, for any $\sigma \in \mathbb{S}_n$,
we have $\vert S(\sigma,t)\vert = \vert S(e,t)\vert$. 
Hence calculation of permutation Ulam sphere sizes can be 
reduced to the case when the identity is the center. 

One way to calculate permutation Ulam sphere sizes 
centered at $e$ is to use 
Young tableaux and the RSK-Correspondence.
It is first necessary to introduce some basic notation and 
definitions regarding Young diagrams and Young tableaux. 
 Additional information on the subject 
can be found in \cite{Schensted}, \cite{Fulton} and  \cite{Stanley}.

A \textbf{Young diagram} is a left-justified collection of cells with a
(weakly) decreasing number of cells in each row below.  Listing the 
number of cells in each row gives a partition 
$\lambda = (\lambda_1, \lambda_2, \dots, \lambda_k)$ 
 of $n,$ where $n$ is the total
number of cells in the Young diagram.  
The notation $\lambda \vdash n$ is used to mean $\lambda$ is a partition of $n$.
Because the partition $\lambda \vdash n$ 
defines a unique Young diagram and vice versa, 
 a Young diagram may be referred to  by its associated 
 partition $\lambda \vdash n$.  
 For example, the partition $\lambda := (4,3,3,2) \vdash 12$ has the 
 corresponding Young diagram pictured on the left side of 
 Figure \ref{Young}.

 \begin{figure}[h]
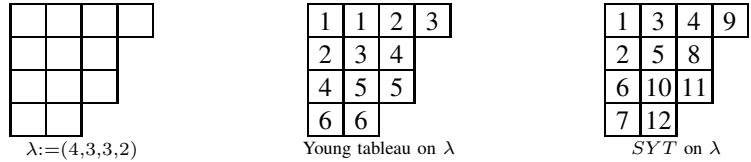
\caption{Young diagram and SYT}\label{Young}
 \[
 \underset{\lambda := (4,3,3,2)}
{\begin{Young}
& & & \cr
   & & \cr 
  & & \cr 
  & \cr
\end{Young}}
\hspace{2cm}
 \underset{\text{Young tableau on $\lambda$}}
{\begin{Young}
1&1 &2 &3 \cr
 2& 3&4 \cr 
 4 &5 &5 \cr 
 6&6 \cr
\end{Young}}
\hspace{2cm}
\underset{SYT \text{ on } \lambda}{
\begin{Young}
1 & 3 & 4 & 9 \cr
2 & 5 & 8 \cr
6 &10 & 11\cr
7 &12 \cr
\end{Young}}
\]
\end{figure} 

A \textbf{Young tableau} is a filling of a Young diagram 
$\lambda \vdash n$  with the following 
two qualities: (1) cell values are weakly increasing across each row; and 
(2) cell values are strictly increasing down each column. 
One possible Younge tableau is pictured in the center of 
Figure \ref{Young}.
A \textbf{standard Young tableau}, abbreviated by $SYT$, is 
a filling of a Young diagram $\lambda \vdash n$  with the following 
three qualities: (1) cell values are strictly increasing across each row; 
(2) cell values are strictly increasing down each column; and 
(3) each of the integers $1$ through $n$ appears exactly once. 
One possible $SYT$ on $\lambda := (4,3,3,1)$ is pictured in the right 
side of Figure \ref{Young}.


Among other things, the famous RSK-correspondence 
(\cite{Fulton, Stanley}) 
provides a bijection between $r$-regular multipermutations
 $\mathbf{m}_\sigma^r$ and ordered pairs $(P,Q)$ on the same 
 Young diagram $\lambda \vdash n,$ where 
 $P$ is a Young tableau whose members come from 
$\mathbf{m}_\sigma^r$ and $Q$ is a SYT.
The next lemma, a stronger form 
of which appears in \cite{Fulton}, is 
an application of the RSK-correspondence.

\begin{lemma}\label{length lemma}
Let 
$\mathbf{m}_\sigma^r \in 
\mathcal{M}_r(\mathbb{S}_n)$ and 
let $P$ and $Q$, both on $\lambda \vdash n$, 
be the pair of Young tableaux associated with 
$\mathbf{m}_\sigma^r$ by the RSK-correspondence. 
Then 
\[
\lambda_1 \;=\;
\ell(\mathbf{m}_\sigma^r).
\]
\end{lemma}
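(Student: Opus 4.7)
The plan is to prove the equality in both directions by analyzing the row-insertion (``bumping'') algorithm that defines the RSK-correspondence on the word $(m_1, \dots, m_n) := \mathbf{m}_\sigma^r$. The statement is the word/multipermutation analogue of Schensted's classical theorem (the permutation case $r=1$ being the original), and it is a specialization of the Greene-theorem form stated in \cite{Fulton}. I would first recall the insertion rule appropriate for words with repeated letters: to row-insert a value $x$ into a row of the current tableau $P$, replace the leftmost entry that is \emph{strictly} greater than $x$; if no such entry exists, append $x$ to the row. The bumped entry is then inserted into the next row by the same rule. This yields a $P$ with weakly increasing rows and strictly increasing columns, which is precisely the setup under which $\lambda_1$ should equal the length of a longest non-decreasing subsequence.

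For the inequality $\lambda_1 \ge \ell(\mathbf{m}_\sigma^r)$, I would fix a non-decreasing subsequence $m_{i_1} \le m_{i_2} \le \cdots \le m_{i_k}$ of maximum length $k$ and prove by induction on $j$ that, immediately after the insertion of $m_{i_j}$, the first row of $P$ has length at least $j$ and its column-$j$ entry is at most $m_{i_j}$. The crucial invariant is that entries of the first row only weakly decrease over time (any bump replaces an entry by a strictly smaller one). Thus at step $i_{j+1}$, columns $1, \ldots, j$ of the first row still hold entries $\le m_{i_j} \le m_{i_{j+1}}$, so $m_{i_{j+1}}$ cannot bump anything in those columns; it must either be appended to create column $j+1$ or bump an entry in some column $\ge j+1$. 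In every case the first row attains length $\ge j+1$ with column-$(j+1)$ entry $\le m_{i_{j+1}}$.

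For the inequality $\lambda_1 \le \ell(\mathbf{m}_\sigma^r)$, I would construct a non-decreasing subsequence of length $\lambda_1$ by a patience-sorting-style backward-pointer argument. Augment the insertion procedure so that whenever $m_i$ is placed directly into column $c$ of the first row and $c \ge 2$, we record a pointer from $m_i$ to whatever value then occupies column $c-1$. Starting from the step at which the final column-$\lambda_1$ entry was placed and following pointers back through columns $\lambda_1 - 1, \lambda_1 - 2, \ldots, 1$ yields a strictly decreasing sequence of insertion times and, reading the corresponding input values in forward-time order, a subsequence $m_{j_1} \le m_{j_2} \le \cdots \le m_{j_{\lambda_1}}$. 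The monotonicity $m_{j_c} \le m_{j_{c+1}}$ holds because at the moment a value is placed in column $c+1$ the column-$c$ entry must be weakly smaller, otherwise the bumping rule would have directed the placement into column $c$ or earlier.

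The main obstacle will be the careful bookkeeping around the monotonicity invariant, namely that first-row entries can be bumped repeatedly between the designated insertion steps and each such bump only decreases the entry in that column. Both halves of the argument rest on this observation, and mis-handling the weak-versus-strict distinction in the bumping rule is the standard pitfall in adapting Schensted's theorem from permutations to words. I would state and verify this invariant explicitly from the bumping rule before invoking it in either induction.
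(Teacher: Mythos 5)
Your argument is correct, but it is worth noting that the paper does not actually prove this lemma at all: it is stated as a known consequence of the RSK-correspondence, with the remark that a stronger form (Greene's theorem, which identifies every $\lambda_1+\dots+\lambda_k$ with maximal unions of $k$ non-decreasing subsequences) appears in \cite{Fulton}. What you supply instead is a self-contained Schensted-style proof for words: the $\lambda_1 \ge \ell$ direction via the induction that after inserting the $j$th term of a fixed non-decreasing subsequence the first row has length at least $j$ with column-$j$ entry at most that term, and the $\lambda_1 \le \ell$ direction via backward pointers from each first-row placement to the current occupant of the previous column. Both halves are sound: the key invariant (first-row entries only weakly decrease over time, since a bump replaces an entry by a strictly smaller value) is exactly what makes the induction go through, and your monotonicity claim for the pointer chain is justified because placement in column $c+1$ forces the column-$c$ entry to be weakly smaller under the ``bump the leftmost entry strictly greater'' rule, the correct convention for repeated letters. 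You also correctly use that first-row entries are always original letters inserted at their own step, so the pointer chain yields indices strictly increasing in position and values weakly increasing. The trade-off is the expected one: the paper's citation keeps the exposition short and leans on the stronger Greene-type statement already in the literature, while your proof makes the specific fact needed here ($\lambda_1$ equals the longest non-decreasing subsequence length) verifiable without outside references, at the cost of a page of bumping bookkeeping.
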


In words, the above lemma says that $\lambda_1$, the number of 
columns in the $P$ (or equivalently $Q$) associated with 
$\mathbf{m}_\sigma^r$ by the RSK-correspondence, 
is equal to $\ell(\mathbf{m}_\sigma^r)$, the length of the longest 
non-decreasing subsequence of $\mathbf{m}_\sigma^r$.
The lemma implies that 
for all $k \in [n]$, the size of the set 
$\{\mathbf{m}_\sigma^r \in \mathcal{M}_r(\mathbb{S}_n) \;\vert\; 
\ell(\mathbf{m}^r_\sigma) = k\}$ is equal to the 
sum of the number of ordered pairs 
$(P,Q)$ on each Young diagram $\lambda \vdash n$. 
Following conventional notation (\cite{Fulton, Stanley}), 
$f^\lambda$ denotes the number of $SYT$ on 
$\lambda \vdash n$.
We denote by $K^\lambda_r$ 
(our own notation) the 
number of Young tableaux on $\lambda \vdash n$ such that 
each $i \in [n/r]$ appears exactly $r$ times. 
We are now able to prove
Theorem \ref{klambda}, which states the relationship between 
$|S(\mathbf{m}_e^r, t)|$, $f^{\lambda}$, and $K^\lambda_r$. \\~\\
\textit{Proof of Theorem \ref{klambda}:} \\
Assume 
$t \in \{0,1, \dots, n-1\}$, and
let $\Lambda := \{\lambda \vdash n \; : \; \lambda_1 \ge n-t\}.$
Furthermore, let 
$\Lambda^{(l)} := \{\lambda \vdash n \; : \; \lambda_1 = l\},$
the set of all partitions of $n$ 
having exactly $l$ columns.  
By the RSK-Correspondence and Lemma \ref{length lemma}, 
there is a bijection between the set 
$\{\mathbf{m}_\sigma^r \; : \; \ell(\mathbf{m}_\sigma^r) = l\}$ and 
the set of ordered pairs $(P,Q)$ where both $P$ and $Q$ have 
exactly $l$ columns.  
This implies that 
$\#\{\mathbf{m}_\sigma^r \; : \; \ell(\mathbf{m}_\sigma^r) = l\} 
\;=\; 
\underset{\lambda \in \Lambda^{(l)}}{\sum} (f^{\lambda})(K^\lambda_r)$
(here $\#A$ is an alternate notation for the cardinality of a set that 
we prefer for conditionally defined sets).
By Remark \ref{n-l}, 
$|S (\mathbf{m}_e^r, t)| 
\;=\; 
\#\{ \mathbf{m}_\sigma \; :  \; \mathrm{d}_\circ(\mathbf{m}_e^r,\mathbf{m}_\sigma^r) \le t\}
\;=\; \#\{ \mathbf{m}_\sigma \; : \; \ell(\mathbf{m}_\sigma^r)  \ge n-t \} $. 
Hence it follows that 
$|S (\mathbf{m}_e^r, t)| 
\;=\;
 \underset{\lambda \in \Lambda}{\sum} (f^{\lambda})(K^\lambda_r).$
\hfill $\square$ \\

Because $K_1^\lambda$ is equivalent to $f^\lambda$ by definition, 
in the case of permutation Ulam spheres, equation (\ref{neweqn1})
simplifies to 
\begin{align}
\label{neweqn2}
\vert S (e, t)\vert = \underset{\lambda \in \Lambda}{\sum} (f^{\lambda})^2.
\end{align}
In both equation (\ref{neweqn1}) and (\ref{neweqn2}), 
the famous hook length formula, due to 
Frame, Robinson, and Thrall \cite{Frame, Fulton}, 
provides a way to calculate $f^\lambda$.  
Within the hook length formula,  the notation $(i,j) \in \lambda$
 is used to refer to the cell in the $i$th 
row and $j$th column of a Young diagram 
$\lambda \vdash n$.  The notation $h(i,j)$ denotes the
 \textbf{hook length} of $(i,j) \in \lambda$, i.e., 
the number of boxes below or to the right of $(i,j)$, including the box $(i,j)$ 
itself.  More formally, 
$h(i,j) := \# (\{ (i,j^*) \in \lambda \;:\; j^* \ge j\} \cup \{ (i^*,j)
 \in \lambda \;:\; i^* \ge i\})$.
 The \textbf{hook-length formula} is as follows:

\[
f^\lambda = \frac{n!}{\underset{(i,j)\in\lambda} \Pi h(i,j)}.
\]

Applying the hook length formula to Theorem \ref{klambda}, 
we may explicitly calculate Ulam permutation sphere sizes, 
as demonstrated in the following propositions.  
These propositions will be 
useful later to show the nonexistence of nontrivial 
$t$-error correcting perfect permutation 
codes for $t \in \{1,2,3\}$. 
Proposition 
 \ref{one sphere} 
 is stated in terms of general multipermutation Ulam spheres.

\begin{proposition}\label{one sphere}
$|S (\mathbf{m}_e^r, 1)|  = 1 + (n-1)(n/r-1).$
\end{proposition}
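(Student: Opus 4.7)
My plan is to directly apply Theorem \ref{klambda} with $t = 1$ and compute each term in the resulting sum. When $t = 1$, the index set $\Lambda$ consists of partitions $\lambda \vdash n$ with $\lambda_1 \geq n-1$, so it contains exactly two elements: $\lambda = (n)$ and $\lambda = (n-1, 1)$. Thus
\[
|S(\mathbf{m}_e^r, 1)| = f^{(n)} K_r^{(n)} + f^{(n-1,1)} K_r^{(n-1,1)},
\]
and the task reduces to evaluating these four quantities.

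The single-row case $\lambda = (n)$ is immediate: there is exactly one $SYT$ on a single row (the entries $1, 2, \ldots, n$ placed left to right), so $f^{(n)} = 1$, and the unique weakly increasing arrangement of $r$ copies of each $i \in [n/r]$ gives $K_r^{(n)} = 1$. For $\lambda = (n-1, 1)$ I would invoke the hook length formula: the hooks have lengths $n, n-2, n-3, \ldots, 1$ along the first row together with $1$ at cell $(2,1)$, so $f^{(n-1,1)} = n!/(n \cdot (n-2)!) = n-1$.

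The key remaining step, and the slightly substantive one, is counting $K_r^{(n-1,1)}$. A Young tableau on $(n-1,1)$ containing $r$ copies of each $i \in [n/r]$ is determined by the value $v$ at cell $(2,1)$, since the first row must then contain the remaining multiset in weakly increasing order. The column-strict condition forces $(1,1) < v$; because the first row is weakly increasing and always contains at least one copy of $1$ (this uses $v \neq 1$), the entry $(1,1)$ equals $1$, so the condition is equivalent to $v \geq 2$. Hence $v$ ranges over $\{2, 3, \ldots, n/r\}$, giving $K_r^{(n-1,1)} = n/r - 1$. Combining these computations yields
\[
|S(\mathbf{m}_e^r, 1)| = 1 \cdot 1 + (n-1)(n/r - 1) = 1 + (n-1)(n/r - 1),
\]
as desired.

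The main obstacle is the verification that $K_r^{(n-1,1)} = n/r - 1$, since one must carefully check that the column-strict condition at the single pair of column-aligned cells is the only constraint beyond the choice of $v$, and that $v = 1$ must be excluded. Everything else is a routine application of Theorem \ref{klambda} and the hook length formula.
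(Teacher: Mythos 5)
Your proposal is correct and follows essentially the same route as the paper: apply Theorem \ref{klambda} with $t=1$, note that the only partitions with $\lambda_1 \ge n-1$ are $(n)$ and $(n-1,1)$, obtain $f^{(n-1,1)} = n-1$ from the hook length formula, and count $K_r^{(n-1,1)} = n/r - 1$ via the possible entries of the single second-row cell. Your explicit verification that the second-row entry must lie in $\{2,\dots,n/r\}$ is in fact a slightly more careful rendering of the paper's counting step, so there is nothing to fix.
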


\begin{proof}
First note that $|S (\mathbf{m}_e^r, 0)| =  |\{\mathbf{m}_e^r\}|= 1.$ 
  There is only one possible partition 
 $\lambda \vdash n$ such that $\lambda_1 = n-1,$
  namely $\lambda' := (n-1,1),$ with its Young diagram pictured below. 
\[
\overbrace{
\begin{Young}
& & \dots &  \cr
 \cr
\end{Young}
}^{n-1}
\]

Therefore by Theorem \ref{klambda}, 
 $|S (\mathbf{m}_e^r, 1)| = 1 + (f^{\lambda'})(K^{\lambda'}_r).$  
Applying the hook length formula, we obtain 
$f^{\lambda'} = n-1.$  The value $K^{\lambda'}_r$ is characterized 
by possible fillings of row $2$ with the stipulation that each $i \in [n/r]$
 must appear 
exactly $r$ times in the diagram.  In this case, since there is only a single 
box in row $2$, the possible fillings are $i \in [n/r-1],$ 
each of which yields a unique Young tableau 
of the desired type.  Hence $K^{\lambda'}_r = n/r -1$, 
which implies that $|S (\mathbf{m}_e^r, 1)| = 1 + (n-1)(n/r-1).$
\end{proof}



Setting $r=1$, Proposition \ref{one sphere} 
implies that $|S(e,1)| = 1 + (n-1)^2$. 
The next two propositions continue the same vein of reasoning, 
but focus on permutation Ulam spheres. 
Such individual cases could be considered indefinitely. 
In fact, a recurrence equation providing an alternative 
method of calculating permutation Ulam sphere sizes for reasonably 
small radii is also known \cite{Kobayashi}.
However, the following two propositions are the last instances of significance in this 
paper as their results will be necessary to prove the 
second main result of this paper.

\begin{prop}\label{two sphere}
Let 
$n>3$ and $\sigma \in \mathbb{S}_n$.
Then 
\[\vert S(\sigma,2)\vert =
 1+ (n-1)^2 + \left( \frac{(n)(n-3)}{2}\right)^2 + \left(\frac{(n-1)(n-2)}{2}\right)^2.\]
 \end{prop}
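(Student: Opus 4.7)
The plan is to apply Theorem \ref{klambda} in the permutation case $r=1$. Since $|S(\sigma,2)| = |S(e,2)|$ for any $\sigma \in \mathbb{S}_n$ (as noted just before Proposition \ref{one sphere}), formula (\ref{neweqn2}) reduces the problem to evaluating $\sum_{\lambda \in \Lambda} (f^\lambda)^2$ where $\Lambda = \{\lambda \vdash n : \lambda_1 \ge n-2\}$, with each $f^\lambda$ computed via the hook length formula.

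First, I would enumerate $\Lambda$. For $n > 3$, the partitions of $n$ with $\lambda_1 \ge n-2$ are exactly the four shapes $(n)$, $(n-1,1)$, $(n-2,2)$, and $(n-2,1,1)$, since any other partition of $n$ has first part at most $n-3$. The first two give $f^{(n)} = 1$ and $f^{(n-1,1)} = n-1$, as already handled in the proof of Proposition \ref{one sphere}.

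Second, I would apply the hook length formula to the remaining two shapes. For $\lambda = (n-2,2)$, the row~1 hook lengths are $n-1, n-2, n-4, n-5, \ldots, 2, 1$ (the value $n-3$ is absent because only columns 1 and 2 extend into row~2), and the row~2 hook lengths are $2, 1$. This gives $f^{(n-2,2)} = n!/[2(n-1)(n-2)(n-4)!] = n(n-3)/2$. For $\lambda = (n-2,1,1)$, cell $(1,1)$ has hook length $n$ since its column extends two rows below, the remaining row~1 hook lengths are $n-3, n-4, \ldots, 2, 1$, and cells $(2,1)$ and $(3,1)$ contribute hook lengths $2$ and $1$. This yields $f^{(n-2,1,1)} = n!/[2n(n-3)!] = (n-1)(n-2)/2$. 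Squaring and summing these four values produces precisely the four terms in the stated formula.

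The only real obstacle is careful hook length bookkeeping, in particular remembering the gap at column 3 in row~1 of $(n-2,2)$ (caused by row~2 having exactly two cells rather than one or three or more) and correctly counting that cell $(1,1)$ of $(n-2,1,1)$ has two cells below it in its column, so its hook jumps from the expected $n-2$ up to $n$. No conceptual difficulty arises beyond this accounting, and the restriction $n > 3$ is exactly what is needed for all four shapes above to be valid distinct partitions of $n$.
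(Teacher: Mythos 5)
Your proposal is correct and follows essentially the same route as the paper: reduce to the identity-centered sphere, apply Theorem \ref{klambda} (equation (\ref{neweqn2})), enumerate the partitions with $\lambda_1 \ge n-2$, and evaluate $f^{(n-2,2)} = n(n-3)/2$ and $f^{(n-2,1,1)} = (n-1)(n-2)/2$ by the hook length formula; the paper merely phrases the first two shapes as reusing $\vert S(\sigma,1)\vert$ from Proposition \ref{one sphere}. Your hook-length bookkeeping is right (the paper's proof in fact attaches these two values to the opposite labels $\lambda^{(1)}, \lambda^{(2)}$, a harmless slip since only the sum of squares matters).
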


\begin{proof}
Assume $n>3$ and $\sigma \in \mathbb{S}_n$. 
Note first that $\vert S(\sigma,2)\vert = \vert S(\sigma,1)\vert + 
\# \{\pi \in \mathbb{S}_n \;: \; \ell(\pi) = n-2 \}.$ 
The only partitions $\lambda \vdash n$ such that 
$\lambda_1 = n - 2$ are 
$\lambda^{(1)} := (n-2, 1, 1)$ and $\lambda^{(2)} := (n-2, 2)$, 
with their respective Young diagrams pictured below.  

\[
\overbrace{
\begin{Young}
& & \dots &  \cr
 \cr
 \cr
\end{Young}
}^{n-2}
\; \; \; \; \; \; \; \; \; \; 
\overbrace{
\begin{Young}
& & \dots &  \cr
&  \cr
\end{Young}
}^{n-2}
\]
Using the hook length formula, 
 $f^{\lambda^{(1)}}$ and
 $f^{\lambda^{(2)}}$ may be calculated to yield: 
 $f^{\lambda^{(1)}} =  ((n)(n-3))/{2}$ 
 and  $f^{\lambda^{(2)}} = ((n-1)(n-2))/{2}$.
 Following the same reasoning as in 
Proposition \ref{one sphere} yields the desired result.  
\end{proof}


\begin{lemma}\label{3 sphere}
Let 
$n>5$ and $\sigma \in \mathbb{S}_n$ 
  Then 
\begin{eqnarray*}
\vert S(\sigma,3)\vert &=&
 1+ (n-1)^2 + \left( \frac{(n)(n-3)}{2}\right)^2 
 + \left(\frac{(n-1)(n-2)}{2}\right)^2 \\
  &+& \left( \frac{(n)(n-1)(n-5)}{6}\right)^2
   + \left( \frac{(n)(n-2)(n-4)}{3}\right)^2
 + \left( \frac{(n-1)(n-2)(n-3)}{6}\right)^2. 
 \end{eqnarray*}
 
\end{lemma}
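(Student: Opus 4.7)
The plan is to mirror the approach of Propositions \ref{one sphere} and \ref{two sphere}. First I would observe that $|S(\sigma,3)| = |S(\sigma,2)| + \#\{\pi \in \mathbb{S}_n \;:\; \ell(\pi) = n-3\}$, so it suffices to enumerate the partitions $\lambda \vdash n$ with $\lambda_1 = n-3$, compute $f^\lambda$ via the hook length formula for each, and add the squares (using equation (\ref{neweqn2}), i.e.\ Theorem \ref{klambda} specialized to the permutation case $r=1$). The result is then added to the expression for $|S(\sigma,2)|$ given by Proposition \ref{two sphere}.

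Next I would identify the partitions. Since $\lambda_1 = n-3$, the remaining $3$ cells must be arranged in rows below the first row in weakly decreasing order, so the only possibilities are
\[
\lambda^{(3)} := (n-3, 1, 1, 1), \quad \lambda^{(4)} := (n-3, 2, 1), \quad \lambda^{(5)} := (n-3, 3).
\]
The hypothesis $n > 5$ is what guarantees that $\lambda^{(5)} = (n-3,3)$ is an honest partition (i.e.\ $n-3 \ge 3$) and that all three diagrams are distinct and valid.

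Then I would apply the hook length formula to each Young diagram. For $\lambda^{(3)}$, a hook shape, the product of hooks is $n\cdot(n-4)!\cdot 3!$, yielding $f^{\lambda^{(3)}} = (n-1)(n-2)(n-3)/6$. For $\lambda^{(4)}$, tracking hooks cell by cell gives a denominator $3(n-1)(n-3)(n-5)(n-6)!$, so $f^{\lambda^{(4)}} = n(n-2)(n-4)/3$. For $\lambda^{(5)}$, the row-$1$ hooks are $(n-2),(n-3),(n-4)$ followed by $(n-6)!$, and row $2$ contributes $3!$, giving $f^{\lambda^{(5)}} = n(n-1)(n-5)/6$. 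Squaring each and summing together with $|S(\sigma,2)|$ from Proposition \ref{two sphere} yields the stated expression.

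The main obstacle is purely bookkeeping: I have to make sure the enumeration of partitions with $\lambda_1 = n-3$ is exhaustive and that no hook lengths get miscounted (in particular, the row-1 hooks for $\lambda^{(4)}$ and $\lambda^{(5)}$ skip certain values because the conjugate partition has entries larger than $1$ in the first two or three columns). Once the three numbers $f^{\lambda^{(3)}}, f^{\lambda^{(4)}}, f^{\lambda^{(5)}}$ are in hand, the conclusion follows immediately from equation (\ref{neweqn2}) and Proposition \ref{two sphere}.
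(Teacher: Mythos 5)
Your proposal is correct and follows essentially the same route as the paper: enumerate the partitions $(n-3,3)$, $(n-3,2,1)$, $(n-3,1,1,1)$, apply the hook length formula (your values $f^{\lambda}$ are all right), square and add to $\vert S(\sigma,2)\vert$ from Proposition \ref{two sphere}. The paper's proof is just a terser outline of exactly this computation.
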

\begin{proof}
The proof is essentially the same as the proof
for Proposition \ref{two sphere}.  
In this case $\# \{\pi \in \mathbb{S}_n \; : \; \ell(\pi) = n-3\}$ 
can be calculated by considering the partitions $\lambda^{(1)} := (n-3,3),$ 
$\lambda^{(2)} := (n-3,2,1)$, and $\lambda^{(3)} := (n-3, 1, 1, 1),$ 
the only Young diagrams having $n-3$ columns.  
These Young diagrams are pictured below. 
\[
\overbrace{
\begin{Young}
& & & \dots &  \cr
& & \cr
\end{Young}
}^{n-3}
\; \; \; \; \; \; \; \; 
\overbrace{
\begin{Young}
& & \dots &  \cr
&  \cr
\cr
\end{Young}
}^{n-3}
\; \; \; \; \; \; \; \; 
\overbrace{
\begin{Young}
& & \dots &  \cr
  \cr
 \cr
 \cr
\end{Young}
}^{n-3}
\]
Applying the hook length formula to
$\lambda^{(1)}$, 
$\lambda^{(2)}$, and $\lambda^{(3)}$ and adding 
the value from Proposition \ref{two sphere} yields the result. 

\end{proof}

\section{Nonexistence of Nontrivial Perfect Ulam Permutation Codes}\label{perfect codes}
The previous section demonstrated how to calculate 
permutation Ulam sphere sizes. 
In this section, again focusing on permutation codes,  
we utilize sphere size calculations to 
prove the following theorem, establishing 
a theoretical limit on the maximum 
size of Ulam permutation codes. 
This is the second of four main contributions of this paper. 
The proof of the theorem can be found at the 
end of the current section. 

\begin{theorem}\label{nonexistence}
There do not exist any nontrivial perfect permutation codes in the Ulam metric.
\end{theorem}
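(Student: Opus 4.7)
The plan is to recast the claim as a pure divisibility problem via sphere packing, and then defeat that divisibility for every admissible radius. Since $|S(\sigma,t)| = |S(e,t)|$ for all $\sigma \in \mathbb{S}_n$, a perfect $t$-error correcting code $C \subseteq \mathbb{S}_n$ must satisfy the sphere-packing identity $|C| \cdot |S(e,t)| = n!$, so the existence of a nontrivial perfect code requires $|S(e,t)| \mid n!$ for some $t$ with $1 \le t \le n-2$. It therefore suffices to rule out this divisibility for every such $t$, which by Legendre's formula for the $p$-adic valuations of $n!$ amounts to exhibiting either a prime factor $p > n$ of $|S(e,t)|$ or a prime power $p^k$ dividing $|S(e,t)|$ with $k > \sum_{j \ge 1}\lfloor n/p^j \rfloor$.

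For the small-radius cases $t \in \{1,2,3\}$, I would invoke the explicit formulas in Propositions \ref{one sphere}, \ref{two sphere} and Lemma \ref{3 sphere}. For $t = 1$, $|S(e,1)| = (n-1)^2 + 1$: any odd prime divisor is $\equiv 1 \pmod 4$, and a case split on the parity of $n$, together with upper bounds on products of such admissible primes at most $n$, forces either a prime factor larger than $n$ (e.g.\ $(n-1)^2+1 = 17$ at $n=5$) or a prime power outstripping the corresponding valuation of $n!$ (e.g.\ $5^2 \| 50$ at $n=8$), so $(n-1)^2 + 1 \nmid n!$ for $n \ge 3$. An identical dissection handles the polynomial expressions from Proposition \ref{two sphere} and Lemma \ref{3 sphere}. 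At the other end of the range, when $n - t$ is small the sphere is nearly all of $\mathbb{S}_n$: the identity $|S(e, n-2)| = n! - 1$ (only the reversed permutation lies outside) rules out $t = n - 2$ outright, and this extends by explicit lower bounds of the form $|S(e,t)| > n!/2$ once $n - t$ drops below a small threshold, forcing $|C| = 1$ and hence triviality.

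The main obstacle is the intermediate regime, namely $t \ge 4$ with $n - t$ still large enough that $|S(e,t)| < n!/2$. Here the only available handle is Theorem \ref{klambda}, which gives $|S(e,t)| = \sum_{\lambda \in \Lambda}(f^{\lambda})^2$ over $\Lambda = \{\lambda \vdash n : \lambda_1 \ge n-t\}$, combined with the hook-length formula for each $f^{\lambda}$. My plan is to isolate a distinguished partition $\lambda^\star \in \Lambda$, for instance a hook shape such as $(n-t, 1^t)$, whose hook-length formula contributes a prime power that divides exactly one of the summands $(f^\lambda)^2$ and therefore survives the full sum, producing the required arithmetic obstruction to $|S(e,t)| \mid n!$. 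The delicate step, and the part I expect to be hardest, is verifying uniformly in this intermediate range that the chosen prime is not cancelled by contributions of other partitions in $\Lambda$, so that the divisibility really does fail for every $(n,t)$ left over from the small-radius and near-diameter analyses.
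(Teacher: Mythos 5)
There is a genuine gap, and it sits at the very first step: you reduce nonexistence to the arithmetic claim that $|S(e,t)|$ fails to divide $n!$ for every admissible $t$, but that divisibility does in fact hold for some admissible pairs, so no amount of $p$-adic analysis can close the argument. Already for $t=1$ your asserted statement that $(n-1)^2+1 \nmid n!$ for all $n \ge 3$ is false: at $n=19$ one has $(n-1)^2+1 = 325 = 5^2\cdot 13$, and since the exponent of $5$ in $19!$ is $3$ and $13 \le 19$, we get $325 \mid 19!$. (Both prime factors are $\equiv 1 \pmod 4$, exactly as your quadratic-residue observation predicts, but that congruence neither pushes them above $n$ nor limits their multiplicity.) Divisibility of $n!$ by the sphere size is only a necessary condition for a perfect code; whenever it happens to hold, your proposal has no further obstruction to offer. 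The paper supplies one: for a perfect code $|C| = n!/|S(e,t)|$ is played off against the bound $|C| \le (n-d+1)! \le (n-2t)!$ of inequality (\ref{eqn1}), and for $t\in\{1,2,3\}$ the explicit sphere sizes of Propositions \ref{one sphere}, \ref{two sphere} and Lemma \ref{3 sphere} show $n!/|S(e,t)| > (n-2t)!$, a contradiction that is valid for every relevant $n$, including $n=19$.

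The second gap is the intermediate regime $t\ge 4$, which is precisely where your plan is only a hope: you propose to isolate a partition $\lambda^\star\in\Lambda$ whose hook-length contribution leaves an uncancelled prime power in $\sum_{\lambda\in\Lambda}(f^{\lambda})^2$, but you give no mechanism for controlling the other summands, and by the point above no uniform arithmetic obstruction of this kind can exist for all $(n,t)$ in any case. The paper never sums the series there; instead it upper-bounds the sphere by $|S(e,t)| \le \binom{n}{t}\,n!/(n-t)!$, so a perfect code must satisfy $|C| \ge (n-t)!/\binom{n}{t}$, and comparing this with $(n-2t)!$ produces the overlapping condition $F(n,t)>1$ of Lemma \ref{perfect inequality}; this is verified numerically for $t\in\{4,5,6\}$ and proved by induction for $t>6$ with $n\ge 2t$ (Lemma \ref{big lemma}), while the leftover range $n \le 2t+1$ is excluded by the sphere-intersection argument of Remark \ref{nlt2t} (the radius-$t$ spheres around $e$ and the reversal $[n,n-1,\dots,1]$ must meet), not by a divisibility or near-diameter count. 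Your observation $|S(e,n-2)| = n!-1$ is correct as far as it goes, but it covers far less than what remains once the small-$t$ cases are done, so the core of the theorem is left unproved under your approach.
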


In 2013, Farnoud et. al (\cite{Farnoud}) proved the following upper bound on 
the size of an Ulam permutation code $C \subseteq \mathbb{S}_n$ 
with minimum Ulam distance $d$ 
(i.e. $C$ is an MPC$(n,1,d))$. 
\begin{equation}\label{eqn1} 
|C|  \le (n - d + 1)!
\end{equation}

Hence one strategy to prove the non-existence of 
perfect permutation codes is to show that the size of a perfect 
code must necessarily be larger than the upper-bound 
given above.  Note that for equation (\ref{eqn1}) to make 
sense, $d$ must be less than 
or equal to $n-1$.  This is 
always true since the maximum Ulam distance 
between any two permutations in $\mathbb{S}_n$ is $n-1$, 
achieved when permutations are in reverse 
order of each other (e.g., $\mathrm{d}_\circ(e,[n,n-1,...,1]) = n-1).$

\begin{lemma}\label{1-error code}
There do not exist any (nontrivial) single-error correcting perfect permutation codes.
\end{lemma}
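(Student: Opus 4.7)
The plan is to argue by contradiction, combining Proposition \ref{one sphere} (with $r=1$) and the upper bound (\ref{eqn1}) of Farnoud et al., so that a hypothetical nontrivial perfect $1$-error correcting permutation code forces a numerical inequality that only $n \le 2$ can satisfy.

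First I would assume that $C \subseteq \mathbb{S}_n$ is a nontrivial perfect $1$-error correcting permutation code. Nontriviality together with $t = 1$ requires $1 \le 1 \le n - 1 - 1$ in a meaningful way; in particular, I would record that the trivial cases $t=0$ and $t=n-1$ are excluded, which forces $n \ge 3$. Since $C$ corrects one error, its minimum Ulam distance satisfies $d \ge 3$, so (\ref{eqn1}) yields
\begin{equation*}
|C| \;\le\; (n - d + 1)! \;\le\; (n-2)!.
\end{equation*}

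Next I would invoke perfectness: because the spheres $S(c,1)$ for $c \in C$ partition $\mathbb{S}_n$ and all have the same cardinality (as $|S(\sigma,1)| = |S(e,1)|$ for every $\sigma$), one obtains
\begin{equation*}
|C| \;=\; \frac{n!}{|S(e,1)|} \;=\; \frac{n!}{1 + (n-1)^2}
\end{equation*}
by Proposition \ref{one sphere} with $r = 1$. Combining this identity with the upper bound gives
\begin{equation*}
\frac{n!}{1+(n-1)^2} \;\le\; (n-2)!,
\end{equation*}
which after clearing denominators simplifies to $n(n-1) \le n^2 - 2n + 2$, i.e.\ $n \le 2$. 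This contradicts $n \ge 3$, completing the proof.

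The argument is essentially a single inequality chase, so I do not expect a serious obstacle; the only point that needs a little care is handling the boundary cases $n \le 2$ cleanly, to make sure they fall squarely under the definition of a trivial perfect code (either $C = \mathbb{S}_n$ when $t=0$, or $|C|=1$ when $t = n-1$) and are therefore legitimately excluded by the hypothesis of nontriviality. Once $n \ge 3$ is in force, the divisibility/cardinality clash with (\ref{eqn1}) is immediate.
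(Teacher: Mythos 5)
Your proposal is correct and follows essentially the same route as the paper: it combines Proposition \ref{one sphere} (so $|C| = n!/(1+(n-1)^2)$ by perfectness) with the Farnoud bound (\ref{eqn1}) giving $|C|\le (n-2)!$, and the resulting inequality fails for $n>2$ while the cases $n\le 2$ are dismissed as trivial, exactly as in the paper's proof of Lemma \ref{1-error code}.
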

\begin{proof}
Assume that $C \subseteq \mathbb{S}_n$ is a perfect single-error 
correcting permutation code.
Recall that $C$ is trivial code if either $C = \mathbb{S}_n$ or if 
$\vert C \vert = 1$.
If $n \le 2,$ then for all $\sigma,\pi \in \mathbb{S}_n,$ we have 
$\pi \in S(\sigma, 1),$ which implies that $C$ is a trivial code.  
Thus we may assume that $n > 2.$ 

We proceed by contradiction. 
Since $C$ is a perfect single-error correcting
permutation code, 
$C$ is an MPC$(n, 1, d)$ with $3 \le d \le n-1$ 
and $|C| \;=\; {n!}/{\vert S(\sigma,1)\vert} \;=\; 
{n!}/({1+(n-1)^2})$ by Proposition \ref{one sphere}.
However, inequality (\ref{eqn1}) implies that the code size 
$|C| \le (n -2)!.$  Hence, it suffices to show that 
$|C| \;=\; {n!}/({1+(n-1)^2})\; > \;(n-2)!,$ which is true if and only if 
$n > 2.$    
\end{proof}

Similar arguments may also be applied to show that no 
nontrivial perfect $t$-error correcting codes 
exist for $t \in \{2,3\}$.  This is the subject of the next two lemmas. 
The remaining cases, when $t > 3$, are treated toward the end of this 
section.

\begin{lemma}\label{2-error code}
There do not exist any (nontrivial) perfect 2-error correcting permutation codes. 
\end{lemma}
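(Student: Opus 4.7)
The plan is to mirror the strategy used in Lemma \ref{1-error code}, swapping in Proposition \ref{two sphere} for Proposition \ref{one sphere}. Assume for contradiction that $C \subseteq \mathbb{S}_n$ is a nontrivial perfect $2$-error correcting permutation code. Then $C$ is an MPC$_\circ(n,1,d)$ with $d \ge 2t + 1 = 5$ and $d \le n-1$, and the perfect-code hypothesis together with Proposition \ref{two sphere} gives
\begin{equation*}
|C| \;=\; \frac{n!}{|S(\sigma,2)|} \;=\; \frac{n!}{1 + (n-1)^2 + \left(\frac{n(n-3)}{2}\right)^2 + \left(\frac{(n-1)(n-2)}{2}\right)^2}
\end{equation*}
for any $\sigma \in C$.

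First I would dispose of small $n$. Since the maximum Ulam distance between any two permutations in $\mathbb{S}_n$ is $n-1$, the constraint $d \ge 5$ already forces $n \ge 6$: for $n \le 5$ any code with $d \ge 5$ contains at most one codeword and is therefore trivial. So the nontrivial case is $n \ge 6$, and Proposition \ref{two sphere} is applicable since it requires $n > 3$.

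For $n \ge 6$, I would combine the size formula above with the Farnoud upper bound (\ref{eqn1}), which in this case reads $|C| \le (n - d + 1)! \le (n-4)!$. The desired contradiction reduces to the polynomial inequality
\begin{equation*}
n(n-1)(n-2)(n-3) \;>\; 1 + (n-1)^2 + \left(\frac{n(n-3)}{2}\right)^2 + \left(\frac{(n-1)(n-2)}{2}\right)^2
\end{equation*}
for every $n \ge 6$. The left side is a degree-$4$ polynomial in $n$ with leading coefficient $1$, while the right side is a degree-$4$ polynomial with leading coefficient $\tfrac{1}{2}$, so the inequality holds asymptotically. To make this rigorous from $n = 6$ onward, I would clear denominators, rearrange to the form $P(n) > 0$, and verify positivity either by factoring $P$ or by a direct numerical check at $n = 6$ combined with a monotonicity argument (e.g.\ showing $P'(n) \ge 0$ on $[6,\infty)$).

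The main obstacle is the polynomial bookkeeping. Unlike the $t = 1$ case, where the gap between $|S(\sigma,1)|$ and $(n-2)!$ is easy to see, here both sides are degree-$4$ polynomials that differ only by a constant factor in the leading term, so I must confirm the strict inequality already at the boundary value $n = 6$ rather than relying on asymptotics alone. Once that single numerical check and the monotonicity observation are established, the contradiction with inequality (\ref{eqn1}) completes the proof.
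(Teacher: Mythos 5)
Your proposal is correct and follows essentially the same route as the paper: invoke Proposition \ref{two sphere} for $|S(\sigma,2)|$, use perfection to get $|C| = n!/|S(\sigma,2)|$, compare against the bound $|C| \le (n-4)!$ from inequality (\ref{eqn1}), and reduce to a degree-four polynomial inequality (the paper dismisses this as ``elementary methods'' valid for $n>3$, and indeed the difference simplifies to $\tfrac{1}{2}n^2(n-3)^2 - n - 3 > 0$, so your check at $n=6$ plus monotonicity would go through). The only cosmetic difference is that you dispose of $n \le 5$ via the maximum-distance argument, whereas the paper only needs to set aside $n \le 3$ because its inequality holds for all $n > 3$.
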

\begin{proof}
Assume that $C$ is a perfect 2-error correcting permutation code. 
Similarly to the proof of Lemma \ref{1-error code}, if $n \le 3,$ then $C$ is a trivial code 
consisting of a single element, so we may assume $n >3$.  
Again we proceed by contradiction.  

Since $C \subseteq \mathbb{S}_n$ is a perfect 2-error correcting code, then 
$C$ is an MPC$(n, 1, d)$ code with $5 \le d \le n-1$ and 
Proposition \ref{two sphere} implies 
\[
|C| \;\;=\;\; \frac{n!}{\vert S(\sigma, 2)\vert} 
\;\;=\;\; \frac{n!}{1+ (n-1)^2 + \left( \frac{(n)(n-3)}{2}\right)^2 
+ \left(\frac{(n-1)(n-2)}{2}\right)^2}.
\]
By Inequality (\ref{eqn1}), $|C| \le (n-4)!$, so it suffices to prove 
that 
\[
\frac{n!}{1+ (n-1)^2 
+ \left( \frac{(n)(n-3)}{2}\right)^2 
+ \left(\frac{(n-1)(n-2)}{2}\right)^2}
- (n-4)! \;\;>\;\; 0,
\]  
which is easily shown by elementary methods to be true for $n > 3$.
\end{proof}

\begin{lemma}\label{3-error code}
There do not exist any (nontrivial) perfect 3-error correcting codes.  
\end{lemma}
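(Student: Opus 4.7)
The plan is to mirror the contradiction-based arguments of Lemmas \ref{1-error code} and \ref{2-error code}. Suppose for contradiction that $C \subseteq \mathbb{S}_n$ is a nontrivial perfect 3-error correcting permutation code, so that $C$ is an MPC$(n, 1, d)$ with $d \ge 7$. Since the Ulam diameter of $\mathbb{S}_n$ is $n-1$, the existence of two distinct codewords at distance at least $7$ forces $n \ge 8$; for any smaller $n$ the only perfect 3-error correcting codes are the trivial singletons. This also ensures $n > 5$, which is exactly the hypothesis needed to invoke Lemma \ref{3 sphere}.

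Once this setup is in place, the two standard ingredients combine to give the desired contradiction. On one hand, perfectness gives $\vert C \vert = n!/\vert S(\sigma,3)\vert$ for any $\sigma \in C$, with $\vert S(\sigma,3)\vert$ supplied in closed form by Lemma \ref{3 sphere}. On the other hand, inequality (\ref{eqn1}) with $d \ge 7$ gives $\vert C \vert \le (n-6)!$. It therefore suffices to prove
\[
\frac{n!}{\vert S(\sigma,3)\vert} \;>\; (n-6)!, \qquad \text{equivalently} \qquad
n(n-1)(n-2)(n-3)(n-4)(n-5) \;>\; \vert S(\sigma,3)\vert,
\]
for every $n \ge 8$.

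The main obstacle is verifying this polynomial inequality. The sphere size from Lemma \ref{3 sphere} is a sum of seven squared integer expressions; the three degree-six summands $\bigl(n(n-1)(n-5)/6\bigr)^2$, $\bigl(n(n-2)(n-4)/3\bigr)^2$, and $\bigl((n-1)(n-2)(n-3)/6\bigr)^2$ each contribute a leading term $c\,n^6$ with $c\in\{1/36,\,1/9,\,1/36\}$, so their combined leading coefficient is $1/36 + 4/36 + 1/36 = 1/6$, while the lower-order summands contribute only $O(n^4)$. By contrast the falling factorial $n(n-1)\cdots(n-5)$ has leading coefficient $1$. Since $1 > 1/6$, the desired inequality holds asymptotically with a comfortable margin, and the remaining work is the routine polynomial comparison of two explicit expressions in $n$, together with direct arithmetic verification at the boundary values $n = 8, 9, \ldots$ up to where the dominant-term analysis takes over. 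This step is entirely analogous to the elementary closing computation at the end of the proof of Lemma \ref{2-error code}.
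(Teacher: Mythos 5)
Your proposal is correct and follows essentially the same route as the paper: rule out $n \le 7$ (where any perfect 3-error correcting code is trivial), then for $n \ge 8$ compare the perfect-code size $n!/\vert S(\sigma,3)\vert$ given by Lemma \ref{3 sphere} against the Singleton-type bound $(n-6)!$ from inequality (\ref{eqn1}), exactly as in the proof of Lemma \ref{2-error code}. Your leading-coefficient check ($1/6$ versus $1$) plus verification at small $n$ is a reasonable way to carry out the closing polynomial comparison that the paper leaves as "the same reasoning."
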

\begin{proofoutline}
Assume that $C\subseteq  \mathbb{S}_n$ is a perfect 3-error
correcting code. 
Similarly to the proof of Lemmas 
\ref{1-error code} and \ref{2-error code}, if 
$n \le 7$, then $C$ is a trivial code, so we may 
assume that $n > 7$. 
The remainder of the proof follows the
 same reasoning as the proof for Lemma
\ref{2-error code}, utilizing the sphere size calculated in 
Proposition \ref{3 sphere}. 
\hfill $\square$
\end{proofoutline}

For small values of $t$, explicit sphere calculations 
work well for showing the non-existence of 
nontrivial perfect $t$-error correcting codes. 
However, for each radius $t$, the size of the sphere 
$S(e,t)$ is equal to $|S(e,t-1)| +
\#\{\pi \in \mathbb{S}_n \;:\; \ell(\pi) = n-t\}$. 
This means each sphere size 
calculation of radius $t$ requires calculation of sphere sizes 
for radii from $0$ through $t-1$.  Hence 
such explicit calculations are impractical for 
large values of $t$.  For values 
of $t>3,$ another method can be used to show that 
nontrivial perfect codes do not exist. The next lemma 
provides a sufficient condition to 
conclude that perfect codes do not exist.  
In the proof of the lemma, 
the notation $n\choose{t}$ denotes 
the usual combinatorial choice function.


\begin{lemma}\label{perfect inequality}
Let $t$ be a nonnegative integer such that $n \ge 2t$. 
 If the following inequality holds, then no nontrivial perfect 
 $t$-error correcting permutation codes exist in $\mathbb{S}_n:$ 
 
\begin{equation}\label{eqn2}
 F(n,t) := \frac{\left( (n-t)! \right)^2 t!}{n!(n-2t)!} > 1.
\end{equation}

 We call the above inequality the \textbf{overlapping condition}.  
\end{lemma}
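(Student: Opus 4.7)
\emph{Plan.} The strategy is to argue by contradiction, sandwiching $|S(e,t)|$ between (i) the lower bound forced by the sphere-packing partition property combined with the code-size bound (\ref{eqn1}), and (ii) a new combinatorial upper bound derived directly from the longest-increasing-subsequence characterization of $S(e,t)$. The overlapping condition is precisely what makes these two bounds clash.

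First I would suppose for contradiction that a nontrivial perfect $t$-error correcting code $C\subseteq\mathbb{S}_n$ exists. Since the radius-$t$ spheres centered at the codewords partition $\mathbb{S}_n$, and since every permutation Ulam sphere of radius $t$ has the same size as $S(e,t)$, we have $|C|=n!/|S(e,t)|$. The minimum distance of $C$ satisfies $d\geq 2t+1$, so inequality (\ref{eqn1}) gives $|C|\leq (n-d+1)!\leq (n-2t)!$. Combining these yields the lower bound
\[
|S(e,t)|\;\geq\;\frac{n!}{(n-2t)!}.
\]

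Next I would prove the upper bound $|S(e,t)|\leq\binom{n}{t}^{2}t!$. By Remark \ref{n-l}, every $\pi\in S(e,t)$ satisfies $\ell(\pi)\geq n-t$, so $\pi$ admits an increasing subsequence of length exactly $n-t$ supported on some position set $I\subseteq[n]$ with value set $V:=\{\pi(i):i\in I\}$, both of size $n-t$. Because $\pi|_I$ is forced to be the unique increasing bijection $I\to V$, the permutation $\pi$ is reconstructed from the triple $(I,V,\rho)$, where $\rho:=\pi|_{[n]\setminus I}$ is any bijection $[n]\setminus I\to[n]\setminus V$. The map from triples to permutations is thus surjective onto $S(e,t)$, giving $|S(e,t)|\leq\binom{n}{n-t}\binom{n}{n-t}\,t!=\binom{n}{t}^{2}t!$.

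The remainder is pure algebra: expanding $\binom{n}{t}^{2}t!=(n!)^{2}/(t!\,((n-t)!)^{2})$, the inequality $\binom{n}{t}^{2}t!<n!/(n-2t)!$ is equivalent to $((n-t)!)^{2}t!/(n!\,(n-2t)!)>1$, which is exactly $F(n,t)>1$. Under the overlapping condition we therefore get $n!/(n-2t)!\leq |S(e,t)|\leq\binom{n}{t}^{2}t!<n!/(n-2t)!$, a contradiction. The only nontrivial ingredient is the sphere upper bound, which is lossy (permutations with several length-$(n-t)$ increasing subsequences are counted more than once), but that slack is exactly what is balanced against the looseness of (\ref{eqn1}) to produce the clean threshold $F(n,t)>1$.
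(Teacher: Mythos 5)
Your proof is correct and follows essentially the same route as the paper: the upper bound $\binom{n}{t}^{2}t! = \binom{n}{n-t}\,\frac{n!}{(n-t)!}$ on $|S(e,t)|$ is numerically the paper's bound (the paper counts by choosing the values of the increasing subsequence and inserting the remaining $t$ elements, you count positions and values plus a bijection on the complement), and the combination with perfectness and inequality (\ref{eqn1}) is identical. The only differences are presentational: you argue by contradiction where the paper argues by contrapositive, and your triple $(I,V,\rho)$ justification of the sphere bound is slightly more explicit.
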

\begin{proof}
Assume that $t$ is a nonnegative integer such that $n \ge 2t$. 
We proceed by contrapositive.  
Suppose $C \subset \mathbb{S}_n$ is a 
nontrivial perfect $t$-error correcting permutation code. 
We want to show that $F(n,t) \le 1.$
Since $C$ is a perfect  code, we know it is also an MPC$(n, 1, d)$ 
code with $2t+1 \;\le\; d$ and  
 by inequality (\ref{eqn1}), $|C| \le (n-2t)!$.
 At the same time, for any $\sigma \in \mathbb{S}_n,$ 
 we have $\vert S(\sigma,t)\vert = \vert S(e,t)\vert,$ 
 which is less than or equal to ${n \choose n-t} (n!)/(n-t)!,$ 
 since any permutation $\pi \in S(e,t)$ can be obtained by 
 first choosing $n-r$ elements of $e$ to be in increasing order, 
 and then arranging the remaining $t$ elements into $\pi.$  
 Of course this method will generally result in double counting some 
 permutations in $S(e,t),$ hence the inequality.  
 Now 
 \[\vert S(\sigma,t)\vert \;\;\le\;\; {n \choose n-t} \frac{n!}{(n-t)!}
\text{ implies that } 
  \frac{(n-t)!}{\binom{n}{t}} \;\;\le\;\; \frac{n!}{\vert S(\sigma,t)\vert} 
  \;\;=\;\; |C| \;\;\le\;\; (n-2t)!.
  \]
  Moreover, $(n-t)!/\binom{n}{t} \le (n-2t)!$ if and only if 
  $F(n,t) \le 1$.
\end{proof}

Notice that the overlapping condition is never satisfied for $t = 1.$ 
However, the following proposition will imply that as 
long as $t>1,$ then the overlapping condition 
may be satisfied for sufficiently large $n.$  

\begin{proposition}\label{limit}
Let $t$ be a nonnegative integer such that $n\ge2t$.
Then 
$\underset{n\rightarrow \infty}{\lim} 
F(n,t) = t!. $
\end{proposition}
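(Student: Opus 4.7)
The plan is to rewrite $F(n,t)$ so that the only $n$-dependence appears as a ratio of two falling products of equal length, and then argue that this ratio tends to $1$ as $n \to \infty$, leaving exactly $t!$.

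First I would separate out the combinatorial factor $t!$, which does not depend on $n$, and regroup the remaining factorial ratios. Writing
\[
F(n,t) = t! \cdot \frac{(n-t)!}{n!} \cdot \frac{(n-t)!}{(n-2t)!},
\]
I would recognize $\frac{n!}{(n-t)!} = n(n-1)\cdots(n-t+1)$ as a product of $t$ consecutive integers starting at $n$ and descending, and $\frac{(n-t)!}{(n-2t)!} = (n-t)(n-t-1)\cdots(n-2t+1)$ as a product of $t$ consecutive integers starting at $n-t$ and descending. Pairing these factor-by-factor gives
\[
F(n,t) = t! \cdot \prod_{i=0}^{t-1} \frac{n-t-i}{n-i}.
\]

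Next, since $t$ is fixed and the number of factors in the product is exactly $t$ (independent of $n$), I would invoke the product rule for limits: each individual factor satisfies $\lim_{n\to\infty} \frac{n-t-i}{n-i} = 1$ because numerator and denominator are monic linear in $n$ with equal leading coefficients. Therefore
\[
\lim_{n\to\infty} \prod_{i=0}^{t-1} \frac{n-t-i}{n-i} = \prod_{i=0}^{t-1} \lim_{n\to\infty} \frac{n-t-i}{n-i} = 1,
\]
and multiplying by the constant $t!$ yields $\lim_{n\to\infty} F(n,t) = t!$, as desired.

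There is essentially no hard step here: the only subtlety worth flagging is ensuring the factorial identities $\frac{n!}{(n-t)!}$ and $\frac{(n-t)!}{(n-2t)!}$ are well-defined, which is guaranteed by the hypothesis $n \ge 2t$, and noting that the limit of the product equals the product of the limits because we have a fixed finite number of factors.
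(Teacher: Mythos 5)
Your proof is correct and follows essentially the same route as the paper: cancel the factorials into two falling products of $t$ terms each and observe that their ratio tends to $1$, leaving $t!$. Your factor-by-factor pairing with the finite product rule is just a slightly more explicit rendering of the paper's final step.
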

\begin{proof}
Assume $t$ is a nonnegative integer such that 
$n \ge 2t$.  Then 
\begin{eqnarray*}
\underset{n\rightarrow \infty}{\lim} 
F(n,t)
&=& \underset{n\rightarrow \infty}{\lim} 
    \frac{(n-t)(n-t-1)\cdots(n-2t+1)(n-2t)!(n-t)!t!}
           {(n)(n-1)\cdots(n-t+1)(n-t)!(n-2t)!} \\~\\
   &=& \underset{n\rightarrow \infty}{\lim} 
      \frac{(n-t)(n-t-1)\cdots(n-2t+1)t!}{(n)(n-1)\cdots(n-t+1)} \\~\\
    &=&  \underset{n\rightarrow \infty}{\lim} 
       \frac{(n^{t-1})t!}{n^{t-1}}
      = t!
\end{eqnarray*}
\end{proof}

The proposition above means that for any nonnegative integer $t$
less than or equal to $n/2$, 
there is some value $k$ such that for all values of $n$ 
larger than $k,$ there does not exist a perfect 
$t$-error correcting code.  The question remains 
of how large the value of $k$ must be before 
it is guaranteed that perfect $t$-error correcting codes do not exist.

\begin{table}[h]\caption{Non-feasibility of perfect $t$-error correcting codes}
\label{nonfeasible}
\begin{center}
\begin{tabular}{|c|c||c|c|}
\hline 
$t$ &  $\min n$ satisfying (\ref{eqn2}) &
 $t$ & $\min n$ satisfying (\ref{eqn2})\tabularnewline
\hline
\hline 
$1$ &  N/A & $6$ &$13$  \tabularnewline
\hline 
$2$ &  $8$& $7$ & $14$  \tabularnewline
\hline 
$3$ &  $8$& $8$ & $16$\tabularnewline
\hline 
$4$  & $10$& $9$ & $18$ \tabularnewline
\hline 
$5$  & $11$& $10$ & $20$ \tabularnewline
\hline
\end{tabular}	
\end{center}
\end{table}

Table \ref{nonfeasible} compares positive integer values $t$ 
versus $\min \{n \in \mathbb{Z}_{>0} \;: \; F(n,t) > 1 \}.$
Values were determined via numerical computer calculation.  
The table suggests that for $t >6,$ 
the minimum value of $n$ satisfying the 
overlapping condition is $n = 2t.$  If what the table appears 
to suggest is true, then in view of Proposition 
\ref{limit}, we may rule out perfect $t$-correcting 
codes for any $t >6.$  The next lemma 
formalizes what is implied in the table  
by providing parameters for which the 
overlapping condition is always satisfied. 
In combination with Lemma \ref{perfect inequality}, 
the implication is that nontrivial perfect permutation codes 
do not exist for these parameters. The remaining 
cases are also easily dealt with. 

\begin{lemma}\label{big lemma}
Let $t$ be an integer greater than $6$.  Then $n \ge 2t$ implies that
the overlapping condition is satisfied.
\end{lemma}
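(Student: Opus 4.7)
The plan is to separate the dependence on $n$ from the dependence on $t$, exploiting the fact that $F(n,t)$ is monotone in $n$ for each fixed $t$. This reduces the claim to verifying the inequality at the boundary $n = 2t$, which I would then handle by a second monotonicity argument, this time in $t$, combined with a single numerical base case.

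First, I would compute the ratio $F(n+1,t)/F(n,t)$ directly from the definition in (\ref{eqn2}). A brief calculation gives
\[
\frac{F(n+1,t)}{F(n,t)} \;=\; \frac{(n+1-t)^2}{(n+1)(n+1-2t)},
\]
and expanding the numerator and denominator as polynomials in $n$ shows the difference is exactly $t^2$, which is strictly positive. Therefore $F(\,\cdot\,,t)$ is strictly increasing on $\{n : n \ge 2t\}$, and it suffices to show $F(2t,t) > 1$ for every integer $t \ge 7$.

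At $n = 2t$ the expression collapses to $g(t) := F(2t,t) = (t!)^3 / (2t)!$, since $(n-2t)! = 0! = 1$. I would then analyze $g$ via the ratio $g(t+1)/g(t) = (t+1)^2/(2(2t+1))$, whose exceeding $1$ is equivalent to $t^2 - 2t - 1 > 0$, which holds for all $t \ge 3$. Thus $g$ is strictly increasing on $\{t : t \ge 3\}$, and the entire lemma reduces to checking the single base case $t = 7$, for which a direct computation gives $g(7) = 5040^3/14! = 128{,}024{,}064{,}000/87{,}178{,}291{,}200 > 1$. Chaining the two monotonicities yields $F(n,t) \ge F(2t,t) = g(t) \ge g(7) > 1$ for all $t \ge 7$ and $n \ge 2t$.

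The main obstacle is essentially diagnostic rather than technical: every step is routine algebra, but the cutoff $t > 6$ in the statement is \emph{sharp} for this argument, since $g(6) = 720^3/12! < 1$. This matches the $t = 6$ row of Table \ref{nonfeasible}, which already requires $n \ge 13 > 2t$, and serves as a useful sanity check that the chain of reductions above is tight rather than wasteful.
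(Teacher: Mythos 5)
Your proof is correct and is essentially the paper's argument: the paper likewise reduces to the boundary case $n=2t$ via the ratio $F(n+1,t)/F(n,t)=\frac{(n+1-t)^2}{(n+1)(n+1-2t)}$ and then proves $F(2t,t)=\frac{(t!)^3}{(2t)!}>1$ by induction on $t$ from the same base case $t=7$, with the induction step being exactly your ratio $\frac{(t+1)^2}{2(2t+1)}>1$ (the paper bounds it more crudely by $\frac{t+1}{4}$). The only differences are the order of the two monotonicity steps and your sharper justification of the ratio inequalities, which do not change the structure of the proof.
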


\begin{proof}
Assume $t$ is an integer greater than $6$.
We begin the proof of the lemma by showing that if $n = 2t,$ 
then the desired inequality holds. 
We assume that $n = 2t$ and proceed by induction on $t.$ 

\setstretch{1}{
For the base case, let $t = 7.$  Then $n = 14,$ and  
$F(n,t) \;=\; ((7!)^3)/(14!) \;\approx\; 1.46 \;>\; 1.$
As the induction hypothesis, suppose it is true that 
$F(2t,t) \;=\; ((t!)^3)/((tk)!) \;>\; 1.$ 
We wish to show that the following inequality holds: 
\[F(2(t+1),t+1) \;\;\;=\;\;\; \frac{((t+1)!)^3}{(2(t+1))!} \;\;\;>\;\;\; 1.\]
\setstretch{1.112}{

}
Here 
\[
\frac{((t+1)!)^3}{(2(t+1))!} \;\;\;=\;\;\; \left(\frac{(t!)^3}{(2t)!}\right) 
\left(\frac{(t+1)^3}{(2t+1)(2t+2)}\right).
\]
 By our induction hypothesis,  
the first term of the right hand side, 
$(t!)^3/(2t)!$, 
 is greater than $1,$ so it 
suffices to show that $(t+1)^3/(2t+1)(2t+2)) \;\ge\; 1.$ 
Note here that 
\[
\frac{(t+1)^3}{(2t+1)(2t+2)} \;\;\;>\;\;\;
\frac{(t+1)^3}{(2t+2)(2t+2)} \;\;\;=\;\;\; 
\frac{(t+1)^3}{4(t+1)^2} \;\;\;=\;\;\;
\frac{t+1}{4},
\] which is greater than $1$ whenever $t > 3.$  
Of course $t > 6$ by assumption, so the desired conclusion follows.

Thus far we have technically only proven that 
$F(n,t) > 1$
whenever $n = 2t.$  However, it is a simple matter to show that the same 
is true whenever $n > 2t$ as well.
We begin by supposing that $F(n,t)  > 1.$  
Then 
\[
F(n+1,t) 
\;\;\;=\;\;\; \frac{((n+1-t)!)^2 t!}{(n+1)!(n+1-2t)!} 
\;\;\;=\;\;\; F(n,t)
\cdot \frac{(n+1-t)^2}{(n+1)(n+1-2t)}
\] 
is necessarily 
greater than $1$ 
whenever $((n+1-t)^2)/((n+1)(n+1-2t)) \;\ge\; 1,$ which is true 
for all values of $n$ and $t.$  
}
\end{proof}

Lemma \ref{big lemma} required that $n \ge 2t$. 
However, if $n < 2t,$ then it is impossible 
for a nontrivial perfect $t$-error correcting 
permutation code to exist.  
In fact, we may say something even stronger. 

\begin{remark}\label{nlt2t}
If $t \in \mathbb{Z}_{>0}$ such that $n \le 2t+1$, then it is impossible 
for a nontrivial perfect $t$-error correcting permutation code to exist. 
\end{remark}
To understand why Remark \ref{nlt2t} is true, consider
 two permutations 
within $\mathbb{S}_n$ of maximal Ulam distance apart. 
 The most obvious 
example of which would be the identity element $e$ and the 
only-decreasing permutation $\omega^* := [n, n-1, ..., 1].$    
Notice that $S(e,t) = \{ \pi \in \mathbb{S}_n \;: \; \ell(\pi) \ge n - t\},$ which 
means that every permutation whose longest increasing subsequence 
is at least $n-t$ is in the sphere centered at $e.$  Meanwhile, there is at least 
one permutation $\sigma \in \mathbb{S}_n$ such that 
$\ell(\sigma) = 1 + t$ and $\sigma \in S(\omega^*,t),$ since we may 
apply successive translocations to $\omega^*$ in such a way that 
the longest increasing subsequence is increased with each translocation.  
As long as $n  \le 2t + 1,$ then $n-t \le t + 1 = 1 + t$, implying that $\ell(\sigma) = 1+t 
\ge n-t$, which implies that 
$\sigma \in S(e,t)\cap S(\omega^*,t).$  Therefore the only perfect code 
possible when $n \le 2t + 1$ is a single element code, i.e. a trivial code. 
Consolidating all previous results, we are now able to prove 
Theorem \ref{nonexistence}. \\

\textit{Proof of Theorem \ref{nonexistence}:} 
First, by Lemmas \ref{1-error code}, \ref{2-error code}, and \ref{3-error code}, 
there do not exist any nontrivial perfect $t$-error correcting 
permutation codes for $t \in \{1,2,3\}$.  
 Next note that
 $F(n,r)$ increases as $n$ increases,
 and thus by numerical results (see Table \ref{nonfeasible}), 
for all $t \in \{4,5,6\}$ the overlapping condition is satisfied
 whenever $n \ge 2t + 2$.  Therefore 
 by Lemma \ref{perfect inequality}, and Remark \ref{nlt2t}, 
 there are no nontrivial perfect $t$-error correcting permutation codes 
 for $t \in \{4,5,6\}$.
 Finally, by Lemmas \ref{perfect inequality}, 
 \ref{big lemma}, and Remark \ref{nlt2t}, 
 there are no nontrivial 
 perfect $r$-error correcting permutation codes for $t > 6$.
\hfill $\square$

\section{Multipermutation Ulam Sphere Size and Duplication Sets}\label{upper}
Thus far we have focused primarily on permutations, but 
we wish to extend the discussion to multipermutations. 
With both permutations and multipermutations, the number of possible 
messages is limited by the number of distinguishable relative 
rankings in the physical scheme.  However, multipermutations 
may significantly increase the total possible messages 
compared to ordinary permutations, as observed in \cite{Farnoudtwo}.
For example, if only $k$ different charge levels are utilized at 
a given time, then permutations of length $k$ can be stored.  
Hence, in $r$ blocks of length $k,$ one may store $(k!)^r$ 
potential messages.  On the other hand, if one uses $r$-regular 
multipermutations in the same set of blocks, then 
$(kr)!/(r!)^k$ potential messages are possible.

The $r$-regular multipermutation Ulam sphere sizes play an 
important role in understanding the
potential code size for MPC$_\circ(n,r,d)$'s. 
For example, the well-known sphere-packing bounds 
and Gilbert-Varshamov type bounds rely on 
calculating, or at least bounding sphere sizes.  
In this section we analyze how to calculate 
 $r$-regular multipermutation Ulam sphere sizes, 
providing an answer to the third of four main questions 
addressed in this paper. 
Recall that a partial answer to this third question was 
given in Theorem \ref{klambda}, but the theorem was 
applicable to the special case when $\mathbf{m}_e^r$ was 
chosen as the center. 
The next theorem provides a way to calculate 
radius $1$ spheres for any center using the 
concept of duplication sets. Notation used in the theorem 
is defined subsequently and the proof 
is given toward the end of the section.

\begin{theorem}\label{ballcalc}
Recall that $n,r \in \mathbb{Z}_{>0}$ and  $r\vert n$. 
Let $\mathbf{m}_\sigma^r \in 
\mathcal{M}_r(\mathbb{S}_n)$.
Then 
\[
|S(\mathbf{m}_\sigma^r,1)| 
= 1 + (n-1)^2  - |SD(\mathbf{m}_\sigma^r)| - |AD(\mathbf{m}_\sigma^r)|.
\]
\end{theorem}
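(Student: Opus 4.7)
The plan is to invoke Remark \ref{translocations}, which identifies $S(\mathbf{m}_\sigma^r, 1)$ with the set of multipermutations obtainable by applying a single translocation to $\mathbf{m}_\sigma^r$ (with $\phi(i,i) := e$ accounting for the center itself). The theorem then reduces to enumerating the distinct images of $\mathbf{m}_\sigma^r$ under the right action $\phi \mapsto \mathbf{m}_\sigma^r \cdot \phi$ as $\phi$ ranges over all translocations $\phi(i,j)$ with $i,j \in [n]$.

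First I would count the number of distinct translocations in $\mathbb{S}_n$. The $n^2$ index pairs $(i,j)$ yield the identity whenever $i = j$, and among the $n(n-1)$ pairs with $i \ne j$ the only coincidences are the $n-1$ equalities $\phi(i, i+1) = \phi(i+1, i)$ for $i \in [n-1]$, since each such pair describes the same adjacent transposition. This bookkeeping yields exactly $1 + (n-1)^2$ distinct translocations, matching the permutation Ulam sphere count of Proposition \ref{one sphere}. For $r=1$ the map $\phi \mapsto \sigma\phi$ is a bijection onto $S(\sigma,1)$; the entire content of the theorem lies in accounting for how this injectivity fails when $r>1$.

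Next I would classify the failures of injectivity of $\phi \mapsto \mathbf{m}_\sigma^r \cdot \phi$ arising from repeated multipermutation entries. Two phenomena cause distinct translocations to produce equal images: a non-identity translocation $\phi(i,j)$ may fix $\mathbf{m}_\sigma^r$ outright (which happens when the symbol $\mathbf{m}_\sigma^r(i)$ is shifted only past positions containing the same multipermutation value), and two distinct non-identity translocations may produce the same non-central image (which happens when nearby duplicated symbols make two geometrically different shifts indistinguishable at the multipermutation level). I would interpret $SD(\mathbf{m}_\sigma^r)$ as the set of translocations of the first type and $AD(\mathbf{m}_\sigma^r)$ as a minimal transversal for the second type, counting one excess translocation per equivalence class of non-central images. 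Subtracting $|SD(\mathbf{m}_\sigma^r)|$ (extras collapsing onto the center) and $|AD(\mathbf{m}_\sigma^r)|$ (extras among non-central fibers) from the $1 + (n-1)^2$ distinct translocations will give the claimed formula.

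The hard part will be verifying that every coincidence is captured by exactly one of the two sets, with no double counting and no omission. Concretely, I would need a combinatorial characterization of when $\mathbf{m}_\sigma^r \cdot \phi(i,j) = \mathbf{m}_\sigma^r \cdot \phi(i',j')$ in terms of runs of duplicated entries between the relevant positions, and I would verify that every such coincidence falls into exactly one of the ``stationary'' pattern encoded by $SD$ or the ``adjacent'' pattern encoded by $AD$. Once this case analysis is pinned down, the identity $|S(\mathbf{m}_\sigma^r, 1)| = 1 + (n-1)^2 - |SD(\mathbf{m}_\sigma^r)| - |AD(\mathbf{m}_\sigma^r)|$ is immediate by subtraction.
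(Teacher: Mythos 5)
Your high-level counting framework --- count the $1+(n-1)^2$ distinct translocations and subtract the collapses of the map $\phi \mapsto \mathbf{m}_\sigma^r\cdot\phi$ --- is the same skeleton the paper uses (via the set $T_n$ and Lemmas \ref{D} and \ref{D*}). But your identification of the two correction terms is wrong, and since the theorem is a statement about the specific sets $SD(\mathbf{m}_\sigma^r)$ and $AD(\mathbf{m}_\sigma^r)$ defined in the paper, this is not cosmetic. You interpret $SD(\mathbf{m}_\sigma^r)$ as the set of non-identity translocations that fix the center. It is not: $SD(\mathbf{m})$ is defined by the syntactic conditions $\mathbf{m}(i)=\mathbf{m}(j)$ or $\mathbf{m}(i)=\mathbf{m}(i-1)$, and most of its elements move the center to a genuinely different multipermutation. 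For instance, with $\mathbf{m}=(1,1,2,2)$ the translocation $\phi(2,4)$ lies in $SD(\mathbf{m})$ because $\mathbf{m}(2)=\mathbf{m}(1)$, yet $\mathbf{m}\cdot\phi(2,4)=(1,2,2,1)\ne\mathbf{m}$; it is removed because it duplicates $\phi(1,4)\notin SD(\mathbf{m})$, which produces the same non-central image. So $SD$ already absorbs the bulk of the ``two distinct translocations, same non-central image'' coincidences, and $AD$ is reserved only for the residual coincidences of the form $\mathbf{m}\cdot\phi(i,j)=\mathbf{m}\cdot\phi(j,k)$ arising from alternating substrings of length at least $4$. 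Your proposed dichotomy (center-fixing versus excess in non-central fibers) matches neither set, so the verification step you plan would fail.

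Beyond the misidentification, the entire substantive content of the theorem is deferred: you acknowledge that ``the hard part'' is showing every coincidence is captured exactly once, but that is precisely what the paper's Lemma \ref{D} (removing $SD(\mathbf{m}_\sigma^r)$ from $T_n$ loses no element of the sphere) and Lemma \ref{D*} (the map $\phi\mapsto\mathbf{m}_\sigma^r\cdot\phi$ is injective on $T_n\backslash D(\mathbf{m}_\sigma^r)$) establish through a lengthy case analysis, combined with the disjointness $SD(\mathbf{m}_\sigma^r)\cap AD(\mathbf{m}_\sigma^r)=\varnothing$ that is built into the definition of $AD$. As written, the proposal supplies neither these arguments nor a correct statement of what would need to be proved about $SD$ and $AD$.
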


In the permutation case, the Ulam metric is known to 
be left-invariant, i.e. given $\sigma, \pi, 
\tau \in \mathbb{S}_n,$ 
we have $\mathrm{d}_\circ(\sigma, \pi) = 
\mathrm{d}_\circ(\tau\sigma, \tau\pi)$
\cite{Farnoud}.
Left-invariance implies that permutation 
sphere sizes do not depend on the choice of center. 
Unfortunately, it is easily confirmed by counterexample 
that left invariance does not generally hold for the 
$r$-regular Ulam metric. 
Moreover, it is also easily confirmed that in the multipermutation 
Ulam sphere case, the choice of center has an impact on the 
size of the sphere, even when the radius remains unchanged
 (e.g. compare Proposition \ref{one sphere} 
to Proposition \ref{omegasphere} in the next section).
Hence we wish to consider spheres with various 
center multipermutations.

To aid with calculating such sphere sizes, 
we first find it convenient to introduce 
(as our own definition) 
the following subset of the set of translocations.  
\begin{definition}[$T_n$, unique set of translocations]
 Define $T_n := \{ \phi(i,j) \in \mathbb{S}_n \; : \; i - j \ne 1\}.$ 
 
 We call $T_n$ the \textbf{unique set of translocations}.
\end{definition}
In words, $T_n$ is the set of all translocations, except 
translocations of the form $\phi(i,i-1)$.  We exclude translocations 
of this form because they can be modeled by translocations of the form 
$\phi(i-1,i)$, and are therefore redundant.  
We claim that the set $T_n$ is precisely the set of translocations 
needed to obtain all unique permutations within the 
Ulam sphere of radius $1$ via multiplication 
(right action).  
Moreover, there is no redundancy in the set, 
meaning no smaller set of translocations 
yields the entire Ulam sphere of radius $1$ 
when multiplied with a given center permutation. 
These facts are stated in the next lemma.

\begin{lemma}
Let $\sigma \in \mathbb{S}_{n}$. Then 
$S(\sigma, 1) = \{\sigma \phi \in \mathbb{S}_n \;:\; \phi \in T_n\}$,
and $|T_n| = |S(\sigma, 1)|$. 
\end{lemma}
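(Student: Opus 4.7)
The plan is to prove the two claims simultaneously by showing the inclusion $\{\sigma\phi : \phi \in T_n\} \subseteq S(\sigma,1)$ in one direction, and then using a cardinality count to upgrade the inclusion to equality and simultaneously obtain $|T_n| = |S(\sigma,1)|$.

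First I would establish the inclusion. By Remark~\ref{translocations}, $\pi \in S(\sigma, 1)$ if and only if $\pi = \sigma$ or $\pi = \sigma \phi(i,j)$ for some $i,j \in [n]$. The identity $e$ lies in $T_n$ (arising from $\phi(i,i)$, which satisfies $i - i = 0 \ne 1$), so $\pi = \sigma$ is captured by $\sigma \cdot e$. The translocations excluded from $T_n$ are precisely those of the form $\phi(i, i-1)$ with $i \in \{2, \ldots, n\}$. Reading off the defining sequences in both cases shows that $\phi(i, i-1)$ is the transposition that swaps entries $i-1$ and $i$, and this same transposition is equally $\phi(i-1, i)$, which does lie in $T_n$ since $(i-1) - i = -1 \ne 1$. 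Hence every translocation $\phi(i,j)$ coincides, as a permutation, with some element of $T_n$, giving the inclusion.

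Next I would count $|T_n|$ by analyzing when two distinct labeled pairs yield the same permutation. Direct computation from the definition shows that $\phi(i,j)$ for $i < j$ is the cyclic permutation $(i, i+1, \ldots, j)$, while $\phi(i,j)$ for $i > j$ is its inverse cycle $(j, i, i-1, \ldots, j+1)$. Two cycles with identical support act identically only when the cycle is its own inverse, i.e.\ a transposition. Consequently the only collisions among the set $\{\phi(i,j) : i \ne j\}$ are the adjacent-transposition identifications $\phi(i, i+1) = \phi(i+1, i)$, and $T_n$ is obtained precisely by keeping one representative from each such colliding pair and adjoining $e$. The counts then break down as one element for $e$, $\binom{n}{2}$ elements for pairs with $i < j$, and $\binom{n-1}{2}$ elements for pairs with $i > j+1$, giving
\[
|T_n| \;=\; 1 + \binom{n}{2} + \binom{n-1}{2} \;=\; 1 + (n-1)^2.
\]

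Finally, since right-multiplication by $\sigma$ is a bijection of $\mathbb{S}_n$, the assignment $\phi \mapsto \sigma\phi$ is injective on $T_n$, so $|\{\sigma\phi : \phi \in T_n\}| = |T_n| = 1 + (n-1)^2$. By Proposition~\ref{one sphere} (specialized to $r = 1$), this is exactly $|S(\sigma,1)|$, so the inclusion from the first step must be an equality, and both assertions of the lemma follow. The main obstacle is the cycle-structure analysis in the counting step: I must be careful to verify that the only coincidences $\phi(i,j) = \phi(i',j')$ are the adjacent-transposition ones, since this is precisely what justifies the exclusion criterion $i - j \ne 1$ in the definition of $T_n$ and guarantees that the count is neither too large nor too small.
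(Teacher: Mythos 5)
Your proof is correct and follows essentially the same route as the paper: reduce the harder inclusion to the identity $\phi(i,i-1)=\phi(i-1,i)$, count $|T_n|=1+(n-1)^2$, and invoke Proposition \ref{one sphere}. The only difference is that you justify via the cycle structure of $\phi(i,j)$ that the adjacent-transposition identifications are the \emph{only} collisions among non-identity translocations — a point the paper's count of $T_n$ uses implicitly — and you obtain the set equality from one inclusion plus cardinality rather than proving both inclusions directly.
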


\begin{proof}
Let $\sigma \in \mathbb{S}_{n}$. We will first show that 
$S(\sigma, 1) = \{\sigma \phi \in \mathbb{S}_n \;:\; \phi \in T_n\}$. 
Note that 
\begin{eqnarray*}
S(\sigma, 1) \;\;\;=\;\;\;
\{\pi \in \mathbb{S}_n \;:\; \mathrm{d}_\circ(\sigma,\pi) \le 1\} 
\;\;\;=\;\;\; 
\{\sigma \phi(i,j) \in \mathbb{S}_n \; : \; i,j \in [n]\}.
\end{eqnarray*}
It is trivial that 
\begin{eqnarray*}
T_n 
\;\;\;=\;\;\;
 \{\phi(i,j) \in \mathbb{S}_n \;:\; i - j \ne 1\} 
\;\;\;\subseteq\;\;\;
 \{\phi(i,j) \in \mathbb{S}_n \;:\; i,j \in [n]\}.
\end{eqnarray*}
Therefore 
$\{\sigma\phi \in \mathbb{S}_n \;:\; \phi \in T_n\} \subseteq 
S(\sigma,1).$ 

To see why $S(\sigma, 1) \subseteq 
\{\sigma\phi \in \mathbb{S}_n \;:\; \phi \in T_n\}$, 
consider any $\sigma\phi(i,j) \in 
 \{\sigma \phi(i,j) \in \mathbb{S}_n \; : \; i,j \in [n]\} = 
S(\sigma,1).$  
If $i - j \ne 1,$ then $\phi(i,j) \in T_n,$ and thus 
$\sigma\phi(i,j) \in \{\sigma \phi \in \mathbb{S}_n \;:\; \phi \in T_n\}.$ 
Otherwise, if $i - j = 1,$ then $\sigma \phi(i,j) = \sigma \phi(j,i)$, and 
$i - j = 1 \text{ implies } j - i = -1 \ne 1,$ so $\phi(j,i) \in T_n.$  
Hence $\sigma\phi(i,j) = \sigma\phi(j,i) \in 
\{\sigma \phi \in \mathbb{S}_n \;:\; \phi \in T_n\}.$

Next we show that $|T_n| = |S(\sigma, 1)|$.
By Proposition \ref{one sphere}, $|S(\sigma,1)| = 1+ (n-1)^2$.  
On the other hand, $|T_n| = 
\#\{\phi(i,j) \in \mathbb{S}_n \;:\; i - j \ne 1\}.$  
If $i = 1,$ then there are $n$ values $j \in [n]$ 
such that $i - j \ne 1.$  Otherwise, if $i \in [n]$ but 
$i \ne 1,$ then there are $n-1$ values $j \in [n]$ such 
that $i - j \ne 1.$  However, for all $i, j \in [n],$ 
$\phi(i,i) = \phi(j,j) = e$ so that there are $n-1$ 
redundancies.  Therefore 
$|T_n| = n + (n-1)(n-1) - (n-1) = 1 + (n-1)^2.$    
\end{proof}

Although the Ulam sphere centered at $\sigma
 \in \mathbb{S}_n$ 
of radius $1$ can be characterized by all 
permutations obtainable by applying (multiplying on the right) 
a  translocation to $\sigma,$ the previous lemma shows 
that some translocations are redundant.  That is, there are 
translocations $\phi_1 \ne \phi_2$ such that 
$\sigma \phi_1 = \sigma \phi_2.$ In the case 
of permutations, the set 
$T_n$ has no such redundancies.  If 
$\phi_1, \phi_2 \in T_n,$ then 
$\sigma\phi_1 = \sigma\phi_2$ implies $\phi_1 = \phi_2.$  
However, in the case of multipermutations, 
the set $T_n$ can generally be shrunken 
further to exclude redundancies.

Given $\mathbf{m}_\sigma^r \in \mathcal{M}_r(\mathbb{S}_n)$,
the sphere 
$S(\mathbf{m}_\sigma^r, 1) 
= \{\mathbf{m}_\pi^r \in \mathcal{M}_r(\mathbb{S}_n) 
\;:\; 
\text{ there exist } \phi \text{ such that } \mathbf{m}_\sigma^r \cdot \phi = \mathbf{m}_\pi\} 
= \{\mathbf{m}_\sigma^r \cdot \phi \in \mathcal{M}_r(\mathbb{S}_n) 
\;:\; \phi \in T_n\}.$  
However, it is possible that there exist 
$\phi_1, \phi_2 \in T_n$ such that $\phi_1 \ne \phi_2,$ 
but $\mathbf{m}_\sigma^r \cdot \phi_1 = \mathbf{m}_\sigma^r \cdot \phi_2.$ 
In such an instance we may refer to either $\phi_1$ or $\phi_2$ 
as a \textbf{duplicate translocation} for $\mathbf{m}_\sigma^r$.  
If we remove all duplicate translocations for $\mathbf{m}_\sigma^r$ 
from $T_n$, then the resulting set will have the same 
cardinality as the $r$-regular Ulam sphere of radius $1$ centered 
at $\mathbf{m}_\sigma^r$. The next definition (our own) is a standard set of duplicate translocations. It is called standard because 
as long as $r\ne1$ it always exists and is of predictable size. 

\begin{definition}[$SD(\mathbf{m})$, standard duplication set]
Given 
a tuple $\mathbf{m} \in \mathbb{Z}^n$, 
define
\begin{align*}
SD(\mathbf{m}) := 
\{\phi(i,j) \in T_n\backslash\{e\} \;:\; 
\mathbf{m}(i) = \mathbf{m}(j)
\text{ or }
\mathbf{m}(i) = \mathbf{m}(i-1) 
\}
\end{align*}
We call $SD(\mathbf{m})$ the \textbf{standard
duplication set} for $\mathbf{m}$. 
\end{definition}

If we take an $r$-regular multipermutation $\mathbf{m}_\sigma^r,$ 
then removing 
the general set of duplications from $T_n$ 
equates to removing a set 
of duplicate translocations.  
These duplications come in two varieties.  
The first variety corresponds to the 
first condition of the $SD(\mathbf{m})$ definition, 
when $\mathbf{m}(i) = \mathbf{m}(j)$.
For example, if $\mathbf{m}_\sigma^2 = (1,3,2,2,3,1)$, then we have 
$\mathbf{m}_\sigma^2 \cdot \phi(1,5) = (3,2,2,3,1,1) = 
\mathbf{m}_\sigma^2 \cdot \phi(1,6),$ 
since $\mathbf{m}_\sigma^2(2) = 3 = \mathbf{m}_\sigma^2(4)$. 
This is because moving the first $1$ to the left or to the right 
of the last $1$ results in the same tuple. 
The second variety corresponds to the second condition of 
the of $SD(\mathbf{m})$ definition above, 
when $\mathbf{m}(i) = \mathbf{m}(i-1)$. 
For example, if 
$\mathbf{m}_\sigma^2 = (1,3,2,2,3,1)$ as before, then for all 
$j \in [6],$ we have 
$\mathbf{m}_\sigma^2 \cdot \phi(3,j) = \mathbf{m}_\sigma^2 \cdot \phi(4, j).$ 
This is because any translocation that deletes and inserts
 the second of the two adjacent $2$'s does not result in 
 a different tuple when compared to deleting and 
 inserting the first of the 
 two adjacent $2$'s.

\begin{lemma}\label{D}
Let 
$\mathbf{m}_\sigma^r \in \mathcal{M}_r(\mathbb{S}_n)$. 
Then 
$S(\mathbf{m}_\sigma^r, 1) = 
\{\mathbf{m}_\sigma^r\cdot \phi \in \mathcal{M}_r(\mathbb{S}_n) \; \vert\; 
\phi \in T_n\backslash SD(\mathbf{m}_\sigma^r)\}.$
\end{lemma}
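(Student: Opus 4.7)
My plan is to prove the two set inclusions separately. The inclusion $\supseteq$ is immediate: since $T_n \setminus SD(\mathbf{m}_\sigma^r) \subseteq T_n$, the discussion preceding the lemma already establishes $\{\mathbf{m}_\sigma^r \cdot \phi : \phi \in T_n \setminus SD(\mathbf{m}_\sigma^r)\} \subseteq \{\mathbf{m}_\sigma^r \cdot \phi : \phi \in T_n\} = S(\mathbf{m}_\sigma^r, 1)$. The heart of the proof is the reverse inclusion: I must show that for every $\mathbf{m}_\pi^r \in S(\mathbf{m}_\sigma^r, 1)$ there exists $\phi \in T_n \setminus SD(\mathbf{m}_\sigma^r)$ with $\mathbf{m}_\sigma^r \cdot \phi = \mathbf{m}_\pi^r$. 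To do this, I would fix a representative $\phi(i,j) \in T_n$ with $\mathbf{m}_\sigma^r \cdot \phi(i,j) = \mathbf{m}_\pi^r$, and then produce an equivalent representative outside $SD(\mathbf{m}_\sigma^r)$ whenever $\phi(i,j) \in SD(\mathbf{m}_\sigma^r)$.

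The technical core consists of two identities verified by direct computation using $(\sigma \phi)(k) = \sigma(\phi(k))$. First, if $\mathbf{m}_\sigma^r(i) = \mathbf{m}_\sigma^r(j)$ with $i < j$, then $\mathbf{m}_\sigma^r \cdot \phi(i,j) = \mathbf{m}_\sigma^r \cdot \phi(i, j-1)$, and symmetrically $\mathbf{m}_\sigma^r \cdot \phi(i,j) = \mathbf{m}_\sigma^r \cdot \phi(i, j+1)$ when $i > j$. Second, if $\mathbf{m}_\sigma^r(i) = \mathbf{m}_\sigma^r(i-1)$, then $\mathbf{m}_\sigma^r \cdot \phi(i,j) = \mathbf{m}_\sigma^r \cdot \phi(i-1, j)$. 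Both follow by expanding the resulting sequence entry by entry and observing that the pair of equal values at positions $\{i-1, i\}$ (or at positions $\{j-1, j\}$ after the insertion) makes the two representations indistinguishable as multipermutations.

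Given these identities, the reduction procedure for $\phi(i,j) \in SD(\mathbf{m}_\sigma^r)$ proceeds in two phases. First, apply the first identity iteratively to shrink $|i - j|$ until either the chain collapses to $e \in T_n \setminus SD(\mathbf{m}_\sigma^r)$ (which occurs when the equal-value block of $\mathbf{m}_\sigma^r$ containing $i$ spans the entire interval $[\min(i,j), \max(i,j)]$) or we reach $\phi(i, j^*)$ with $\mathbf{m}_\sigma^r(j^*) \ne \mathbf{m}_\sigma^r(i)$. Second, if the current $\phi$ satisfies $\mathbf{m}_\sigma^r(i) = \mathbf{m}_\sigma^r(i-1)$, apply the second identity iteratively to decrement $i$ until $\mathbf{m}_\sigma^r(i') \ne \mathbf{m}_\sigma^r(i' - 1)$ or $i' = 1$. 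Because the second phase preserves the value $\mathbf{m}_\sigma^r(i)$ throughout, it does not re-introduce condition 1 of $SD(\mathbf{m}_\sigma^r)$; both loops terminate in at most $n$ steps.

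The main obstacle is ensuring that the resulting translocation $\phi(i', j^*)$ lies in $T_n$, i.e., that $i' - j^* \ne 1$. If this constraint fails then $\phi(i', j^*) = \phi(j^* + 1, j^*)$, which equals the same group element as $\phi(j^*, j^* + 1) \in T_n$, so I may substitute the latter. Condition 1 of $SD(\mathbf{m}_\sigma^r)$ fails for this new representation because $\mathbf{m}_\sigma^r(j^*) \ne \mathbf{m}_\sigma^r(j^* + 1) = \mathbf{m}_\sigma^r(i')$, but condition 2 may now hold at $j^*$. In that case, one more round of second-identity reductions on the new first coordinate (decrementing $j^*$ toward the start of its block, or toward position $1$) produces the desired $\phi \in T_n \setminus SD(\mathbf{m}_\sigma^r)$. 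A careful invariant tracking the block structure of $\mathbf{m}_\sigma^r$ near $i'$ and $j^*$ guarantees that this boundary analysis closes without infinite loops.
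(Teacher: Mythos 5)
Your proposal is correct and follows essentially the same route as the paper's proof: both establish the nontrivial inclusion by showing that every duplicate $\phi(i,j)\in SD(\mathbf{m}_\sigma^r)$ has an equivalent representative in $T_n\setminus SD(\mathbf{m}_\sigma^r)$, obtained by sliding the insertion index off the run of values equal to $\mathbf{m}_\sigma^r(i)$ (the paper's $j^*$, your phase~1), sliding the deletion index to the start of its equal-value block (the paper's $i^*$, your phase~2), and collapsing to $e$ in the degenerate all-equal case. Your explicit handling of the adjacency boundary $i'-j^*=1$ via the identification $\phi(j^*+1,j^*)=\phi(j^*,j^*+1)$, followed by one more round of block-start reduction, is correct and makes explicit a point the paper leaves to its ``Case IB/IIB is similar'' remarks.
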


\begin{proof}
Assume 
$\mathbf{m}_\sigma^r \in \mathcal{M}_r(\mathbb{S}_n)$. 
First note that 
$S(\mathbf{m}_\sigma^r, 1) = 
\{\mathbf{m}_\sigma^r \cdot \phi \in \mathcal{M}_r(\mathbb{S}_n)
 \;:\; \phi \in T_n\}$. 
Hence it suffices to show that for all 
$\phi(i,j) \in SD(\mathbf{m}_\sigma^r),$ there exists some 

$i', j' \in [n]$ such that $\phi(i',j') \in 
T_n\backslash SD(\mathbf{m}_\sigma^r)$ and 
$\mathbf{m}_\sigma^r \cdot \phi(i,j) 
= \mathbf{m}_\sigma^r\cdot\phi(i',j').$ 
We proceed by dividing the proof into two 
main cases.  Case I is when 
$(\mathbf{m}_\sigma^r(i) \ne \mathbf{m}_\sigma^r(i-1)$ or $i=1$. 
Case II is when 
$(\mathbf{m}_\sigma^r(i) = \mathbf{m}_\sigma^r(i-1).$ 

Case I (when $(\mathbf{m}_\sigma^r(i) \ne \mathbf{m}_\sigma^r(i-1)$ or $i=1$)
can be split into two subcases: 
\begin{align*}
&& \text{Case IA: } &i < j &&\\
&& \text{Case IB: } &i > j. &&
\end{align*}

We can ignore the instance when $i = j$, since 
$\phi(i,j) \in SD(\mathbf{m}_\sigma^r)$ implies 
$i \ne j.$  
For case IA, if for all $p \in [i,j]$
(for $a,b \in \mathbb{Z}$ with 
$a<b,$ the notation $[a,b] := \{a, a+1, \dots, b\}$) we have 
$\mathbf{m}_\sigma^r(i) = \mathbf{m}_\sigma^r(p)$, then 
$\mathbf{m}_\sigma^r\cdot \phi(i,j) = \mathbf{m}_\sigma^r \cdot e.$ 
Thus setting $i' = j' = 1$ yields the desired result. 
Otherwise, if there exists $p \in [i,j]$ such that 
$\mathbf{m}_\sigma^r(i) \ne \mathbf{m}_\sigma^r(p),$ 
then let $j^* := j - \min \{k \in \mathbb{Z}_{>0} \; \vert \; 
\mathbf{m}_\sigma^r(i) \ne \mathbf{m}_\sigma^r (j - k)\}.$ 
Then $\phi(i,j^*) \in T_n\backslash SD(\mathbf{m}_\sigma^r)$ 
and $\mathbf{m}_\sigma^r\cdot\phi(i,j) 
= \mathbf{m}_\sigma^r\cdot\phi(i,j^*).$ 
Thus setting $i' = i$ and $j' = j^*$ yields the desired result. 
Case IB is similar to Case IA.  ~

Case II (when 
$\mathbf{m}_\sigma^r(i) = \mathbf{m}_\sigma^r(i-1)$), 
can also be divided into two subcases. 
\begin{align*}
&&\text{Case IIA: }& i < j &&\\
&&\text{Case IIB: }& i > j .&&
\end{align*}

As in Case I, we can ignore the instance 
when $i = j$. 
For Case IIA, if for all $p \in [i,j]$ we have 
$\mathbf{m}_\sigma^r(i) = \mathbf{m}_\sigma^r(p),$ 
then $\mathbf{m}_\sigma^r \cdot \phi(i,j) = \mathbf{m}_\sigma^r \cdot e,$
so setting $i = j = 1$ achieves the desired result. 
Otherwise, if there exists $p \in [i,j]$ such that 
$\mathbf{m}_\sigma^r(i) \ne \mathbf{m}_\sigma^r(p),$ 
then let $i^* := i - \min 
\{k \in \mathbb{Z}_{>0} \;:\; 
(\mathbf{m}_\sigma^r(i) \ne \mathbf{m}_\sigma^r(i - k -1)) \text{ or } (i - k =1)\}$.  
Then 
$\mathbf{m}_\sigma^r\cdot \phi(i,j) = \mathbf{m}_\sigma^r\cdot \phi(i^*, j)$ 
and either one of the following is true: (1)
$\phi(i^*,j) \notin D_{i^*}(\mathbf{m}_\sigma^r) 
 \text{ implies } \phi(i^*,j) \notin SD(\mathbf{m}_\sigma^r)$, so set 
 $i' = i^*$ and $j' = j$; or (2) 
 by Case IA there exist $i',j' \in [n]$ such that
  $\phi(i',j') \in T_n \backslash 
SD(\mathbf{m}_\sigma^r)$ and
 $\mathbf{m}_\sigma^r \cdot\phi(i',j') = \mathbf{m}_\sigma^r\cdot \phi(i^*,j) 
 = \mathbf{m}_\sigma^r \cdot\phi(i,j).$ 
 Case IIB is similar to Case IIA. 
\end{proof}

While Lemma \ref{D} shows 
that $SD(\mathbf{m}_\sigma^r)$ is a set of duplicate translocations for 
$\mathbf{m}_\sigma^r,$ 
we have not shown 
that $T_n\backslash SD(\mathbf{m}_\sigma^r)$ is the 
set of minimal size having the quality that 
$S(\mathbf{m}_\sigma^r,1) = 
\{\mathbf{m}_\sigma^r \cdot \phi \in \mathcal{M}_r(\mathbb{S}_n) \;:\; \phi \in 
T_n\backslash SD(\mathbf{m}_\sigma^r)\}$.  
In fact it is not minimal. 
In some instances it is possible to remove 
further duplicate translocations to reduce the set size.  
We will define another set of duplicate translocations, but 
a few preliminary definitions are first necessary.

We say that $\mathbf{m} \in \mathbb{Z}^n$ is 
\textbf{alternating} if for all odd integers $1 \le i \le n$, 
 $\mathbf{m}(i) = \mathbf{m}(1)$ and 
for all even integers $2\le i'\le n$, 
$\mathbf{m}(i') = \mathbf{m}(2)$ but 
$\mathbf{m}(1) \ne \mathbf{m}(2)$. 
In other words, any alternating tuple is 
of the form 
$(a,b,a,b,\dots,a,b)$ or $(a,b,a,b,\dots,a)$ 
where $a,b \in \mathbb{Z}$ and 
$a \ne b$.
Any singleton is also said to be alternating. 
Now for integers 
$1 \le i\le n$ and  $0\le k \le n-i$, the \textbf{substring} 
$\mathbf{m}[i, i + k]$ of $\mathbf{m}$ is defined as 
$\mathbf{m}[i, i + k ] := 
(\mathbf{m}(i), \mathbf{m}(i+1), \dots \mathbf{m}(i+k))$. 
Given a substring $\mathbf{m}[i,j]$ of $\mathbf{m}$, 
the \textbf{length}  of $\mathbf{m}[i,j]$, denoted by $|\mathbf{m}[i,j]|$,
is defined as $|\mathbf{m}[i,j]| := j-i+1$. 
As an example, if $\mathbf{m}' : = (1,2,2,4,2,4,3,1,3)$, 
then $\mathbf{m}'[3,6] = (2,4,2,4)$ is an alternating 
substring of $\mathbf{m}'$ of length $4$.

\begin{definition}[$AD(\mathbf{m})$, alternating duplication set]
Given $\mathbf{m} \in \mathbb{Z}^n$, define
 \begin{eqnarray*}
 AD(\mathbf{m}) := 
\{ &&\hspace{-.5cm}\phi(i,j) \in T_n \backslash SD(\mathbf{m})  \; : \;
 i < j \text { and there exists } \; k \in [i,j-2] \text{ such that } \\
&& \hspace{2.8cm} (\phi(j,k) \in T_n\backslash SD(\mathbf{m}))  \text{ and }
(\mathbf{m}\cdot \phi(i,j) = \mathbf{m}\cdot\phi(j,k)) \; \; \}.
\end{eqnarray*}

We call $AD(\mathbf{m})$ the 
\textbf{alternating duplication set} for $\mathbf{m}$ 
because it is only nonempty when 
$\mathbf{m}$ contains an alternating substring 
of length at least $4$. 
For each $i \in [n],$  also define 
$AD_i(\mathbf{m}) := 
 \{\phi(i,j) \in AD(\mathbf{m}) \;:\; j \in [n]\}.$ 
 Notice that $AD(\mathbf{m})
= \overset{n}{\underset{i =1}{\bigcup}} AD_i(\mathbf{m})$. 
\end{definition}
In the example of $\mathbf{m}' : = (1,2,2,4,2,4,3,1,3)$ above, 
$\mathbf{m}' \cdot \phi(2,6) = 
\mathbf{m}' \cdot \phi(6,3)$ and 
$\phi(2,6),\phi(6,3) \in T_9 \backslash SD(\mathbf{m}')$, 
implying that  $\phi(2,6) \in AD(\mathbf{m}')$. 
In fact, it can easily be shown that 
$AD(\mathbf{m}') = \{ \phi(2,6)\}$. 
In order to simplify the discussion of the alternating 
duplication set, we find the following lemma useful. 

\begin{lemma}\label{remark2}
Let $\mathbf{m}\in \mathbb{Z}^n$ and 
$i \in [n]$. 
Then 
$AD_i(\mathbf{m}) \ne \varnothing$ 
if and only if \\ 
1) $\mathbf{m}(i) \ne \mathbf{m}(i-1)$ \\
2) There exists $j \in [i+1,n]$ and $k \in [i,j-2]$ such that 

i) For all $p \in [i,k-1]$, $\mathbf{m}(p) = \mathbf{m}(p+1)$ 

ii) $\mathbf{m}[k,j]$ is alternating 

iii) $|\mathbf{m}[k,j]| \ge 4$. 
\end{lemma}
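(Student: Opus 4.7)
The plan is to establish both directions by a careful position-by-position comparison of the multipermutations $\mathbf{m}\cdot\phi(i,j)$ and $\mathbf{m}\cdot\phi(j,k)$. Recall that, for $i < j$, the translocation $\phi(i,j)$ acts on the substring $\mathbf{m}[i,j]$ as a one-step cyclic left shift, while for $k < j$, $\phi(j,k)$ acts on $\mathbf{m}[k,j]$ as a one-step cyclic right shift. Since $k \in [i, j-2]$ in the definition of $AD$, every modification lies in the interval $[i, j]$, which is the only place I need to inspect.

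For the forward direction, suppose $\phi(i,j) \in AD_i(\mathbf{m})$. The containment $\phi(i,j) \in T_n \backslash SD(\mathbf{m})$ immediately yields condition (1) (interpreting $\mathbf{m}(i) \neq \mathbf{m}(i-1)$ as vacuous when $i=1$) and also $\mathbf{m}(i) \neq \mathbf{m}(j)$. The definition of $AD$ supplies some $k \in [i,j-2]$ with $\phi(j,k) \in T_n \backslash SD(\mathbf{m})$ and $\mathbf{m}\cdot\phi(i,j) = \mathbf{m}\cdot\phi(j,k)$. Comparing the two tuples on $[i,k-1]$, where the first is shifted and the second is unchanged, forces $\mathbf{m}(p) = \mathbf{m}(p+1)$ for $p \in [i,k-1]$, giving (2i); comparing at position $k$ and on $[k+1,j-1]$ produces the period-two relation $\mathbf{m}(p+1) = \mathbf{m}(p-1)$ on $\mathbf{m}[k,j]$, which combined with $\mathbf{m}(j) \neq \mathbf{m}(j-1)$ (extracted from $\phi(j,k) \notin SD(\mathbf{m})$) gives alternation, yielding (2ii). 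Finally, $\mathbf{m}(i) = \mathbf{m}(k)$ (from (2i)) and $\mathbf{m}(i) \neq \mathbf{m}(j)$ force $\mathbf{m}(k) \neq \mathbf{m}(j)$, which together with the alternating pattern compels $|\mathbf{m}[k,j]|$ to be even; since $k \in [i,j-2]$ already gives length $\ge 3$, evenness forces length $\ge 4$, yielding (2iii).

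For the backward direction, suppose conditions (1) and (2) hold with witnesses $j$ and $k$. If $|\mathbf{m}[k,j]|$ is odd, replace $j$ by $j-1$; this preserves the alternating property of $\mathbf{m}[k,j]$, preserves (2i), and produces even length $\ge 4$. I would then verify three facts: (a) $\phi(i,j) \in T_n \backslash SD(\mathbf{m})$, using condition (1) together with the fact that even-length alternating substrings give $\mathbf{m}(i) = \mathbf{m}(k) \neq \mathbf{m}(j)$; (b) $\phi(j,k) \in T_n \backslash SD(\mathbf{m})$, using $\mathbf{m}(j) \neq \mathbf{m}(j-1)$ (consecutive entries of an alternating substring differ) together with $\mathbf{m}(j) \neq \mathbf{m}(k)$; and (c) $\mathbf{m}\cdot\phi(i,j) = \mathbf{m}\cdot\phi(j,k)$ by a direct check at every position in $[i,j]$, which succeeds precisely because the block $[i,k]$ is constant and the block $[k,j]$ is alternating of even length. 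This exhibits $\phi(i,j) \in AD_i(\mathbf{m})$.

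The main obstacle I expect is the index bookkeeping in the position-by-position comparisons, particularly handling the sub-cases $k = i$ (where (2i) is vacuous) and $k > i$ uniformly, and carefully tracking the parity so that both $\phi(i,j)$ and $\phi(j,k)$ remain outside of $SD(\mathbf{m})$. The parity issue is the most delicate point, since the lemma states only ``length $\ge 4$'' while the construction actually requires even length $\ge 4$, so I must argue that any witness of odd length can be trimmed to produce a valid even-length witness.
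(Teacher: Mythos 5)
Your proposal is correct and follows essentially the same route as the paper: a position-by-position comparison of $\mathbf{m}\cdot\phi(i,j)$ with $\mathbf{m}\cdot\phi(j,k)$, with the exclusions from $SD(\mathbf{m})$ supplying exactly the inequalities ($\mathbf{m}(i)\ne\mathbf{m}(i-1)$, $\mathbf{m}(i)\ne\mathbf{m}(j)$, $\mathbf{m}(j)\ne\mathbf{m}(j-1)$, $\mathbf{m}(j)\ne\mathbf{m}(k)$) needed in each direction. The only cosmetic differences are that the paper's backward direction uses the fixed length-$4$ witness $\phi(i,k+3)$ paired with $\phi(k+3,k)$ rather than trimming $j$ to even length, and its forward direction rules out $|\mathbf{m}[k,j]|=3$ by a direct contradiction at position $j$ rather than your parity argument.
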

\begin{proof}
Let $\mathbf{m}\in \mathbb{Z}^n$ and 
$i \in [n]$. We will first assume 1) and 2) in the lemma statement 
and show that $AD_i(\mathbf{m})$ is not empty. 
Suppose $\mathbf{m}(i) \ne \mathbf{m}(i-1)$, and 
that there exists $j \in [i+1,n]$ and $k \in [i,j-2]$ such 
that for all $p \in [i,k-1]$, we have 
$\mathbf{m}(p) = \mathbf{m}(p+1)$. Suppose also 
that $\mathbf{m}[k,j]$ is alternating with 
$|\mathbf{m}[k,j]| \ge 4.$ 

For ease of notation, let 
$a := \mathbf{m}(k) = \mathbf{m}(k+2)$ and 
$b := \mathbf{m}(k+1) = \mathbf{m}(k+3)$ 
so that $\mathbf{m}[k,k+3] = (a,b,a,b) \in \mathbb{Z}^4$.  
Then 
\begin{align*}
(\mathbf{m} \cdot \phi(i,k+3))[k,k+3] = & 
~ (\mathbf{m}\cdot \phi(k,k+3))[k,k+3] \\ 
= & ~ (b,a,b,a) \\
= & ~ (\mathbf{m}\cdot \phi(k+3,k))[k,k+3].
\end{align*} 
Moreover, for all $p \notin [k,k+3],$ we have 
$(\mathbf{m}\cdot\phi(i,k+3))(p) = \mathbf{m}(p) 
=  (\mathbf{m}\cdot\phi(k+3,k))(p)$. Therefore 
$\mathbf{m}\cdot\phi(i,k+3) = 
\mathbf{m}\cdot \phi(k+3,k)$.  Also notice 
that $\mathbf{m}(i) \ne \mathbf{m}(i-1)$ 
implies that $\mathbf{m}\cdot \phi(i,k+3) 
\notin SD(\mathbf{m})$.  
Hence $\phi(i,k+3) \in AD_i(\mathbf{m})$. 

We now prove the second half of the lemma. 
That is, we assume that $AD_i(\mathbf{m}) 
\ne \varnothing$ and then show that 
1) and 2) necessarily hold. 
Suppose that $AD_i(\mathbf{m})$ is nonempty. 
Then $\mathbf{m}(i) \ne \mathbf{m}(i-1)$, since 
otherwise there would not exist any 
$\phi(i,j) \in T_n \backslash SD(\mathbf{m})$. 

Let $j \in [i+1,n]$ and $k \in [i,j-2]$ such that 
$\phi(j,k) \in T_n \backslash SD(\mathbf{m}) 
\text{ and } \mathbf{m}\cdot\phi(i,j) = \mathbf{m}(j,k)$. 
Existence of such $j$, $k$, and $\phi(j,k)$ is 
guaranteed by definition of $AD_i(\mathbf{m})$ and 
the fact that $AD_i(\mathbf{m})$ 
was assumed to be nonempty. 
Then for all $p \in [i,k-1]$, we have 
$\mathbf{m}(p) = \mathbf{m}(p+1)$ and 
for all $p \in [k,j-2]$, we have 
$\mathbf{m}(p) = \mathbf{m}(p+2)$. 
Hence either $\mathbf{m}[k,j]$ is 
alternating, or else 
for all $p,q \in [k,j]$, we have 
$\mathbf{m}(p) = \mathbf{m}(q)$. 
However, the latter case is impossible, 
since it would imply that for all 
$p,q \in [i,j]$ that $\mathbf{m}(p) = \mathbf{m}(q)$, 
which would mean $\phi(j,k) \notin 
T_n \backslash SD(\mathbf{m})$, a contradiction. 
Therefore $\mathbf{m}[k,j]$ is alternating. 

It remains only to show that $|\mathbf{m}[k,j]| \ge 4$. 
Since $k \in [i,j-2],$ it must be the case that 
$|\mathbf{m}[k,j]| \ge 3$. However, if 
$|\mathbf{m}[k,j]| = 3$ (which occurs when $k = j-2$), 
then $(\mathbf{m}\cdot\phi(i,j))(j) 
= \mathbf{m}(i) = \mathbf{m}(k) \ne \mathbf{m}(k+1) = 
(\mathbf{m}\cdot\phi(j,k)(j)$, 
which implies that 
$\mathbf{m}\cdot\phi(i,j) \ne \mathbf{m}\cdot\phi(j,k)$, 
a contradiction. 
Hence $|\mathbf{m}[k,j]| \ge 4$. 
\end{proof}

One implication of Lemma \ref{remark2} is that 
there are only two possible forms for 
$\mathbf{m}[i,j]$ where $\phi(i,j) \in AD_i(\mathbf{m})$. 
The first possibility is that  
$\mathbf{m}[i,j]$ is an alternating substring 
of the form 
$(a,b,a,b,\dots,a,b)$ 
(here $a,b\in \mathbb{Z}$), 
so that $\mathbf{m}[i,j] \cdot \phi(i,j)$ 
is of the form $(b,a,b,a\dots,b,a)$. 
In this case, as long as $|\mathbf{m}[i,j]| \ge 4$, 
then setting $k=i$ implies that 
$k \in [i,j-2]$, that 
$\phi(j,k) \in T_n\backslash SD(\mathbf{m})$, and that 
$\mathbf{m}[i,j]\cdot\phi(i,j) = 
\mathbf{m}[i,j]\cdot\phi(j,k)$. 

The other possibility is that $\mathbf{m}[i,j]$ 
is of the form 
$(\underset{k}{\underbrace{a,a,a,\dots,a}},
\underset{n-k}{\underbrace{b,a,b,\dots,a,b}})$ 
(again $a,b\in \mathbb{Z}$), so that 
$\mathbf{m}[i,j]\cdot\phi(i,j)$ is of the form 
$(\underset{k-1}{\underbrace{a,\dots,a}},
\underset{n-k+1}{\underbrace{b,a,b,\dots,b,a}})$. 
Again in this case, as long as $|\mathbf{m}[i,j]| \ge 4,$ then 
$k \in [i,j-2]$ with 
$\phi(j,k) \in T_n\backslash SD(\mathbf{m})$ and 
$\mathbf{m}[i,j]\cdot\phi(i,j) = 
\mathbf{m}[i,j]\cdot\phi(j,k)$. 
To simplify the calculation of 
$|AD(\mathbf{m}_\sigma^r)|$, we wish to define a 
set of equal size that is easier to count. 
The two remarks that follow the definition are obvious, but are 
helpful in proving that the size of the new set 
is equal to the size of $AD(\mathbf{m})$.

\begin{definition}[$AD^*(\mathbf{m})$]
Given 
$\mathbf{m} \in \mathbb{Z}^n$,
define 
\begin{eqnarray*}
AD^*(\mathbf{m}) := \{\;  (i,j) \in [n]\times [n] &:&
(\mathbf{m}[i,j] \text{ is alternating}), \; (|\mathbf{m}[i,j]| \ge 4), 
\text{ and } (|\mathbf{m}[i,j]| \text{ is even})\;   \}.
 \end{eqnarray*}
For each $i \in [n]$, also define 
$AD_i^*(\mathbf{m}) := 
\{(i,j) \in AD^*(\mathbf{m}) \;:\; j \in [n]\}.$ 
Notice that $AD^*(\mathbf{m})
= \overset{n}{\underset{i =1}{\bigcup}} AD_i^*(\mathbf{m})$. 
\end{definition} 

\begin{remark}\label{rem2} 
If $\mathbf{m} \in \mathbb{Z}^n$ is alternating and $n$ 
is even, then 
$\mathbf{m}\cdot\phi(1,n) 
= \mathbf{m}\cdot\phi(n,1)$. 
\end{remark}

\begin{remark}\label{rem3} 
If $\mathbf{m} \in \mathbb{Z}^n$ is alternating, 
$n \ge 3$, and $n$ is odd, 
then $\mathbf{m}\cdot\phi(1,n) \ne \mathbf{m}\cdot\phi(n,1)$. 
\end{remark}

\begin{lemma}\label{E*}
Let  $\mathbf{m} \in \mathbb{Z}^n$. 
Then $|AD(\mathbf{m})| = |AD^*(\mathbf{m})|$
\end{lemma}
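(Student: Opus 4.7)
The plan is to exhibit an explicit bijection $\Psi: AD(\mathbf{m}) \to AD^*(\mathbf{m})$, from which the claim follows immediately. Given $\phi(i,j) \in AD(\mathbf{m})$, Lemma \ref{remark2} produces an index $k \in [i, j-2]$ with $\mathbf{m}[i,k]$ constant (say with common value $a$) and $\mathbf{m}[k,j]$ alternating of length $\ge 4$; this $k$ is uniquely determined by $i$, since $\mathbf{m}(k) \ne \mathbf{m}(k+1)$ forces $k$ to be the rightmost position of the maximal constant run beginning at $i$. I would set $\Psi(\phi(i,j)) := (k, j)$. The only nontrivial step is verifying that $|\mathbf{m}[k,j]|$ is even: the membership $\phi(i,j) \notin SD(\mathbf{m})$ forces $\mathbf{m}(i) \ne \mathbf{m}(j)$, and since $\mathbf{m}[k,j] = (a,b,a,b,\ldots)$ an odd length would make $\mathbf{m}(j) = a = \mathbf{m}(i)$, a contradiction. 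Hence $(k,j) \in AD^*(\mathbf{m})$.

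Conversely, given $(k,j) \in AD^*(\mathbf{m})$, I would let $i$ be the leftmost index of the maximal constant run that ends at $k$ and set $\Psi^{-1}(k,j) := \phi(i,j)$. The membership $\phi(i,j) \in T_n \backslash SD(\mathbf{m})$ is routine: $i \le k \le j-2$ rules out $i - j = 1$, maximality of the run gives $\mathbf{m}(i-1) \ne \mathbf{m}(i)$ (or $i = 1$), and the even-length alternation of $\mathbf{m}[k,j]$ gives $\mathbf{m}(j) \ne \mathbf{m}(i)$. To exhibit the required witness index, I would take $k' = k$ and split the equality $\mathbf{m}\cdot\phi(i,j) = \mathbf{m}\cdot\phi(j,k)$ into two pieces: first, $\mathbf{m}\cdot\phi(i,j) = \mathbf{m}\cdot\phi(k,j)$ by a direct coordinate-by-coordinate comparison using the constancy of $\mathbf{m}[i,k]$, since both translocations remove one $a$ from the block $\mathbf{m}[i,k]$ and leave the remaining string of $a$'s identical; and second, $\mathbf{m}\cdot\phi(k,j) = \mathbf{m}\cdot\phi(j,k)$, which is exactly Remark \ref{rem2} applied to the alternating even-length substring $\mathbf{m}[k,j]$ (both translocations fix all positions outside $[k,j]$, so the remark lifts from the substring to $\mathbf{m}$).

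Finally, $\Psi \circ \Psi^{-1}$ and $\Psi^{-1} \circ \Psi$ act as the identity because in each direction the pairing between $i$ and $k$ identifies the two endpoints of the same maximal constant run, leaving no ambiguity. The main technical obstacle is the first of the two equalities used in the verification of $\Psi^{-1}$, namely $\mathbf{m}\cdot\phi(i,j) = \mathbf{m}\cdot\phi(k,j)$; it is essentially the bookkeeping observation that deleting one $a$ from a run of equal $a$'s and reinserting it at the same target position $j$ yields the same tuple regardless of which $a$ in the run was removed, and it becomes transparent once the two outputs are written out coordinate by coordinate on $[i, j]$.
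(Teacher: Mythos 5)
Your proof is correct and is essentially the paper's own argument: your assignment $\phi(i,j)\mapsto(k,j)$, where $k$ is the right end of the maximal constant run starting at $i$, is exactly the correspondence the paper encodes via its maps $map$ and $map^*$, just packaged as one explicit bijection with inverse instead of two fiberwise cardinality inequalities. The only step you leave implicit is the routine check that the witness $\phi(j,k)$ itself lies in $T_n\setminus SD(\mathbf{m})$ (needed for membership in $AD(\mathbf{m})$ in the inverse direction), which follows since $j-k\ge 3$ and the even-length alternation gives $\mathbf{m}(j)\ne\mathbf{m}(j-1)$ and $\mathbf{m}(j)\ne\mathbf{m}(k)$.
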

\begin{proof}
Let $\mathbf{m} \in \mathbb{Z}^n$.
The idea of the proof is simple. 
Each element $\phi(i,j)\in AD(\mathbf{m})$ 
involves exactly one alternating sequence 
of length greater or equal to $4$, so the set
sizes must be equal. We formalize the 
argument by showing that 
$|AD(\mathbf{m})| \le |AD^*(\mathbf{m})|$ 
and then that 
$|AD^*(\mathbf{m})| \le |AD(\mathbf{m})|$.

To see why 
$|AD(\mathbf{m})| \le |AD^*(\mathbf{m})|$, 
we define a mapping 
$map : [n] \to [n]$, which maps 
index values either to the 
beginning of the nearest alternating 
subsequence to the right, or else 
to $n$. For all $i \in [n]$, let

 \begin{align*}
map(i) :=
\begin{cases}
  i + \min\{p \in \mathbb{Z}_{\ge 0} \;:\; 
    (\mathbf{m}(i) \ne \mathbf{m}(i+p+1))
     \text{ or }   (i+p = n) \}  
\hfill \;\;\;\;\; (\text{if } \mathbf{m}(i) \ne \mathbf{m}(i-1) \text{ or } i = 1) \\
 n
\hfill  \text{(otherwise)} 
  \end{cases} 
\end{align*}

Notice by definition of $map$, if 
$i,i' \in [n]$ such that $i \ne i'$, and 
if $\mathbf{m}(i) \ne \mathbf{m}(i-1)$ or $i=1$ 
and at the same time 
$\mathbf{m}(i') \ne \mathbf{m}(i'-1)$ or $i'=1$, 
then $map(i) \ne map(i')$. 

Now for each $i \in [n]$, 
if $\mathbf{m}(i) \ne \mathbf{m}(i-1)$ or 
$i=1$, then 
$|AD_i(\mathbf{m})|$ = $|AD_{map(i)}^*(\mathbf{m})|$ 
by Lemma \ref{remark2} and the two previous remarks.
Otherwise, if
$\mathbf{m}(i) = \mathbf{m}(i-1)$, 
then 
$|AD_i(\mathbf{m})| = |AD_{map(i)}^*(\mathbf{m})| = 0$. 
Therefore 
$|AD_i(\mathbf{m})| \le |AD_i^*(\mathbf{m})|$. 
This is true for all $i \in [n]$, so 
$|AD(\mathbf{m})| \le |AD^*(\mathbf{m})|$.

The argument to show that 
$|AD^*(\mathbf{m})| \le |AD(\mathbf{m})|$ 
is similar, except it uses the following 
function $map^* : [n] \to [n]$ instead of $map$.  
For all $i \in [n]$, let

 \begin{align*}
map^*(i) :=
\begin{cases}
  i - \min\{p \in \mathbb{Z}_{\ge 0} \;:\; 
    (\mathbf{m}(i) \ne \mathbf{m}(i-p-1))
     \text{ or }   (i-p = 1) \}  
\hfill \;\;\;\;\; (\text{if } \mathbf{m}(i) \ne \mathbf{m}(i-1) \text{ or } i = n) \\
 n
\hfill  \text{(otherwise)} 
  \end{cases} 
\end{align*}

\normalsize

\end{proof}

By definition, calculating $|AD^*(\mathbf{m})|$ equates 
to calculating the number of alternating substrings 
$\mathbf{m}[i,j]$ of $\mathbf{m}$ such that the length 
of the substring is both even and longer than 4. 
We can simplify the calculation of $AD(\mathbf{m})$ 
further by establishing a relation to the following quantity. 

\begin{definition}[$\psi(n)$, $\psi(\mathbf{x})$]
Define 
\[
\psi(n) :=  \left\lfloor 
\frac{(n-2)^2}{4}  
\right \rfloor, 
\text{ \; \; and for $\mathbf{x} \in \mathbb{Z}^*$, define \; \; }
\psi(\mathbf{x}) := \underset{i=1}{\overset{|\mathbf{x}|}{\sum}}\psi(\mathbf{x}(i)), 
\]
where $|\mathbf{x}|$ denotes the length of the tuple $\mathbf{x}$. 
\end{definition}
While we define both $\psi(n)$ and $\psi(\mathbf{x})$ here, we will not 
make use of $\psi(\mathbf{x})$ until the following section. 
The next lemma relates $\psi(n)$ to the calculation of $AD(\mathbf{m})$.

\begin{lemma}\label{oddeven}
Let $\mathbf{m}$ be an alternating string. Then 
\begin{align*}
 |AD(\mathbf{m})| = 
\psi(|\mathbf{m}|)
\end{align*}
\end{lemma}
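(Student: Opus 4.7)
The plan is to leverage Lemma \ref{E*}, which tells us $|AD(\mathbf{m})| = |AD^*(\mathbf{m})|$, so that the counting problem reduces to enumerating the set $AD^*(\mathbf{m})$. Since $\mathbf{m}$ is assumed alternating, every contiguous substring $\mathbf{m}[i,j]$ is automatically alternating. Hence the three defining conditions of $AD^*(\mathbf{m})$ collapse to a purely combinatorial condition on the pair of indices: the length $j-i+1$ must be at least $4$ and even. In particular, the numerical value $|AD^*(\mathbf{m})|$ depends only on $n = |\mathbf{m}|$, not on the actual symbols alternating.

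The next step is to count ordered pairs $(i,j) \in [n] \times [n]$ satisfying $j-i+1 \in \{4,6,8,\dots\}$ with $j \le n$. For each admissible even length $\ell$ with $4 \le \ell \le n$, the number of substrings of length $\ell$ is $n-\ell+1$. Therefore
\[
|AD^*(\mathbf{m})| = \sum_{\substack{\ell \text{ even} \\ 4 \le \ell \le n}} (n-\ell+1).
\]
I would then split into cases based on the parity of $n$. When $n=2k$, the admissible $\ell$ yield summands $n-3, n-5, \dots, 1$, which are the first $k-1 = (n-2)/2$ positive odd integers, giving sum $((n-2)/2)^2 = (n-2)^2/4$. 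When $n=2k+1$, the summands are $n-3, n-5, \dots, 2$, which are the first $(n-3)/2$ positive even integers, giving sum $(n-3)(n-1)/4 = ((n-2)^2-1)/4$. In both cases the value equals $\lfloor (n-2)^2/4 \rfloor = \psi(n) = \psi(|\mathbf{m}|)$, which is the desired formula.

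The proof is essentially a bookkeeping argument, and I do not anticipate any serious obstacle; the only delicate point is verifying that, under the alternating hypothesis, the substring conditions in the definition of $AD^*(\mathbf{m})$ really reduce to a condition on length alone (one must note that $\mathbf{m}[i,j]$ of length $\ge 2$ being alternating is equivalent, within a globally alternating string, to having length at most $n$, which is automatic). After that, the two parity computations and the match with $\lfloor (n-2)^2/4\rfloor$ are straightforward arithmetic that one can dispatch uniformly by writing $\lfloor (n-2)^2/4 \rfloor = \lfloor (n-2)/2 \rfloor \cdot \lceil (n-2)/2 \rceil$.
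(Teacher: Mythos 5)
Your proposal is correct and follows essentially the same route as the paper: reduce to $|AD^*(\mathbf{m})|$ via Lemma \ref{E*}, observe that for an alternating string the membership condition depends only on the length $j-i+1$ being even and at least $4$, count $n-\ell+1$ substrings for each admissible length $\ell$, and verify the two parity cases agree with $\psi(n)=\lfloor (n-2)^2/4\rfloor$. Your arithmetic in both cases matches the paper's.
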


\begin{proof}
Assume $\mathbf{m}$  is an alternating string and let $|\mathbf{m}| = n$. 
By Lemma \ref{E*}, 
$|AD(\mathbf{m})| = |AD^*(\mathbf{m})| =
|\overset{n}{\underset{i=1}{\bigcup}} AD_i^*(\mathbf{m})|$. 
Since $\mathbf{m}$ was assumed to be alternating, 
\begin{eqnarray*}
 |\overset{n}{\underset{i =1}{\bigcup}} AD_i^*(\mathbf{m})| 
&=& \#\{(i,j) \in [n] \times [n] \;:\; |\mathbf{m}[i,j]| \ge 4 
\text{ and } |\mathbf{m}[i,j]| \text{ is even} \} \\
&=& \#\{(i,j) \in [n]\times[n] \;:\; j-i+1 \in A\},
\end{eqnarray*}
where $A$ is the set of even integers between 
$4$ and $n$, i.e. 
$A := \{a \in [4,n] \;:\; a \text{ is even}\}$. 
For each $a \in A$ , we have 
\begin{eqnarray*} 
\#\{(i,j) \in [n] \times [n] \;:\; j-i+1 = a\} 
&=&\#\{i \in [n] \;:\; i \in [1,n-a+1]\} \\
&=& n - a + 1. 
\end{eqnarray*}

Therefore 
$|AD(\mathbf{m})| = \underset{a \in A}{\sum}(n-a+1)$. 
In the case that $n$ is even, then 
\begin{eqnarray*}
\underset{a \in A}{\sum}(n-a+1) 
\;\;\;=\;\;\; \underset{i=2}{\overset{n/2}{\sum}}(n-2i+1) 
\;\;\;=\;\;\; \left(\frac{n-2}{2} \right)^2
\;\;\;=\;\;\;
\psi(n).
\end{eqnarray*}

In the case that $n$ is odd, then 
\begin{eqnarray*}
\underset{a \in A}{\sum}(n-a+1) 
\;\;\;=\;\;\; \underset{i=2}{\overset{(n-1)/2}{\sum}}(n-2i+1) 
\;\;\;=\;\;\; \left(\frac{n-3}{2} \right) \left(\frac{n-1}{2}\right)
\;\;\;=\;\;\; \
\psi(n).
\end{eqnarray*}
\end{proof}

Notice that by Lemma \ref{oddeven}, 
it suffices to calculate 
$|AD(\mathbf{m})|$ for locally maximal length 
alternating substrings of $\mathbf{m}$. 
An alternating substring $\mathbf{m}[i,j]$ 
is of \textbf{locally maximal length} if and only if 
1) $\mathbf{m}[i-1]$ is not alternating or $i=1$; and 
2) $\mathbf{m}[i,j+1]$ is not alternating or $j = n$.

Finally, we define the general set of duplications, 
$D(\mathbf{m})$. The 
lemma that follows the definition also shows that 
removing the set $D(\mathbf{m}_\sigma^r)$ from $T_n$ 
removes all duplicate translocations 
associated with $\mathbf{m}_\sigma^r$. 
\begin{definition}[$D(\mathbf{m})$, duplication set]
Given $n \in \mathbb{Z}_{>0}$ and 
$\mathbf{m} \in \mathbb{Z}^n$, define
\[D(\mathbf{m}) := SD(\mathbf{m}) \cup AD(\mathbf{m}).\]
We call $D(\mathbf{m})$ the \textbf{duplication 
set} for $\mathbf{m}$. For each $i \in [n]$, we also define 
$D_i(\mathbf{m}) := 
\{ \phi(i,j) \in D(\mathbf{m}) \; : \; j \in [n]\}$.
\end{definition}

\begin{lemma}\label{D*}
Let 
$\mathbf{m}_\sigma \in \mathcal{M}_r(\mathbb{S}_n)$ and
 $\phi_1, \phi_2 \in T_n \backslash 
D(\mathbf{m}_\sigma^r) $. 
Then $\phi_1 = \phi_2$ if and only if  
$\mathbf{m}_\sigma^r\cdot \phi_1 = \mathbf{m}_\sigma^r \cdot \phi_2.$ 
\end{lemma}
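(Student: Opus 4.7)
My plan is to prove the forward direction trivially from the definition of the right action of $\mathbb{S}_n$ on $\mathcal{M}_r(\mathbb{S}_n)$, and to prove the converse by contrapositive. For the converse, I will assume $\phi_1 \ne \phi_2$ and $\mathbf{m}_\sigma^r \cdot \phi_1 = \mathbf{m}_\sigma^r \cdot \phi_2$ and show that at least one of $\phi_1, \phi_2$ must lie in $D(\mathbf{m}_\sigma^r)$, contradicting the hypothesis that both are in $T_n \backslash D(\mathbf{m}_\sigma^r)$. Writing $\phi_k = \phi(i_k, j_k)$ and setting $\mathbf{m} := \mathbf{m}_\sigma^r$, I would first dispose of the case where one of $\phi_1, \phi_2$ equals $e$: if $\phi_1 = e$ and $\phi_2 \ne e$, then $\mathbf{m}\cdot\phi_2 = \mathbf{m}$ forces every entry of $\mathbf{m}$ in the interval between $\min(i_2, j_2)$ and $\max(i_2, j_2)$ to be constant, giving $\mathbf{m}(i_2) = \mathbf{m}(j_2)$ and thus $\phi_2 \in SD(\mathbf{m})$.

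Assuming henceforth that $\phi_1, \phi_2 \ne e$, I would split into two cases based on the relative orientation of $i_k$ and $j_k$. In Case I, where both translocations have the same orientation (either both $i_k < j_k$ or both $i_k > j_k$), I would use the explicit description of $(\mathbf{m}\cdot\phi(i,j))(k)$ recalled in Section \ref{prelims} to compare $\mathbf{m}\cdot\phi_1$ and $\mathbf{m}\cdot\phi_2$ position-by-position. In every sub-case (distinguished by whether $i_1 = i_2$, $j_1 = j_2$, and by the relative order of the remaining indices), this comparison produces a run of consecutive equal entries of $\mathbf{m}$ meeting the index $i_k$ or $j_k$ of one of the translocations, witnessing either $\mathbf{m}(i_k) = \mathbf{m}(j_k)$ or $\mathbf{m}(i_k) = \mathbf{m}(i_k - 1)$ for at least one $k$, and hence placing that $\phi_k$ in $SD(\mathbf{m})$.

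In Case II the orientations differ, so by renaming I may assume $i_1 < j_1$ and $i_2 > j_2$, with affected regions $[i_1, j_1]$ and $[j_2, i_2]$. If these regions are disjoint, the same position-by-position argument forces both $\phi_1$ and $\phi_2$ into $SD(\mathbf{m})$. When they overlap, the identities $(\mathbf{m}\cdot\phi_1)(k) = (\mathbf{m}\cdot\phi_2)(k)$ either propagate a run of equal entries that again puts one of $\phi_1, \phi_2$ into $SD(\mathbf{m})$, or else force the single remaining configuration $i_2 = j_1$ and $j_2 \in [i_1, j_1 - 2]$ together with $\mathbf{m}[j_2, j_1]$ alternating of even length at least $4$. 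In that last configuration, taking $k = j_2$ in the definition of $AD(\mathbf{m})$ shows $\phi_1 \in AD(\mathbf{m})$, with $\phi(j_1, j_2) = \phi_2$ itself serving as the required witness in $T_n \backslash SD(\mathbf{m})$.

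The main obstacle will be the bookkeeping in Case II: the many possible relative positions of the two overlapping intervals $[i_1, j_1]$ and $[j_2, i_2]$ produce a long list of sub-cases, and in each one I must check that the constraints coming from $(\mathbf{m}\cdot\phi_1)(k) = (\mathbf{m}\cdot\phi_2)(k)$ match one of the two duplication patterns captured by $SD(\mathbf{m})$ and $AD(\mathbf{m})$, producing precisely the parity and length conditions built into the $AD$ definition. Lemma \ref{remark2}, which characterizes exactly when $AD_i(\mathbf{m})$ is nonempty in terms of adjacent-equal and alternating substrings, is what I expect to make the final $AD$-identification clean.
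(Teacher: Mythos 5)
Your proposal is correct, and it is the contrapositive-flipped version of the paper's argument: the paper assumes $\phi_1\ne\phi_2$ with both in $T_n\backslash D(\mathbf{m}_\sigma^r)$ and exhibits, in each case, a coordinate at which $\mathbf{m}_\sigma^r\cdot\phi_1$ and $\mathbf{m}_\sigma^r\cdot\phi_2$ differ, whereas you assume the products coincide and push the resulting runs of equal entries until one translocation is forced into $SD$ or $AD$. The case decompositions are genuinely organized differently: the paper splits on $i_1=i_2$ versus $i_1\ne i_2$ and on whether $\mathbf{m}_\sigma^r(i_k)=\mathbf{m}_\sigma^r(i_k-1)$, and only touches the duplication set in its last subcase ($j_1=i_2$); you split on the orientations of the two translocations and on how their affected intervals $[i_1,j_1]$ and $[j_2,i_2]$ meet. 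Both work, and your route has two advantages you may not have noticed. First, the equal-orientation cases and the disjoint mixed case each collapse after comparing at a single well-chosen position (e.g.\ at $p=j_2$ one gets $\mathbf{m}(j_2)=\mathbf{m}(i_2)$ directly), so the feared bookkeeping is light. Second, and more importantly, in the overlapping mixed case you do not need the alternating/parity analysis or Lemma \ref{remark2} at all: equality of the products together with $\phi_1,\phi_2\notin SD(\mathbf{m})$ forces $i_1\le j_2$, $i_2\le j_1$, and then comparison at $p=j_1$ forces $i_2=j_1$, while $\phi_2\in T_n$ gives $j_2\le i_2-2=j_1-2$; at that point $\phi_1\in AD(\mathbf{m})$ is immediate from the definition with $k=j_2$ and $\phi_2$ itself as witness (the alternating structure is a consequence via Lemma \ref{remark2}, not a hypothesis you must verify). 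So your plan succeeds, and its final $AD$-identification is, if anything, cleaner than you anticipate.
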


\begin{proof}
Assume 
$\mathbf{m}_\sigma \in \mathcal{M}_r(\mathbb{S}_n)$ and
 $\phi_1, \phi_2 \in T_n \backslash 
D(\mathbf{m}_\sigma^r) $. 
If $\phi_1 = \phi_2$ then
$\mathbf{m}_\sigma^r \cdot \phi_1 = \mathbf{m}_\sigma^r \cdot \phi_2$ 
trivially. It remains to prove that 
$\mathbf{m}_\sigma^r \cdot \phi_1 = \mathbf{m}_\sigma^r \cdot \phi_2 
\text{ implies } \phi_1 = \phi_2.$ We proceed by contrapositive. 
Suppose that $\phi_1 \ne \phi_2.$  
We want to show that 
$\mathbf{m}_\sigma^r\cdot \phi_1 \ne \mathbf{m}_\sigma^r\cdot\phi_2.$ 
Let $\phi_1 := \phi(i_1,j_1)$ and $\phi_2 := \phi(i_2, j_2)$.  
The remainder of the proof can be split into two main cases: 
Case I is if $i_1 = i_2$ and Case II is if $i_1 \ne i_2.$

Case I (when $i_1 = i_2$), can be further divided into two subcases: 
\begin{flalign*}
&& \text{Case IA: }&
\mathbf{m}_\sigma^r(i_1) = 
\mathbf{m}_\sigma^r(i_1 -1)&&\\
&&  \text{Case IB: } &
 \mathbf{m}_\sigma^r(i_1) \ne
\mathbf{m}_\sigma^r(i_1 -1). &&
\end{flalign*}
\normalsize

Case IA is easy to prove.  We have  
$D_{i_1}(\mathbf{m}_\sigma^r) = 
D_{i_2}(\mathbf{m}_\sigma^r) = 
\{\phi(i_1,j) \in T_n\backslash \{e\} \;:\; j \in [n]\}$, 
so $\phi_1 = e = \phi_2,$ a contradiction.  
For Case IB, we can first assume without loss of 
generality that $j_1 < j_2$ and then split 
into the following smaller subcases:  
\begin{align*}
&&\text{i) }& (j_1 < i_1) \text{ and } (j_2 > i_1)&& \\
&&\text{ii) }& (j_1 < i_1) \text{ and } (j_2 \le i_1)&& \\
&&\text{iii) }& (j_1 > i_1) \text{ and } (j_2 > i_1)&& \\ 
&&\text{iv) }& (j_1 > i_1) \text{ and } (j_2 \le i_1).&&
\end{align*}
\normalsize

However, subcase iv) is unnecessary since 
it was assumed that $j_1 < j_2,$ 
so $j_1 > i_1 \text{ implies } j_2 > j_1 > i_1.$  
Subcase ii) can also be reduced to 
$(j_1 < i_1) \text{ and } (j_2 < i_1)$ since 
$j_2 \ne i_2 = i_1.$  
Each of the remaining subcases is proven 
by noting that there is some element in the 
multipermutation $\mathbf{m}_\sigma^r \cdot \phi_1$ that is 
necessarily different from $\mathbf{m}_\sigma^r\cdot \phi_2.$ 
For example, in subcase i), we have 
$\mathbf{m}_\sigma^r\cdot \phi_1(j_1) = 
\mathbf{m}_\sigma^r(i_1) \ne 
\mathbf{m}_\sigma^r(j_1) = \mathbf{m}_\sigma^r\cdot\phi_2(j_1).$ 
Subcases ii) and iii) are solved similarly. \\

Case II (when $i_1 \ne i_2$) can be divided into three subcases: 
\begin{flalign*}
&&\text{Case}& \text{ IIA: }
(\mathbf{m}_\sigma^r(i_1) = \mathbf{m}_\sigma^r(i_1 -1) 
\text{ and } 
\mathbf{m}_\sigma^r(i_2) = \mathbf{m}_\sigma^r(i_2-1)), &&\\ 
&&\text{Case}& \text{ IIB: either } && \\
&& &(\mathbf{m}_\sigma^r(i_1) = \mathbf{m}_\sigma^r(i_1 -1) 
\text{ and } 
\mathbf{m}_\sigma^r(i_2) \ne \mathbf{m}_\sigma^r(i_2-1)) &&\\
&& & \text{ \hspace{-6.8mm} or } 
(\mathbf{m}_\sigma^r(i_1) \ne \mathbf{m}_\sigma^r(i_1 -1) 
\text{ and } 
\mathbf{m}_\sigma^r(i_2) = \mathbf{m}_\sigma^r(i_2-1)), &&\\ 
&&\text{Case}& \text{ IIC: }
(\mathbf{m}_\sigma^r(i_1) \ne \mathbf{m}_\sigma^r(i_1 -1) 
\text{ and } 
\mathbf{m}_\sigma^r(i_2) \ne \mathbf{m}_\sigma^r(i_2-1)).  &&
\end{flalign*}
\normalsize

Case IIA is easily solved by mimicking the proof of Case IA.  
Case IIB is also easily solved as follows.  
First, without loss of generality, 
we assume that 
$\mathbf{m}_\sigma^r(i_1) = \mathbf{m}_\sigma^r(i_1-1) \text{ and } 
\mathbf{m}_\sigma^r(i_2) \ne \mathbf{m}_\sigma^r(i_2-1).$  
Then $D_{i_1}(\mathbf{m}_\sigma^r) = 
\{\phi(i_1,j)\in T_n \backslash \{e\} \; : \; 
j \in [n]\},$ so $\phi_1 = e.$  
Therefore we have 
$\mathbf{m}_\sigma^r\cdot \phi_1(j_2) = \mathbf{m}_\sigma^r(j_2) 
\ne \mathbf{m}_\sigma^r(i_2) = \mathbf{m}_\sigma^r\cdot\phi_2(i_2-1).$ \\

Finally, for Case IIC, without loss of generality we may 
assume that $i_1 < i_2$ and then split into the following four subcases: 
\begin{align*}
&&\text{i) } &(j_1 < i_2) \text{ and } (j_2 \ge i_2) &&\\
&&\text{ii) } &(j_1 < i_2) \text{ and } (j_2 < i_2) &&\\
&&\text{iii) } &(j_1 \ge i_2) \text{ and } (j_2 \ge i_2) &&\\
&&\text{iv) } &(j_1 \ge i_2) \text{ and } (j_2 < i_2). && 
\end{align*}
\normalsize

However, since 
$\phi(i_2,j_2) \in T_n\backslash D(\mathbf{m}_\sigma^r)$ 
implies $i_2 \ne j_2,$ 
subcases i) and iii) can be reduced to 
$(j_1 < i_2) \text{ and } (j_2 > i_2)$ and 
$(j_1 \ge i_2) \text{ and } (j_2 > i_2)$ respectively.  
For subcase i), we have 
$\mathbf{m}_\sigma^r\cdot \phi_1(j_1) = \mathbf{m}_\sigma^r(i_1) 
\ne \mathbf{m}_\sigma^r(j_1) = \mathbf{m}_\sigma^r \cdot \phi_2(j_1).$ 
Subcases ii) and iii) are solved in a similar manner.  
For subcase iv), 
if $j_1 > i_2$, then $\mathbf{m}_\sigma^r \cdot \phi_1(j_1) = 
\mathbf{m}_\sigma^r (i_1) \ne \mathbf{m}_\sigma^r(j_1) = 
\mathbf{m}_\sigma^r \cdot \phi_2(j_1).$  
Otherwise, if $j_1 = i_2$, then 
$\phi_1 = \phi(i_1, i_2)$ and 
$\phi_1 = \phi(i_2, j_2).$ 
Thus if $\mathbf{m}_\sigma^r\cdot \phi_1 = 
\mathbf{m}_\sigma^r \cdot \phi_2$ then 
$\phi_1 \in D_{i_1}(\mathbf{m}_\sigma^r),$ 
which implies that 
$\phi_1 \notin T_n \backslash D(\mathbf{m}_\sigma^r)$, 
a contradiction. 
\end{proof}

Lemma \ref{D*} implies that we can 
calculate $r$-regular Ulam sphere sizes 
of radius $1$ whenever we can calculate the 
appropriate duplication set. This calculation 
can be simplified by noting that for a sequence 
$\mathbf{m} \in \mathbb{Z}^n$ that 
$SD(\mathbf{m}) \cap AD(\mathbf{m}) = \varnothing$ 
(by the definition of $AD(\mathbf{m}))$ 
and then decomposing the duplication set into these 
components. 
This idea is stated in Theorem \ref{ballcalc} 
at the beginning of this section, 
which like Theorem \ref{klambda}, 
is a partial answer the the third main question of this paper. 
We now have the machinery to prove Theorem \ref{ballcalc}.
\\~\\
\textit{proof of Theorem \ref{ballcalc}} \\
Let 
$\mathbf{m}_\sigma^r \in \mathcal{M}_r(\mathbb{S}_n)$.
By the definition of $D(\mathbf{m}_\sigma^r)$ and 
lemma \ref{D}, 
\begin{eqnarray*}
&&\{\mathbf{m}_\sigma^r \cdot \phi \in \mathcal{M}_r(\mathbb{S}_n) :
\phi \in T_n\backslash D(\mathbf{m}_\sigma^r)\} \\
&=&
\{\mathbf{m}_\sigma^r\cdot \phi \in \mathcal{M}_r(\mathbb{S}_n) \; :\; 
\phi \in T_n\backslash SD(\mathbf{m}_\sigma^r)\} \\
&=& S(\mathbf{m}_\sigma^r,1).
\end{eqnarray*}
This implies $|T_n\backslash D(\mathbf{m}_\sigma^r)| 
\;\ge\; |S(\mathbf{m}_\sigma^r,1)|.$
By lemma \ref{D*}, for $\phi_1,\phi_2 \in 
T_n\backslash D(\mathbf{m}_\sigma^r),$ 
if $\phi_1 \ne \phi_2,$ then $\mathbf{m}_\sigma^r\cdot \phi_1 
\ne \mathbf{m}_\sigma^r\cdot\phi_2.$ 
Hence we have $|T_n\backslash D(\mathbf{m}_\sigma^r)|
\;\le\; |S(\mathbf{m}_\sigma^r,1)|,$ which implies that 
$|T_n\backslash D(\mathbf{m}_\sigma^r)|
= |S(\mathbf{m}_\sigma^r,1)|.$
It remains to show that 
$|T_n\backslash D(\mathbf{m}_\sigma^r)| 
\;=\; 1 + (n-1)^2  - |SD(\mathbf{m}_\sigma^r)| - |AD(\mathbf{m}_\sigma^r)|.$ 
This is an immediate consequence of the fact that 
$|T_n| = 1 + (n-1)^2$ and 
$SD(\mathbf{m}_\sigma^r) \cap AD(\mathbf{m}_\sigma^r) = \varnothing$.
\hfill $\square$ \\

Theorem \ref{ballcalc} reduces the calculation of 
$|S(\mathbf{m}_\sigma^r,1)|$ to calculating 
$|SD(\mathbf{m}_\sigma^r)|$ and $|AD(\mathbf{m}_\sigma^r)|$. 
It is an easy matter to calculate $|SD(\mathbf{m}_\sigma^r)|$, 
since it is exactly equal to $(n-2)$ times the number of 
$i \in [n]$ such that 
$\mathbf{m}_\sigma^r(i) = \mathbf{m}_\sigma^r(i-1)$ plus 
$(r-1)$ times the number of $i \in [n]$ such that 
$\mathbf{m}_\sigma^r(i) \ne \mathbf{m}_\sigma^r(i-1)$ or $i = 1$. 
We also showed how to calculate $|AD(\mathbf{m})|$ earlier. 
The next example is an application of Theorem \ref{ballcalc} 

\begin{example}
Suppose 
$\mathbf{m}_\sigma^3 = (1,1,1,2,3,2,3,2,4,4,3,4)$. 
There are 3 values of $i \in [12]$ 
such that $\mathbf{m}_\sigma^3(i) = \mathbf{m}_\sigma^3(i-1)$, 
which implies that $|SD(\mathbf{m}_\sigma^3| = 
(3)(12-2) + (12-3)(3-1) = 48$. Meanwhile, by 
Lemmas \ref{E*} and \ref{oddeven}, 
$|AD(\mathbf{m}_\sigma^3)| = ((5-3)/2)((5-1)/2)) = 2$. 
By Theorem \ref{ballcalc}, 
 $|S(\mathbf{m}_\sigma^3), 1| = (12-1)^2 - 48 - 2 = 71$. 
\end{example}

\section{Min/Max Spheres and Code Size Bounds}\label{minmax} 

In this section we show choices of center achieving 
 minimum and maximum $r$-regular Ulam sphere sizes
for the radius $t=1$ case. 
As an application, we also state new upper and lower 
bounds on maximal code size in 
Lemmas \ref{upperbound}, \ref{perfect bound}, and 
\ref{G-V bound} (Lemmas \ref{bound1} and \ref{bound2} 
may also be included in this list, which are bounds in the 
special case when $n/r=2$). 
These bounds represent the final main contribution of this 
paper, answering the fourth main question. 

The binary case, when $n/r = 2$, presents unique challenges 
because of the nature of its alternating duplication sets. 
In particular, the choice of center multipermutation yielding 
the maximal sphere size in the non-binary cases does not 
yield the maximal size in the binary case. Thus we divide this 
section into parts -- the first subsection treating the non-binary 
case, and the remaining two subsections treating the binary case. 

\subsection{non-binary case} 
We begin by discussing the non-binary case in this subsection. 
The non-binary case is the general case where $n/r \ne 2$. 
Tight minimum and maximum values of sphere sizes 
are explicitly given. We then discuss 
resulting bounds on code size. First let us consider 
the $r$-regular Ulam sphere of minimal size. 
The first two lemmas presented in this section apply to all cases, 
both non-binary and binary, while the remaining 
results only apply when $n/r \ne 2$.

\begin{lemma}\label{esphere}
Recall that $n,r \in \mathbb{Z}_{>0}$ and $r\vert n$. 
Let $\mathbf{m}_\sigma^r \in \mathcal{M}_r(\mathbb{S}_n)$. 
Then 
\[|S(\mathbf{m}_e^r, 1)| \;\;\le\;\; |S(\mathbf{m}_\sigma^r,1)|.\]
\end{lemma}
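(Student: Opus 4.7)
The plan is to apply Theorem \ref{ballcalc} to reduce the desired inequality to a comparison of duplication-set sizes. Combining Theorem \ref{ballcalc} with Proposition \ref{one sphere} yields $|SD(\mathbf{m}_e^r)| = (n-1)(n - n/r)$ and $|AD(\mathbf{m}_e^r)| = 0$ (the latter because $\mathbf{m}_e^r$ is non-decreasing and so admits no alternating substring of length $\ge 4$). Applying Theorem \ref{ballcalc} again to $\mathbf{m}_\sigma^r$ recasts the desired inequality as
\begin{align*}
|SD(\mathbf{m}_\sigma^r)| + |AD(\mathbf{m}_\sigma^r)| \;\le\; (n-1)(n - n/r).
\end{align*}

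Let $c$ be the number of runs (maximal constant substrings) of $\mathbf{m}_\sigma^r$ and set $k := n/r$. A direct count from the definition of $SD$ gives $|SD(\mathbf{m}_\sigma^r)| = c(r-1) + (n-c)(n-2)$, and routine algebra then yields
\begin{align*}
(n-1)(n-k) - |SD(\mathbf{m}_\sigma^r)| \;=\; (c-k)(n-r-1),
\end{align*}
reducing the lemma to the single inequality $|AD(\mathbf{m}_\sigma^r)| \le (c-k)(n-r-1)$. Combining Lemma \ref{E*} with Lemma \ref{oddeven} applied substring by substring gives the decomposition $|AD(\mathbf{m}_\sigma^r)| = \sum_j \psi(m_j)$, where the sum ranges over the locally maximal alternating substrings of $\mathbf{m}_\sigma^r$ of length $m_j \ge 4$ (the contributions decompose because any two such substrings share at most one position and so contribute disjoint index pairs to $AD^*$). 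Since $\psi(m) \le (m-2)^2/4$ and $m_j \le n$, the bound $\psi(m_j) \le (n-r-1)(m_j - 2)$ holds whenever $n \ge (4r+2)/3$, which is automatic in every nontrivial case (the $k=1$ case is trivial because $\mathcal{M}_r(\mathbb{S}_n)$ is a singleton).

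The main obstacle is the remaining combinatorial inequality $\sum_j (m_j - 2) \le c - k$. My approach is to analyze each locally maximal alternating substring $M_j = \mathbf{m}_\sigma^r[i_0, i_0 + m_j - 1]$ separately. The $m_j - 1$ positions $i_0 + 1, \ldots, i_0 + m_j - 1$ are all run-starts, since $\mathbf{m}_\sigma^r(i) \ne \mathbf{m}_\sigma^r(i-1)$ throughout this range by alternation. The positions in the range taking the value $\mathbf{m}_\sigma^r(i_0)$ are automatically ``excess'' (not the first run of their value in $\mathbf{m}_\sigma^r$), since $\mathbf{m}_\sigma^r(i_0)$ already has a run at $i_0$, which lies outside the range; among the remaining positions (all taking the value $\mathbf{m}_\sigma^r(i_0+1)$), at most one can be essential as the first occurrence of its value. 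Hence the range contributes at least $m_j - 2$ excess run-starts. Since two distinct locally maximal alternating substrings share at most a single boundary position -- which is excluded from both of their ranges -- the excess run-starts counted for different $j$ lie at distinct indices, so summation is valid. Thus $\sum_j (m_j - 2) \le \sum_v (c_v - 1) = c - k$, where $c_v$ denotes the number of runs of value $v$. Combined with the $\psi$-bound this gives $|AD(\mathbf{m}_\sigma^r)| \le (n-r-1)(c-k)$, which closes the argument.
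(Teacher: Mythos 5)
Your proposal is correct and follows essentially the same route as the paper's proof: reduce via Theorem \ref{ballcalc} to maximizing $|SD|+|AD|$, observe $AD(\mathbf{m}_e^r)=\varnothing$ and that $\mathbf{m}_e^r$ maximizes $|SD|$, and then show per locally maximal alternating substring that any gain in $|AD|$ is outweighed by the corresponding loss in $|SD|$ (the paper bounds $\psi$ using $m_j\le 2r$ where you use $m_j\le n$). Your bookkeeping --- the exact identity $(n-1)(n-n/r)-|SD(\mathbf{m}_\sigma^r)|=(c-n/r)(n-r-1)$ and the excess-run-start count giving $\sum_j(m_j-2)\le c-n/r$ --- is simply a more explicit rendering of the paper's informal claim that each such substring accounts for at least $m_j-2$ fewer adjacent-equal positions.
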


\begin{proof}
Assume 
$\mathbf{m}_\sigma^r \in \mathcal{M}_r(\mathbb{S}_n)$. 
In the case that $n/r = 1,$ then 
$\mathbf{m}_e^r = e$ and $\mathbf{m}_\sigma^r = \sigma$, 
so that 
$|S(\mathbf{m}_e^r, 1)| = |S(\mathbf{m}_\sigma^r,1)|$. 
Therefore we may assume that $n/r \ge 2$. 
By Theorem \ref{ballcalc}, 
$\underset{\sigma \in \mathbb{S}_n}{\min}
(|S(\mathbf{m}_\sigma^r,1)|)  
= 1 + (n-1)^2 - 
\underset{\sigma \in \mathbb{S}_n}{\max} 
(|SD(\mathbf{m}_\sigma^r)| + |AD(\mathbf{m}_\sigma^r)|). 
$
Since $n/r \ge 2,$ we know that 
$n-2 > r-1,$ which implies that for all 
$\sigma \in \mathbb{S}_n$, that 
$|SD(\mathbf{m}_\sigma^r)|$ is maximized by 
maximizing the number of integers $i \in [n]$ such that 
$\mathbf{m}_\sigma^r(i) = \mathbf{m}_\sigma^r(i-1)$. 
This is accomplished by choosing $\sigma = e,$ and 
hence for all 
$\sigma \in \mathbb{S}_n$, we have 
$|SD(\mathbf{m}_e^r)| \ge |SD(\mathbf{m}_\sigma^r)|$.  

We next will show that for any increase in the size of 
$|AD(\mathbf{m}_\sigma^r)|$ compared to 
$|AD(\mathbf{m}_e^r)|$, that $|SD(\mathbf{m}_\sigma^r)|$ 
is decreased by a larger value compared to 
$|SD(\mathbf{m}_e^r)|$, so that 
$(|SD(\mathbf{m}_\sigma^r)| + |AD(\mathbf{m}_\sigma^r)|)$ is 
maximized when $\sigma = e$. 
By Lemmas \ref{E*} and \ref{oddeven}, 
$|AD(\mathbf{m}_\sigma^r)|$ is characterized by 
the lengths of its locally maximal alternating substrings.  
For every locally maximal alternating substring 
$\mathbf{m}_\sigma^r[a,a+k-1]$ (here $a,k \in \mathbb{Z}_{>0}$) of 
$\mathbf{m}_\sigma^r$ of length $k$, 
there are at least $k-2$ fewer instances where 
$\mathbf{m}_\sigma^r = \mathbf{m}_\sigma^r(i-1)$ or $i = 1$
when compared to instances where 
$\mathbf{m}_e^r(i) = \mathbf{m}_e^r(i-1)$. 
This is because for all $i \in [a+1,a+k-1]$, 
$\mathbf{m}_\sigma^r(i) \ne \mathbf{m}_\sigma^r(i-1)$ and $i+1 \ne 1$. 
Hence for each locally maximal alternating substring 
$\mathbf{m}_\sigma^r[a, a+k-1]$, then 
$|SD(\mathbf{m}_\sigma^r)|$ is decreased by at least 
$(k-2)(n-2 - (r-1)) \ge (k-2)(r-1)$ when compared to 
$|SD(\mathbf{m}_e^r)|$. 
Meanwhile, $|AD(\mathbf{m}_\sigma^r)|$ 
is increased by the same 
locally maximal alternating substring by at most 
$(k-2)((k-2)/4)$ by Lemma \ref{oddeven}. 
However, since $k \le 2r$, we have 
$(k-2)((k-2)/4) \le (k-2)(r-1)/2$, which is 
of course less than $(k-2)(r-1)$. 
\end{proof}

Lemma \ref{esphere}, along with 
Proposition \ref{one sphere} implies that 
the $r$-regular Ulam sphere size of radius $t=1$ 
is bounded (tightly) below by $(1 + (n-1)(n/r-1))$. 
This in turn implies the 
following sphere-packing type upper bound on 
any single error-correcting code.

\begin{lemma}\label{upperbound}
If $C$ is a single-error correcting
 $\mathsf{MPC}_\circ(n,r)$ code, 
then 
\[
|C|  \;\;\le\;\; \frac{n!}{(r!)^{n/r}\left(1+(n-1)(n/r-1)\right)}.
\]
\end{lemma}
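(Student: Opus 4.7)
The plan is to establish this as a standard sphere-packing (Hamming-type) upper bound, combining three ingredients already developed in the paper: the total size of the ambient space $\mathcal{M}_r(\mathbb{S}_n)$, the disjointness of radius-$1$ spheres around codewords of a single-error correcting code, and the tight lower bound on sphere sizes from Lemma \ref{esphere}.

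First, I would compute the cardinality of the ambient space. Since $\mathbf{m}_\sigma^r = \mathbf{m}_\pi^r$ exactly when $\sigma \equiv_r \pi$, and each equivalence class $R_r(\sigma)$ has cardinality $(r!)^{n/r}$ (one may freely permute the $r$ values in $\sigma$ corresponding to each block $\{(j-1)r+1,\dots,jr\}$), we have
\[
|\mathcal{M}_r(\mathbb{S}_n)| \;=\; \frac{n!}{(r!)^{n/r}}.
\]
Next, since $C$ is a single-error correcting $\mathsf{MPC}_\circ(n,r)$ code, its minimum $r$-regular Ulam distance is at least $3$, so by the usual triangle-inequality argument the radius-$1$ spheres $\{S(\mathbf{m}_c^r,1) : \mathbf{m}_c^r \in C\}$ are pairwise disjoint subsets of $\mathcal{M}_r(\mathbb{S}_n)$.

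Then I would apply Lemma \ref{esphere} together with Proposition \ref{one sphere} to bound each sphere size from below: for every codeword $\mathbf{m}_c^r \in C$,
\[
|S(\mathbf{m}_c^r,1)| \;\ge\; |S(\mathbf{m}_e^r,1)| \;=\; 1 + (n-1)(n/r-1).
\]
Summing over all codewords and using disjointness yields
\[
|C|\bigl(1+(n-1)(n/r-1)\bigr) \;\le\; \sum_{\mathbf{m}_c^r \in C} |S(\mathbf{m}_c^r,1)| \;\le\; |\mathcal{M}_r(\mathbb{S}_n)| \;=\; \frac{n!}{(r!)^{n/r}},
\]
from which the claimed bound follows by rearrangement.

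There is essentially no hard step here: all the serious work is absorbed into Lemma \ref{esphere} (identifying $\mathbf{m}_e^r$ as the minimizing center) and Proposition \ref{one sphere} (computing $|S(\mathbf{m}_e^r,1)|$ explicitly). The only point worth a sentence of care is the counting $|\mathcal{M}_r(\mathbb{S}_n)| = n!/(r!)^{n/r}$, which one could either quote as standard or justify via the orbit-counting observation above.
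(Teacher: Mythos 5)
Your proposal is correct and follows exactly the same route as the paper: a sphere-packing argument over $\mathcal{M}_r(\mathbb{S}_n)$ of size $n!/(r!)^{n/r}$, with the sphere-size lower bound supplied by Lemma \ref{esphere} and Proposition \ref{one sphere}. You simply spell out the disjointness and counting details that the paper compresses into the phrase ``standard sphere-packing bound argument.''
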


\begin{proof}
Let $C$ be a single-error correcting 
$\mathsf{MPC}_\circ(n,r)$ code. 
A standard sphere-packing bound argument 
implies that 
$|C| \le {(n!)}/({(r!)^{n/r}
(\underset{\sigma \in \mathbb{S}_n}{\min} 
|S(\mathbf{m}_\sigma^r,1)|)}$.
The remainder of the proof follows from 
Proposition \ref{one sphere} and Lemma \ref{esphere}.
\end{proof}

We have seen that $|S(\mathbf{m}_\sigma^r)|$ is minimized 
when $\sigma = e$. We now discuss the choice of center 
yielding the maximal sphere size. 
Let 
$\omega \in \mathbb{S}_n$ be defined as follows: 
$\omega(i) := ((i-1)\mod (n/r))r + \lceil ir/n \rceil $ 
and $\omega := [\omega(1), \omega(2), \dots \omega(n)].$ 
With this definition, for all $i \in [n],$ we have 
$\mathbf{m}_{\omega}^r(i) = i \mod (n/r)$ 
For example, if $r = 3$ and $n=12,$ then 
$\omega = [1, 4, 7, 10, 2, 5, 8, 11, 3, 6, 9, 12]$ and 
$\mathbf{m}_{\omega}^r = (1, 2, 3, 4, 1, 2, 3, 4, 1, 2, 3, 4).$ 
We can use Theorem \ref{ballcalc} to calculate 
$|S(\mathbf{m}_\omega^r,1)|$, and then 
show that this is the 
maximal $r$-regular Ulam sphere size 
(except for the case when $n/r = 2$).

\begin{lemma}\label{omegasphere}
Suppose 
$n/r \ne 2.$  Then 
\[|S(\mathbf{m}_\sigma^r,1)| 
\;\le\; |S(\mathbf{m}_{\omega}^r,1)|
\;=\;  1+(n-1)^2 - (r-1)n.\]
\end{lemma}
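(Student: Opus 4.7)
The plan is to apply Theorem \ref{ballcalc}, which expresses $|S(\mathbf{m}_\sigma^r,1)| = 1 + (n-1)^2 - |SD(\mathbf{m}_\sigma^r)| - |AD(\mathbf{m}_\sigma^r)|$. Maximizing the sphere size is therefore equivalent to minimizing the sum $|SD(\mathbf{m}_\sigma^r)| + |AD(\mathbf{m}_\sigma^r)|$, and my goal will be to show this minimum equals $(r-1)n$ and is attained at $\mathbf{m}_\omega^r$.

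First I will compute the two duplication sets at $\mathbf{m}_\omega^r$. The hypothesis $n/r \ne 2$ (with the degenerate case $n/r = 1$ being trivially satisfied since the whole space then contains a single element) effectively reduces to $n/r \ge 3$. Under this assumption, consecutive entries of $\mathbf{m}_\omega^r$ are always distinct because $\mathbf{m}_\omega^r$ cycles through $1, 2, \ldots, n/r$; using the explicit expression for $|SD|$ recorded immediately after Theorem \ref{ballcalc}, this yields $|SD(\mathbf{m}_\omega^r)| = (r-1)n$. Moreover, since any three consecutive entries of $\mathbf{m}_\omega^r$ are three distinct values, no alternating substring of length $4$ or greater exists, so Lemmas \ref{E*} and \ref{oddeven} give $|AD(\mathbf{m}_\omega^r)| = 0$. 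Substituting into Theorem \ref{ballcalc} yields the claimed formula $|S(\mathbf{m}_\omega^r,1)| = 1 + (n-1)^2 - (r-1)n$.

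For the upper bound I will show that for every $\sigma \in \mathbb{S}_n$ the inequality $|SD(\mathbf{m}_\sigma^r)| + |AD(\mathbf{m}_\sigma^r)| \ge (r-1)n$ holds. Letting $k_1 := \#\{i \in [n] : \mathbf{m}_\sigma^r(i) = \mathbf{m}_\sigma^r(i-1)\}$, the closed form for $|SD|$ rearranges as
\[
|SD(\mathbf{m}_\sigma^r)| = (n-2)k_1 + (r-1)(n-k_1) = (r-1)n + (n-r-1)k_1.
\]
Because $n/r \ge 3$ forces $n - r - 1 \ge 2r - 1 \ge 1 > 0$, the coefficient of $k_1$ is non-negative, so $|SD(\mathbf{m}_\sigma^r)| \ge (r-1)n$. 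Combining this with the trivial bound $|AD(\mathbf{m}_\sigma^r)| \ge 0$ and applying Theorem \ref{ballcalc} completes the argument.

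The only delicate point, and the only place where the non-binary hypothesis is genuinely used, is the vanishing of $|AD(\mathbf{m}_\omega^r)|$: in the binary setting $n/r = 2$ the multipermutation $\mathbf{m}_\omega^r = (1,2,1,2,\ldots,1,2)$ is itself alternating, contributing a positive term $\psi(n)$ to $|AD|$ and invalidating the calculation. The three-distinct-consecutive-values observation rules this out precisely when $n/r \ge 3$, after which the rest of the argument is routine arithmetic on the closed-form expression for $|SD|$.
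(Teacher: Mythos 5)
Your proposal is correct and follows essentially the same route as the paper: both apply Theorem \ref{ballcalc}, compute $|SD(\mathbf{m}_\omega^r)| = (r-1)n$ and $|AD(\mathbf{m}_\omega^r)| = 0$ from the fact that consecutive entries of $\mathbf{m}_\omega^r$ are distinct, and then bound the general case by minimizing $|SD|$ (your rewriting $|SD| = (r-1)n + (n-r-1)k_1$ just makes the paper's "minimum of $|SD|$ is $(r-1)n$ when $r-1 < n-2$" step explicit) together with $|AD| \ge 0$. No gaps.
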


\begin{proof}
Assume 
$n/r \ne 2.$ 
First notice that if $n/r = 1$ then for any $\sigma \in \mathbb{S}_n$ 
(including $\sigma = \omega$), the sphere
$S(\mathbf{m}_{\sigma}^r,1)$
contains exactly one 
element (the tuple of the form $(1,1,\dots,1)$). 
Hence the lemma holds trivially in this instance. 
Next, assume that $n/r > 2.$  
We will first prove that $|S(\mathbf{m}_{\omega}^r,1)| 
=  1+(n-1)^2 - (r-1)n$.

Since $n/r > 2,$ it is clear that 
 $\mathbf{m}_{\omega}^r$ contains no 
alternating subsequences of length greater than $2$. 
Thus by Lemma \ref{remark2}, 
$AD(\mathbf{m}_{\omega}^r) = \varnothing$ 
and therefore by Theorem \ref{ballcalc}, 
$|S(\mathbf{m}_{\omega}^r,1)| = 
1 + (n-1)^2 - |SD(\mathbf{m}_{\omega}^r)|.$ 
Since there does not exist $i \in [n]$ such that 
$\mathbf{m}_{\omega}^r (i) = \mathbf{m}_{\omega}^r(i-1)$, 
we have $|SD(\mathbf{m}_{\omega}^r)| = (r-1)n$, 
completing the proof of the first statement in the lemma. 

We now prove that 
$|S(\mathbf{m}_\sigma^r,1)| 
\le |S(\mathbf{m}_{\omega}^r,1)|$.
Recall that 
$|SD(\mathbf{m}_\sigma^r)|$ is equal to 
$(n-2)$ times the number of 
$i \in [n]$ such that 
$\mathbf{m}_\sigma^r(i) = \mathbf{m}_\sigma^r(i-1)$ plus 
$(r-1)$ times the number of $i \in [n]$ such that 
$\mathbf{m}_\sigma^r(i) \ne \mathbf{m}_\sigma^r(i-1)$. 
But $n/r > 2$ implies that $r-1 < n-2$, which implies 
$\underset{\mathbf{m}_\pi^r \in \mathcal{M}_r(\mathbb{S}_n)}{\min}
|SD(\mathbf{m}_\pi^r,1)| = (r-1)n$. Therefore 

\begin{eqnarray*}
|S(\mathbf{m}_\sigma^r,1)| &\le& 
 1 + (n-1)^2 - 
 \underset{\mathbf{m}_\pi^r \in \mathcal{M}_r(\mathbb{S}_n)}{\min}
 |SD(\mathbf{m}_\pi^r,1)|   - 
 \underset{\mathbf{m}_\pi^r \in \mathcal{M}_r(\mathbb{S}_n)}{\min}
 |AD(\mathbf{m}_\pi^r,1)| \\
 &\le& 1 + (n-1)^2 - 
 \underset{\mathbf{m}_\pi^r \in \mathcal{M}_r(\mathbb{S}_n)}{\min}
   |SD(\mathbf{m}_\pi^r,1)|  \\
 &=&\underset{}{} 1 + (n-1)^2 - (r-1)n \\
 &=&
|S(\mathbf{m}_{\omega}^r,1)|.
\end{eqnarray*}
\normalsize

\end{proof}

The upper bound of lemma \ref{omegasphere}
implies a lower bound on a perfect single-error correcting 
$\mathsf{MPC}(n,r)$.

\begin{lemma}\label{perfect bound}
Suppose $n/r \ne 2.$
If $C$ is a perfect single-error
 correcting $\mathsf{MPC}(n,r)$, then 
\[
\frac{n!}{(r!)^{n/r}  ((1+(n-1)^2) - (r-1)n)}
\;\;\;\le\;\;\;
|C|.
\]
\end{lemma}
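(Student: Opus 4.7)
The plan is to derive this lower bound via a straightforward sphere-packing counting argument, using Lemma \ref{omegasphere} to upper-bound the size of each sphere around a codeword. The key observation is that a perfect single-error correcting code has the property that the radius-$1$ spheres centered at codewords partition the entire ambient space $\mathcal{M}_r(\mathbb{S}_n)$.

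First I would compute the total size of the ambient space: the number of distinct $r$-regular multipermutations of $[n/r]$ (with each element appearing exactly $r$ times) is the multinomial coefficient $n!/(r!)^{n/r}$. Next, by definition of a perfect $1$-error correcting $\mathsf{MPC}(n,r)$, every $\mathbf{m}_\sigma^r \in \mathcal{M}_r(\mathbb{S}_n)$ lies in exactly one sphere $S(\mathbf{m}_c^r,1)$ with $\mathbf{m}_c^r \in \mathcal{M}_r(C)$, so
\[
\sum_{\mathbf{m}_c^r \in \mathcal{M}_r(C)} |S(\mathbf{m}_c^r,1)| \;=\; \frac{n!}{(r!)^{n/r}}.
\]

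Now I would invoke Lemma \ref{omegasphere}, which, under the hypothesis $n/r \ne 2$, gives the uniform upper bound $|S(\mathbf{m}_c^r,1)| \le 1+(n-1)^2 - (r-1)n$ for every center. Substituting this bound into the sum above yields
\[
|C| \cdot \bigl(1+(n-1)^2 - (r-1)n\bigr) \;\ge\; \frac{n!}{(r!)^{n/r}},
\]
and rearranging gives exactly the stated lower bound on $|C|$. (Implicit here is that $|\mathcal{M}_r(C)| = |C|$ since a code is defined as a subset of $\mathcal{M}_r(\mathbb{S}_n)$, so no collapsing of equivalence classes occurs.)

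There is no real obstacle in this argument: the heavy lifting has already been done in Lemma \ref{omegasphere}, which identified $\mathbf{m}_\omega^r$ as achieving the maximum radius-$1$ sphere size in the non-binary regime. The only point to be careful about is that the bound on $|S(\mathbf{m}_\sigma^r,1)|$ must hold uniformly over all possible centers $\mathbf{m}_c^r$ (not merely on average), which is exactly what Lemma \ref{omegasphere} provides. The hypothesis $n/r\ne 2$ is essential precisely because, as noted in the excerpt, the binary case requires a different (larger) sphere-size bound and will be handled separately in Lemma \ref{bound1}.
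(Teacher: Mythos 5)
Your proposal is correct and follows essentially the same argument as the paper: the spheres around codewords partition $\mathcal{M}_r(\mathbb{S}_n)$, whose size is $n!/(r!)^{n/r}$, and each sphere size is bounded above by $1+(n-1)^2-(r-1)n$ via Lemma \ref{omegasphere}, giving the stated bound after rearranging. No gaps.
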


\begin{proof}
Assume 
$n/r \ne 2$,
and that $C$ is a perfect single-error 
correcting $\mathsf{MPC}(n,r)$. 
Then $\underset{\mathbf{m}_c^r \in 
\mathcal{M}_r(C)}{\sum}
|S(\mathbf{m}_c^r,1)| = {(n!)}/({(r!)^{n/r}})$. 
This means 
\begin{align*}
\frac{n!}{(r!)^{n/r}}
\;\;\;\le \;\;\;
\left(|C| \right) \cdot
\left( \underset{\mathbf{m}_c^r \in \mathcal{M}_r(C)}{\max}
\left(|S(\mathbf{m}_c^r,1)|\right)\right),
\end{align*} 
which by Lemma \ref{omegasphere} 
implies the desired result. 
\end{proof}


A more general lower bound is easily
obtained by applying Lemma \ref{omegasphere} with a 
standard Gilbert-Varshamov bound argument.
While the lower bound of Lemma \ref{perfect bound} 
applies only to perfect codes that are $\mathsf{MPC}(n,r,d)$ with 
$d \ge 3$, the next lemma applies 
to any $\mathsf{MPC}(n,r,d)$, which may or may not be perfect. 

\begin{lemma}\label{G-V bound}
Suppose $n/r \ne 2$, 
and let $C\subseteq \mathcal{M}_r(\mathbb{S}_n)$ be an 
$\mathsf{MPC}_\circ(n,r,d)$ code of maximal cardinality. Then 
\[
 \frac{n!}{(r!)^{n/r} (1 + (n-1)^2 - (r-1)n )^{d-1}  }
\;\;\;\le\;\;\;
|C| 
\]
\end{lemma}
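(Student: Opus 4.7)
The plan is to mimic the classical Gilbert-Varshamov argument, adapted to the multipermutation setting, with Lemma \ref{omegasphere} supplying the necessary sphere size bound.

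First I would invoke the maximality of $C$: since $C$ is an $\mathsf{MPC}_\circ(n,r,d)$ of maximal cardinality, for every $\mathbf{m}_\sigma^r \in \mathcal{M}_r(\mathbb{S}_n)$ there must exist a codeword $\mathbf{m}_c^r \in \mathcal{M}_r(C)$ with $\mathrm{d}_\circ(\mathbf{m}_\sigma^r, \mathbf{m}_c^r) \le d-1$. (Otherwise $C \cup \{\mathbf{m}_\sigma^r\}$ would still have minimum distance at least $d$, contradicting maximality.) This means the radius $(d-1)$ spheres centered at codewords cover the whole space:
\[
\mathcal{M}_r(\mathbb{S}_n) \;=\; \bigcup_{\mathbf{m}_c^r \in \mathcal{M}_r(C)} S(\mathbf{m}_c^r, d-1),
\]
so $|\mathcal{M}_r(\mathbb{S}_n)| \le |C| \cdot \max_{\sigma \in \mathbb{S}_n} |S(\mathbf{m}_\sigma^r, d-1)|$. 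The left-hand side is $n!/(r!)^{n/r}$, since each multipermutation corresponds to an equivalence class of size $(r!)^{n/r}$ in $\mathbb{S}_n$.

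Next I would bound $|S(\mathbf{m}_\sigma^r, d-1)|$ from above using Lemma \ref{omegasphere}. The key observation is that a ball of radius $t$ can be written as a union of radius-$1$ balls around the points of a radius-$(t-1)$ ball, i.e.
\[
S(\mathbf{m}_\sigma^r, t) \;=\; \bigcup_{\mathbf{m}_\pi^r \in S(\mathbf{m}_\sigma^r, t-1)} S(\mathbf{m}_\pi^r, 1),
\]
using the triangle inequality for the $r$-regular Ulam metric. Since every radius-$1$ sphere in the non-binary case has size at most $M := 1 + (n-1)^2 - (r-1)n$ by Lemma \ref{omegasphere}, a straightforward induction on $t$ gives $|S(\mathbf{m}_\sigma^r, t)| \le M^t$. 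Applied to $t = d-1$, this yields $|S(\mathbf{m}_\sigma^r, d-1)| \le (1 + (n-1)^2 - (r-1)n)^{d-1}$.

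Combining the covering inequality with this sphere bound and rearranging gives
\[
\frac{n!}{(r!)^{n/r} (1 + (n-1)^2 - (r-1)n)^{d-1}} \;\le\; |C|,
\]
which is the desired conclusion. There is no real obstacle here; the argument is a routine Gilbert-Varshamov bound, and all the nontrivial work has already been done in Lemma \ref{omegasphere}. The only minor point to verify carefully is the iterated sphere bound, which follows cleanly from the triangle inequality and the uniform upper bound on radius-$1$ sphere sizes valid for every center in the non-binary regime $n/r \ne 2$.
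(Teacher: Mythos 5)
Your proof is correct and follows essentially the same route as the paper: maximality of $C$ forces the radius-$(d-1)$ spheres around codewords to cover $\mathcal{M}_r(\mathbb{S}_n)$, and the sphere sizes are bounded by the $(d-1)$st power of the maximal radius-$1$ sphere size from Lemma \ref{omegasphere}. The only difference is that you spell out the iterated covering argument giving $|S(\mathbf{m}_\sigma^r,t)| \le M^t$, a step the paper's proof asserts without elaboration, and your justification of it via the translocation characterization of the metric is sound.
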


\begin{proof}
Assume that 
$n/r \ne 2$, and that 
$C$ is an $\mathsf{MPC}_\circ(n,r,d)$ code
of maximal cardinality. 
For all $\mathbf{m}_\sigma^r \in \mathcal{M}_r(\mathbb{S}_n)$, 
there exists 
$c \in 
C$ such that 
$\mathrm{d}_\circ(\mathbf{m}_\sigma^r,c) \le d-1$. 
Otherwise, we could add 
$\mathbf{m}_\sigma^r \notin C$ to $C$ while maintaining 
a minimum distance of $d$, contradicting 
the assumption that $|C|$ is maximal. 

Therefore 
$\underset{\mathbf{m}_c^r \in \mathcal{M}_r(C)}
{\bigcup}S(\mathbf{m}_c^r,d-1)
= \mathcal{M}_r(\mathbf{S}_n)$.
This in turn implies that 
\begin{align*}
 \frac{n!}{(r!)^{n/r}}
\;\;\;\le \;\;\;
\underset{\mathbf{m}_c^r\in \mathcal{M}_r(C)}
{\sum}|S(\mathbf{m}_c^r,d-1)|.
\end{align*}
Of course, the right hand side of the 
above inequality is less than or equal to 
$\left(|C|\right) \cdot
\left( \underset{\mathbf{m}_c^r \in \mathcal{M}_r(C)}{\max}
|S(\mathbf{m}_c^r,d-1)|\right)$. 
Finally Lemma \ref{omegasphere} implies that 
\[
\underset{\mathbf{m}_c^r \in \mathcal{M}_r(C)}
{\max}(|S(\mathbf{m}_c^r,d-1)|)
\;\;\;\le\;\;\; 
(1+(n-1)^2 - (r-1)n)^{d-1} 
\] 
so the 
conclusion holds. 
\end{proof}

\subsection{binary case -- cut location maximizing sphere size}

In the previous subsection we were able to find center multipermutations 
whose sphere sizes were both minimal (Lemma \ref{esphere}) and maximal
(Lemma \ref{omegasphere}). These 
were used to provide bounds on the maximum code size
(Lemmas \ref{upperbound}, \ref{perfect bound}, \ref{G-V bound}). 
However, a complication arises that prevents Lemma \ref{omegasphere} 
from applying to the binary case, the case when $n/r = 2$.  
We say that $\mathbf{m}_\sigma^r \in \mathcal{M}_r(\mathbb{S}_n)$ 
is a \textbf{binary multipermutation} if and only if $n/r = 2$.
The next two subsections focus on determining the maximum sphere 
size for binary multipermutations. The current subsection addresses the question of cut location. The notion of cuts is defined in the following paragraphs. 
For the remainder of the paper we assume that $n$ is an even 
integer and that $n/r = 2$ (equivalently $r=n/2$).

Since we are assuming that $n/r = 2$, 
by definition $\mathbf{m}_\omega^r$ is 
an $n$-length alternating string, which results in the size of the 
alternating duplication set $AD(\mathbf{m}_\omega^r)$ 
increasing rapidly as $n$ increases.
This in turn results in $|S(\mathbf{m}_\omega^r,1)|$ 
 no longer being maximal 
 (in the sense of Lemma \ref{omegasphere}).  
 For example, if $n=12$, then we have $\mathbf{m}_\omega^r = 
(1,2,1,2,1,2,1,2,1,2,1,2,1,2)$, which would imply that 
$|AD(\mathbf{m}_\omega^r)| = \psi(12) 
= 25$. 

To compensate for this problem, it is best to ``cut" the original 
$\mathbf{m}_\omega^r$ into some number $c$ of locally maximal 
 alternating substrings. Whenever $\mathbf{m}$ is a tuple in 
two symbols, for example when $\mathbf{m} \in \{1,2\}^n$, we use the term 
\textbf{cut} to refer to any locally maximal alternating substring of 
$\mathbf{m}$. This language applies to binary multipermutations.
Considering the example above when $n=12$, we could instead 
take the binary multipermutation $(1,2,1,2,1,2,2,1,2,1,2,1)$, which 
has two cuts of length $6$, namely $(1,2,1,2,1,2)$ and $(2,1,2,1,2,1)$ 
as opposed to a single length $12$ cut in the original $\mathbf{m}_\omega^r$.
Notice here that the standard duplication 
set increases by $5$ but the new alternating duplication set 
size is now 
$\psi(6) + \psi(6) = 8$, 
a decrease in $17$. 

Intuitively, these cuts should be chosen so that each 
is as similar in length as possible in 
order to minimize the total size of the alternating duplication set. 
For example, 
$(\underset{\text{1st cut}}{\underbrace{1,2,1,2,1,2}},
\underset{\text{2nd cut}}{\underbrace{2,1,2,1,2,1}})$, which 
has an alternating duplication set of size $8$ is preferable to 
$(\underset{\text{1st cut}}{\underbrace{1,2}},
\underset{\text{2nd cut}}{\underbrace{2,1,2,1,2,1,2,1,2,1}})$, 
which has an alternating duplication set of size $16$. 
This idea is proven subsequently.  
Another question concerns 
the optimal number of such cuts, since each time a cut is introduced 
 the standard duplication set size necessarily increases. This 
question is addressed in the next subsection, and it turns out that 
having approximately $\sqrt{r}$ cuts minimizes total duplications and 
thus results in the maximum sphere size. 

To start this subsection, we will show that given a multipermutation 
with a fixed number $c$ of cuts, 
the alternating duplication set is minimized when these cut
lengths are as similar in length as possible. In order to simplify the 
argument, the following two lemmas reduce the discussion to the lengths of these alternating substrings. 

\begin{lemma}\label{equalonestwos}
Let 
$\mathbf{m} \in \{1,2\}^n$. 
Then there exists a binary multipermutation 
$\mathbf{m}_\sigma^r \in \mathcal{M}_r(\mathbb{S}_n)$ 
such that  $\mathbf{m}_\sigma^r = \mathbf{m}$ 
if and only if 
$\#\{ i \in [n] \;:\; \mathbf{m}(i) = 1\}
\;=\; 
\#\{ i \in [n] \;:\; \mathbf{m}(i) = 2\}$, 
i.e. 
the number of $1$'s and $2$'s of $\mathbf{m}$
are equal.
\end{lemma}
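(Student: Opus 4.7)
The plan is to handle the two directions of the biconditional separately, both following almost directly from the definition of $\mathbf{m}_\sigma^r$ specialized to the binary case $r = n/2$, so that $[n/r] = \{1,2\}$ and the condition ``$(j-1)r+1 \le \sigma(i) \le jr$'' collapses to ``$\sigma(i) \in \{1,\dots,n/2\}$'' when $j=1$ and ``$\sigma(i) \in \{n/2+1,\dots,n\}$'' when $j=2$.

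For the forward direction, suppose $\mathbf{m} = \mathbf{m}_\sigma^r$ for some $\sigma \in \mathbb{S}_n$. Since $\sigma$ is a bijection on $[n]$, exactly $n/2 = r$ values of $\sigma(i)$ lie in $\{1,\dots,n/2\}$ and exactly $n/2 = r$ lie in $\{n/2+1,\dots,n\}$. By the definition of $\mathbf{m}_\sigma^r$, these correspond respectively to the indices $i$ at which $\mathbf{m}(i) = 1$ and $\mathbf{m}(i) = 2$. Hence the counts of $1$'s and $2$'s in $\mathbf{m}$ are equal (both equal to $r$).

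For the backward direction, suppose $\mathbf{m} \in \{1,2\}^n$ has equally many $1$'s and $2$'s, necessarily $n/2 = r$ of each. I will construct an explicit $\sigma \in \mathbb{S}_n$ realizing $\mathbf{m}$. Let $i_1 < i_2 < \dots < i_r$ be the indices at which $\mathbf{m}(i) = 1$ and let $i_{r+1} < i_{r+2} < \dots < i_n$ be the indices at which $\mathbf{m}(i) = 2$; together these list all of $[n]$. Define $\sigma \in \mathbb{S}_n$ by $\sigma(i_k) := k$ for each $k \in [n]$. Since the $i_k$'s enumerate $[n]$ without repetition, $\sigma$ is a bijection on $[n]$. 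For $k \in [r]$, $\sigma(i_k) = k \in \{1,\dots,r\}$, so $\mathbf{m}_\sigma^r(i_k) = 1 = \mathbf{m}(i_k)$; for $k \in [r+1,n]$, $\sigma(i_k) = k \in \{r+1,\dots,n\}$, so $\mathbf{m}_\sigma^r(i_k) = 2 = \mathbf{m}(i_k)$. Thus $\mathbf{m}_\sigma^r = \mathbf{m}$.

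There is no real obstacle in this proof; it is essentially a bookkeeping argument that unpacks the definition of $\mathbf{m}_\sigma^r$ in the binary case. The only point requiring minor care is verifying that the constructed $\sigma$ is genuinely a permutation (i.e., that the indexing $i_1,\dots,i_n$ is a valid enumeration of $[n]$), which is immediate from the hypothesis that the counts of $1$'s and $2$'s sum to $n$ with each equal to $r = n/2$.
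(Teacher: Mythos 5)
Your proof is correct and follows essentially the same route as the paper's: the forward direction unpacks the definition of $\mathbf{m}_\sigma^r$ (each of the two blocks $\{1,\dots,r\}$ and $\{r+1,\dots,n\}$ receives exactly $r$ values of the bijection $\sigma$), and the backward direction uses the identical construction of enumerating the positions of the $1$'s as $i_1,\dots,i_r$ and of the $2$'s as $i_{r+1},\dots,i_n$ and setting $\sigma(i_j):=j$. No gaps; your version is if anything slightly more explicit than the paper's in verifying that the constructed $\sigma$ is a permutation and that $\mathbf{m}_\sigma^r=\mathbf{m}$.
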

\begin{proof}
Assume $\mathbf{m} \in \{1,2\}^n$.  First suppose that 
there exists a binary multipermutation 
$\mathbf{m}_\sigma^r \in \mathbb{S}_n$ such that 
$\mathbf{m}_\sigma^r = \mathbf{m}$. Then 
by the definition of binary multipermutations, 
$\#\{ i \in [n] \;:\; \mathbf{m}(i) = 1\} = r = 
\#\{ i \in [n] \;:\; \mathbf{m}(i) = 2\}$, completing the first direction. 

For the second direction of the proof, suppose that 
$\#\{ i \in [n] \;:\; \mathbf{m}(i) = 1\}
= \#\{ i \in [n] \;:\; \mathbf{m}(i) = 1\} = n/2 = r$. 
Then we can construct a binary multipermutation 
$\mathbf{m}_\sigma^r$ with the property that 
$\mathbf{m}_\sigma^r = \mathbf{m}$ as follows: 
Define $\{i_1, i_2, \dots, i_r\} := \{i \in [n] \;:\; \mathbf{m}(i) = 1\}$ and 
$\{i_{r+1}, i_{r+2}, \dots i_n\} := \{i \in [n] \;:\; \mathbf{m}(i) = 2 \}$. 
For all $j \in [n]$, set $\sigma(i_j) := j$ and define 
$\sigma := (\sigma(1), \sigma(2), \dots \sigma(n))$.  
Then $\mathbf{m}_\sigma^r = \mathbf{m}$. 
\end{proof}

\begin{lemma}\label{gsequence}
  Let 
  $c \in [n-1]$, and 
  $(q(1), q(2), \dots, q(c)) \in \mathbb{Z}_{>0}^c$ such that 
  $\sum_{i=1}^c q(i) = n$. 
  Then there exists $i \in [c]$ such that $q(i)$ is even if and only if 
  there exists a binary multipermutation 
  $\mathbf{m}_\sigma^r \in \mathcal{M}_r(\mathbb{S}_n)$ such that 
  \[
  \mathbf{m}_\sigma^r = (
  \underset{q(i_1)} {\underbrace{\mathbf{m}_\sigma^r[a_1,b_1]}}, 
  \underset{q(i_2)} {\underbrace{\mathbf{m}_\sigma^r[a_2,b_2]}}, 
  \dots,
   \underset{q(i_c)} {\underbrace{\mathbf{m}_\sigma^r[a_c,b_c]}}
  ),
  \]
  where for all 
  $i \in [c]$, $a_i, b_i \in [n]$, and $a_i \le b_i$ such that
  $\mathbf{m}_\sigma^r[a_i,b_i]$ is a cut (locally maximal alternating substring). 
\end{lemma}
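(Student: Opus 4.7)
The plan is to reduce the question to a counting statement about the starting symbols of the cuts and then appeal to Lemma \ref{equalonestwos}. The key structural observation is that any alternating substring of length $q$ over $\{1,2\}$ starting with symbol $s$ ends with $s$ when $q$ is odd and with $3-s$ when $q$ is even. Moreover, for two consecutive cuts to both be locally maximal the first symbol of the later cut must equal the last symbol of the earlier one, since otherwise the two substrings would combine into one longer alternating substring. So if $s_k$ denotes the starting symbol of the $k$-th cut in the concatenation and $q_k$ its length, then $s_{k+1}=s_k$ when $q_k$ is odd and $s_{k+1}=3-s_k$ when $q_k$ is even, and each cut of length $q$ contributes $\lceil q/2\rceil$ copies of $s_k$ and $\lfloor q/2\rfloor$ copies of the other symbol.

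For the ``only if'' direction I would assume every $q(i)$ is odd. The recurrence above then forces $s_k=s_1$ for every $k$, and each odd cut contributes a net excess of exactly one in favor of $s_1$. Summing over $c\ge 1$ cuts gives an excess of $c\ge 1$ of symbol $s_1$ over the other symbol, so the two symbol counts cannot both equal $r$; by Lemma \ref{equalonestwos} no binary multipermutation with this cut structure exists.

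For the ``if'' direction, let $E\ge 1$ be the number of even $q(i)$'s and $O$ the number of odd ones. Since $n=\sum_i q(i)=2r$ is even and the even $q(i)$'s contribute an even value to this sum, the sum of the $O$ odd values must also be even, which forces $O$ itself to be even. I would then reorder the cuts so that $O/2$ odd cuts come first, then one even cut, then the remaining $O/2$ odd cuts, then the remaining $E-1$ even cuts, and set $s_1:=1$. Under the recurrence, the first $O/2$ odd cuts all start with $1$; the first even cut flips the starting symbol, so the next $O/2$ odd cuts all start with $2$; each remaining even cut contributes balanced counts regardless of its starting symbol. Thus the net excesses of $1$'s and $2$'s cancel, and local maximality at every boundary is automatic because $s_{k+1}$ was chosen to equal the last symbol of cut $k$. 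Lemma \ref{equalonestwos} then upgrades this string in $\{1,2\}^n$ to the desired binary multipermutation in $\mathcal{M}_r(\mathbb{S}_n)$.

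The main obstacle I expect is organizing the small edge cases cleanly, especially when $O=0$ (only even cuts), when $E=1$ (only a single flip is available), and when several consecutive length-$1$ cuts appear; each case requires a quick separate verification of local maximality, but all reduce to the same principle that $s_{k+1}$ matches the last symbol of the $k$-th cut.
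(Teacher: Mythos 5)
Your proposal is correct and follows essentially the same route as the paper's proof in Appendix C: the ``only if'' direction argues that all-odd cut lengths force every cut to begin and end with the same symbol, yielding an excess of $c$ of one symbol and contradicting Lemma \ref{equalonestwos}, and the ``if'' direction reorders the cuts as (half the odd cuts, one even cut, the other half of the odd cuts, the remaining even cuts) so the symbol counts balance. Your recurrence on starting symbols $s_{k+1}$ is just a slightly more formal phrasing of the paper's observation that a new cut must begin with the symbol the previous cut ended on.
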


The proof of Lemma \ref{gsequence} can be found in the appendices. 
In words, the lemma states that given any tuple of positive integers 
$(q(1), q(2), \dots q(c))$ whose entries sum to $n$, 
as long as there is at least one even 
integer in the tuple, then the entries can be made to correspond to the 
lengths of the cuts of some binary multipermutation $\mathbf{m}_\sigma^r$. 
Notice that in the formulation resulting from Lemma \ref{gsequence}, 
the number of cuts $c$ in a binary multipermutation $\mathbf{m}_\sigma^r$ 
is one more than the number of repeated adjacent digits. 
In other words, 
$c 
\;=\;
\#\{  i \in [2,n]  \;:\; \mathbf{m}_\sigma^r(i) = \mathbf{m}_\sigma^r(i-1) \} +1 $.
Hence for a fixed number of cuts $c$, 
the standard duplication set size $|SD(\mathbf{m}_\sigma^r)|$  
does not depend on the lengths of individual cuts. 

On the other hand, the size of the alternating duplication set
 $|AD(\mathbf{m}_\sigma^r)|$ does depend on the lengths of the cuts. 
This means that if the number of cuts is fixed at $c$ then 
by Lemma \ref{oddeven} and Lemma \ref{gsequence}, 
finding the maximum sphere size equates 
to minimizing 
$\psi((q(1),q(2),\dots,q(c))$, 
where $(q(1), q(2), \dots q(c)) \in \mathbb{Z}_{>0}^c$ has 
at least one even entry and whose entries sum to $n$. 
We claim that the tuple defined next minimizes the sum in question. 

\begin{definition}[$q_c$, $rem_c$, $\mathbf{q}_c$]
Let 
$c\in [n-1]$. Denote by 
$q_c\in \mathbb{Z}_{> 0}$ and 
$rem_c \in \mathbb{Z}_{\ge 0}$ the unique 
quotient and remainder when $n$ is divided by $c$, i.e. 
$c*q_c + rem_c = n$ where $rem_c < c$. Define 

\[
\mathbf{q}_c := 
\begin{cases} 
(q_c +1, \underset{c-2}{\underbrace{q_c, \dots q_c}}, q_c - 1) 
~ ~ ~ ~ ~ ~ ~\text{if } q_c \text{ is odd and } rem_c = 0 \\
(\underset{rem_c}{\underbrace{q_c+1, \dots q_c + 1}}, 
\underset{c-rem_c}{\underbrace{q_c, \dots q_c}}) 
\hfill \text{otherwise}
\end{cases}
\]
We also use the notation $\mathbf{q}_c = (\mathbf{q}_c(1), \dots \mathbf{q}_c(c)) \in \mathbb{Z}_{>0}^c$. 
\end{definition}

The above definition guarantees that two important conditions 
are satisfied: (1) the entries of $\mathbf{q}_c$ 
sum to $n$; and (2) there exists some $i \in [c]$ such that 
$\mathbf{q}_c(i)$ is even. These two conditions correspond with 
the conditions and statement of Lemma \ref{gsequence}.
Additionally, by definition, 
$\mathbf{q}_c$ is a weakly decreasing sequence with all 
entries being positive integers, and thus 
it is a partition of $n$.

Standard calculation (see Remark \ref{floordiff2} in Appendix D)  
indicates that if two cuts of a binary multipermutation 
differ by $2$ or more, then the size of the alternating duplication set 
associated with that multipermutation can be reduced by bringing the 
length of those two cuts closer together.  Generalizing over all the 
cuts in the multipermutation, we may minimize the alternating 
duplication set and hence maximize sphere size by choosing all cuts 
to be as similar in length as possible. Another way of saying that 
the cut sizes are as similar in length as possible is to say 
that the cut sizes are precisely the values of $\mathbf{q}_c$. 
The fact that cut sizes equaling the values of $\mathbf{q}_c$
minimizes the associated alternating duplication set size 
is stated in the next theorem.

\begin{theorem}\label{minE}
Let $c \in [n-1]$. Then 
\[
\underset{\mathbf{m}_\sigma^{r} \in 
\mathcal{M}_{r}^c(\mathbb{S}_n)}{\min} 
|AD(\mathbf{m}_\sigma^r)| 
\;=\; 
\psi(\mathbf{q}_c), 
\]
where 
$\mathcal{M}_{r}^c(\mathbb{S}_n) := \{\mathbf{m}_\pi^r\in \mathcal{M}_r(\mathbb{S}_n) 
\; : \; 
\#\{\mathbf{m}_\pi^r(i) = \mathbf{m}_\pi^r(i-1)\} +1 = c\}$, i.e. 
$\mathcal{M}_{r}^c(\mathbb{S}_n)$ is the set of binary multipermutations 
with exactly $c$ cuts. 
\end{theorem}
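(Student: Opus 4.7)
The plan is to convert the theorem into a pure optimization problem over integer partitions of $n$ with a parity side-constraint, and then solve that optimization by a standard smoothing argument based on the convexity of $\psi$. First I would use the remark preceding the theorem, which says that a binary multipermutation in $\mathcal{M}_r^c(\mathbb{S}_n)$ decomposes uniquely into $c$ locally maximal alternating cuts of positive integer lengths summing to $n$. Since the alternating duplication sets contributed by distinct cuts are disjoint, Lemma \ref{oddeven} gives
\[
|AD(\mathbf{m}_\sigma^r)| \;=\; \sum_{i=1}^{c} \psi(q(i)),
\]
where $(q(1),\dots,q(c))$ is the tuple of cut lengths. By Lemma \ref{gsequence}, such a tuple is realized by some $\mathbf{m}_\sigma^r \in \mathcal{M}_r^c(\mathbb{S}_n)$ if and only if $\sum q(i) = n$ and at least one $q(i)$ is even. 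So it suffices to minimize $\psi(\mathbf{q}) := \sum_{i=1}^c \psi(q(i))$ over tuples $\mathbf{q} \in \mathbb{Z}_{>0}^c$ satisfying these two constraints, and to show that the minimum is $\psi(\mathbf{q}_c)$.

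Next I would establish a smoothing lemma: if $\mathbf{q}$ has two coordinates $q(i) \ge q(j) + 2$, then replacing the pair $(q(i),q(j))$ by $(q(i)-1,q(j)+1)$ does not increase $\psi(\mathbf{q})$. This is essentially the convexity of $\psi$ on the positive integers, which follows from a short direct computation using $\psi(n) = \lfloor (n-2)^2/4 \rfloor$ (this is the content of Remark \ref{floordiff2} in the appendix). Iterating the smoothing step, any optimum can be assumed to have all entries in $\{q_c, q_c+1\}$, and the only such tuple (up to reordering) summing to $n$ is the one with exactly $rem_c$ entries equal to $q_c+1$ and $c-rem_c$ entries equal to $q_c$.

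Finally I would treat the parity constraint. In the generic case $rem_c > 0$ or $q_c$ even, the balanced tuple above already contains an even entry, is realizable by Lemma \ref{gsequence}, and matches $\mathbf{q}_c$. In the exceptional case $rem_c = 0$ and $q_c$ odd, every entry of the fully balanced tuple $(q_c,\dots,q_c)$ is odd, so the balanced tuple is not realizable. Any realizable competitor must perturb at least two entries to introduce an even value, and a short case analysis (again using convexity of $\psi$) shows that the cheapest admissible perturbation is $(q_c+1, q_c, \dots, q_c, q_c-1)$, which coincides with $\mathbf{q}_c$ as defined.

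The main obstacle I foresee is the exceptional parity case: one must verify that among all two-coordinate perturbations that restore an even entry, the $(+1,-1)$ perturbation is strictly optimal, and that no more global reshuffling of the smoothed tuple can beat it. I expect the cleanest way to handle this is to compute $\psi(q_c+1) + \psi(q_c-1) - 2\psi(q_c)$ explicitly using the floor expression of $\psi$ and compare it against the cost of larger perturbations such as $(+2,-2)$ or introducing an even entry far from $q_c$, each of which is easily bounded below using the convexity established in the smoothing step.
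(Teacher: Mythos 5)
Your proposal is correct and follows essentially the same route as the paper: reduce via Lemma \ref{gsequence} and Lemma \ref{oddeven} to minimizing $\psi$ over cut-length tuples summing to $n$ with an even entry, smooth using the convexity statement of Remark \ref{floordiff2}, and treat the exceptional case ($rem_c=0$, $q_c$ odd) by comparing against the all-$q_c$ baseline, where the $(+1,-1)$ perturbation costs exactly $1$. The paper resolves your anticipated obstacle exactly as you suggest, via the equality condition in Remark \ref{floordiff2} (equality only when the two entries differ by $2$ and are both odd), which forces any admissible competitor to exceed the baseline by at least $1$.
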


\begin{proof}
Assume $c\in[n-1]$. 
Note first that by Lemma \ref{gsequence}, there 
exists a binary multipermutation with exactly 
$c$ cuts, whose cut lengths 
correspond to $\mathbf{q_c}$. 
Now let $(a(1), a(2), \dots a(c)) \in \mathbb{Z}_{>0}^c$ such that 
$\sum_{i=1}^ca(i) = n$ and there exists $i\in[c]$ such that 
$a(i)$ is even. 
Again by Lemma \ref{gsequence}, $(a(1),a(2),\dots a(c)$ 
corresponds to the cut lengths of an arbitrary 
binary multipermutation with exactly $c$ cuts. 
Hence by Lemma \ref{oddeven} it suffices to show that 
$\psi(\mathbf{q}_c) \le \psi((a(1), a(2), \dots a(c))$. 
We divide the remainder of the proof into two halves corresponding to the 
the split definition of $\mathbf{q}_c$. 

First, suppose $q_c$ is odd and $rem_c = 0$ so that 
$\mathbf{q}_c = (q_c+1, q_c, \dots, q_c, q_c-1)$.   
Then since there exists $i \in [c]$ such that $a(i)$ is 
even, there must be distinct $i'$ and $j'$ in $[c]$ such 
that $a_{i'} = q_c + h_{i'}$ and $a_{j'} = q_c - h_{j'}$ 
where $h_{i'}, h_{j'} \in \mathbb{Z}_{>0}$. 
Hence, by Remark \ref{floordiff2} (see appendix \ref{appendix D}), 
\begin{align*}
\psi((
\underset{c}{\underbrace{q_c, q_c, \dots, q_c}}))+1
\; \; \le \;\;
\psi((
\underset{c}{\underbrace{a(1),a(2),\dots, a(c)}})),
\end{align*}
but also by Remark \ref{floordiff2} (applied to the first and 
last entry of $\mathbf{q}_c$), 
\begin{eqnarray*}
\psi(\mathbf{q}_c)
\;\;=\;\;
\psi(q_c+1) + 
\psi((\underset{c-2}{\underbrace{q_c, q_c, \dots, q_c}})) + 
\psi(q_c-1)
\;\;=\;\;
\psi((\underset{c}{\underbrace{q_c, q_c, \dots, q_c}})) + 1.
\end{eqnarray*}

For the second half, suppose that $q_c$ is even or that $rem_c \ne 0$. 
Then $\mathbf{q}_c = (\underset{rem_c}{\underbrace{q_c+1, \dots, q_c+1}}, 
\underset{c-rem_c}{\underbrace{q_c, \dots, q_c}})$. This means that 
for all $i,j \in [c]$, that $|\mathbf{q}_c(i) - \mathbf{q}_c(j)| 
\;\le\;
 1$. Hence, by 
Remark \ref{floordiff2}, 
\[
\psi(\mathbf{q}_c)
\;\; \le \;\;
\psi(\underset{c}{\underbrace{a(1),a(2),\dots, a(c)}})).
\]
\end{proof}

We have shown that choosing cuts to be as evenly distributed 
as possible results in minimizing the alternating duplication set. 
However, as mentioned before, while increasing cuts generally 
decreases the size of the alternating duplication set, it also 
increases the size of the standard duplication set. The question of
the optimal number of cuts in a multipermutation $\mathbf{m}_\sigma^r$ 
minimizing
$|SD(\mathbf{m}_\sigma^r)| + |AD(\mathbf{m}_\sigma^r)|$ remains. 

\subsection{binary case -- number of cuts maximizing sphere size}
The previous subsection demonstrated the nature of 
cuts maximizing sphere size in the binary case 
once the number of cuts $c$ is fixed. This 
subsection focuses on determining the number of cuts 
maximizing binary multipermutation sphere size. 
Computer analysis for values of $r$ up to $10,000$ 
suggests that 
$c \approx \sqrt{r}$ cuts minimizes the sum of 
$|SD(\mathbf{m}_\sigma^r)|$ and $|AD(\mathbf{m}_\sigma^r)|$ 
(and therefore maximizes the sphere size).
The next remark and subsequent lemmas prove that 
this is indeed the case. We therefore call $\sqrt{r}$ the 
\textbf{ideal cut value} and use the notation 
$\hat{c} \;:=\; \sqrt{r}$. In practice the actual optimal number of cuts 
is only approximately equal to $\hat{c}$ since 
$\hat{c}$ is not generally an integer. 
As in the previous subsection, recall that 
we assume $n$ is a positve even integer and that $n/r = 2$ for 
the remainder of this paper. 

\begin{remark} \label{remark51}
Let $\mathbf{m}_\sigma^r \in \mathcal{M}_r(\mathbb{S}_n)$
 be a binary multipermutation. Then 
\[
|SD(\mathbf{m}_\sigma^r)| \;\;=\;\;
(c-1)(n-2) + (n-(c-1))(r-1) \;\;=\;\; c(r-1) + (n-1)(r-1),
\]
where 
$c := \#\{i \in [2,n] \;:\; 
\mathbf{m}_\sigma^r(i) \;=\; \mathbf{m}_\sigma^r(i-1)\} + 1$.
\end{remark}

Note that the remark could technically be simplified 
by rewriting $n$ as $2r$, but here and elsewhere 
$n$ is kept in favor of $2r$ to retain intuition behind 
the meaning and for ease of comparison with 
previous results in the non-binary case. 
Although the remark is obvious, its significance is that 
the only component that depends upon $c$ is $c(r-1)$. 
This means that each time the number of cuts is increased by 
$1$, the size of the standard duplication set is increased by $r-1$. 

Therefore to show that $\hat{c}$ cuts minimizes duplications, it 
is enough to show the following two facts: 
(1) if the number of cuts is greater or equal to $\hat{c}$, then 
increasing the number of cuts by one 
causes a decrease in the alternating 
duplication set by at most $r-1$; and 
(2) if the number of 
cuts is less than or equal to $\hat{c}$, then a further decrease in cuts 
by one will enlarge the alternating duplication set by at least $r-1$.
These two facts are expressed in the next two lemmas. 

\begin{lemma}\label{cplusone}
Let $c \in [n-2]$ and $\hat{c} \le c$. Then 
\begin{align}\label{eqn5}
\psi(\mathbf{q}_c) 
-
\psi(\mathbf{q}_{c+1})
\;\;\; \le \;\;\; r-1. 
\end{align}
\end{lemma}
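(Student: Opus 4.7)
The plan is to give an essentially direct algebraic proof by computing $\psi(\mathbf{q}_c)-\psi(\mathbf{q}_{c+1})$ in closed form (or within tight bounds) and showing that the hypothesis $c \ge \hat c = \sqrt{r}$ forces this quantity to be at most $r-1$. Recall that $n=2r$ and $\psi(m)=\lfloor (m-2)^2/4\rfloor$, so each entry $\mathbf{q}_c(i)\in\{q_c, q_c+1, q_c-1\}$ contributes approximately $(q_c-2)^2/4$, where $q_c=\lfloor n/c\rfloor$ or $\lceil n/c\rceil$.

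First I would establish a clean ``smoothed'' formula. Dropping the floor brackets (and tracking the resulting error of at most $1/4$ per summand), each term behaves like $(\mathbf{q}_c(i)-2)^2/4$. Because $\mathbf{q}_c$ is the most balanced partition of $n$ into $c$ parts, a short direct computation gives
\[
\psi(\mathbf{q}_c) \;=\; \frac{(n-2c)^2}{4c} + E_c, \qquad |E_c|\le 1,
\]
where the error $E_c$ collects the parity/floor corrections coming from the split in the definition of $\mathbf{q}_c$ (the special case when $q_c$ is odd and $rem_c=0$ contributes an extra $+1$ from the $(q_c+1,q_c-1)$ pair, exactly as in the proof of Theorem~\ref{minE}; the other case differs by at most the $rem_c/(c)$ fractional part). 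The same formula holds for $c+1$, so
\[
\psi(\mathbf{q}_c) - \psi(\mathbf{q}_{c+1}) \;=\; \frac{(n-2c)^2}{4c} - \frac{(n-2(c+1))^2}{4(c+1)} + (E_c - E_{c+1}).
\]

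Next I would simplify the main term. Writing $f(x):=(n-2x)^2/(4x)=n^2/(4x)-n+x$, a direct computation gives
\[
f(c)-f(c+1) \;=\; \frac{n^2}{4c(c+1)} - 1,
\]
and using $n=2r$ and the hypothesis $c\ge\sqrt r$,
\[
\frac{n^2}{4c(c+1)} \;=\; \frac{r^2}{c(c+1)} \;\le\; \frac{r^2}{c^2} \;\le\; r.
\]
Thus the main term is at most $r-1$, and it remains to absorb the $(E_c-E_{c+1})$ correction; since each $|E_c|\le 1$, only the boundary of the inequality is delicate, and here I would check separately (by a short casework on the parities of $q_c, q_{c+1}$ and on whether $rem_c, rem_{c+1}$ vanish) that the correction never pushes the difference above $r-1$. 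In the generic case $rem_c\neq 0$ both error terms are nonpositive fractional parts and the bound is immediate; in the special odd-$q_c$ case the $+1$ contribution to $E_c$ is compensated by the strict inequality $r^2/(c(c+1))<r^2/c^2$.

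The main obstacle I anticipate is precisely this endpoint case analysis. Because $\hat c = \sqrt r$ is the exact crossover value predicted by computer search, the bound $r-1$ is sharp up to $O(1)$, so the floor/parity corrections in $\psi$ genuinely matter and cannot be discarded. The cleanest way to organise the argument is to split into the cases (i) $q_c$ even or $rem_c\neq 0$, and (ii) $q_c$ odd with $rem_c=0$, and within each case compare $\mathbf{q}_c$ with $\mathbf{q}_{c+1}$ termwise, using Remark~\ref{floordiff2} (from the appendix) to bound the contribution of a single unit transfer between parts. This reduces the whole estimate to verifying that the inequality $r^2/(c(c+1))\le r$ has enough slack to absorb at most one unit of parity correction, which it does whenever $c\ge\sqrt r$.
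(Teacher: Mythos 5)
Your smoothing strategy has a genuine gap: the claimed error bound $|E_c|\le 1$ is false. Writing $E_c := \psi(\mathbf{q}_c)-\tfrac{(n-2c)^2}{4c}$, the error has two sources: a nonnegative convexity excess of $\tfrac{1}{4}\sum_i(\mathbf{q}_c(i)-n/c)^2=\tfrac{rem_c(c-rem_c)}{4c}$, and a floor deficit of $\tfrac14\,\#\{i: \mathbf{q}_c(i)\text{ is odd}\}$, since $\lfloor (m-2)^2/4\rfloor=(m-2)^2/4-1/4$ exactly when $m$ is odd. The deficit is \emph{not} $O(1)$; it accumulates linearly in $c$. For instance with $n=202$, $r=101$, $c=40$ (so $\hat c\approx 10.05\le c$), one has $\mathbf{q}_{40}=(6,6,5,\dots,5)$ with thirty-eight $5$'s, giving a floor deficit of $38/4$ and $E_{40}\approx -9$. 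Consequently the step ``since each $|E_c|\le 1$, \dots\ absorb at most one unit of parity correction'' does not go through: what you actually need is $E_c-E_{c+1}\le r-\tfrac{r^2}{c(c+1)}$, and the left side is bounded only by something like $\tfrac{c+1}{4}$ (which can be of order $r/2$ when $c$ is near $n-2$), while the right side degenerates to roughly $\sqrt r$ when $c$ is near $\hat c$. Your remark that ``both error terms are nonpositive fractional parts so the bound is immediate'' is also not a valid inference: $E_c-E_{c+1}$ can be positive and as large as $|E_{c+1}|$ even when both errors are nonpositive. The inequality between these two $c$-dependent quantities does appear to hold throughout the admissible range, but establishing it requires a further quantitative argument (e.g.\ a quadratic-in-$r$ analysis of $\tfrac{c+1}{4}\le r-\tfrac{r^2}{c(c+1)}$ using both $c\ge\hat c$ and $c\le n-2$) that your write-up neither states nor proves.

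For comparison, the paper avoids global smoothing entirely: it compares $\mathbf{q}_c$ and $\mathbf{q}_{c+1}$ termwise, splitting into the cases $q_{c+1}\le c$ (each part shrinks by at most $1$, with one exceptional part shrinking by $2$) and $q_{c+1}>c$ (where $c<q_{c+1}<2c$ forces exactly $q_{c+1}-c$ parts to shrink by $2$ and the rest by $1$), and then telescopes using Remark~\ref{floordiff}, which gives the exact increment $\lfloor (a/2)^2\rfloor-\lfloor((a-1)/2)^2\rfloor$ including its parity correction. That termwise bookkeeping is precisely what controls the $\Theta(c)$ floor corrections that your $|E_c|\le 1$ claim sweeps away. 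If you want to salvage the smoothing route, you would need to replace $|E_c|\le 1$ with the correct bounds $-\tfrac{c}{4}\le E_c\le 0$ (for $c\ge 4$) and then verify $\tfrac{c+1}{4}\le r-\tfrac{r^2}{c(c+1)}$ over the whole range $\sqrt r\le c\le n-2$, which is a nontrivial additional lemma.
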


The proof for Lemma \ref{cplusone} is in the appendices. 
The next example demonstrates how to 
construct $\mathbf{q}_{c+1}$ from $\mathbf{q}_{c}$ 
when $\hat{c} \le c < n-1$. 
This corresponds to increasing the number of cuts from 
$c$ to $c+1$. Notice that 
$q_{c+1} > c$ and that each 
cut is decreased by at most $2$, with some cuts 
decreased by only $1$. This corresponds 
to the second case in the proof of Lemma \ref{cplusone}. 
\begin{example}\label{example2}
Let $n = 30$ and $c = 4$. 
Notice that $\hat{c} = \sqrt{15} \approx 3.873$ so that $\hat{c} < c < n-1$.
We also have $q_{4} = 7$ and $rem_{4} = 2$ 
while $q_{5} = 6$ and $rem_5 = 0$. 
Therefore $\mathbf{q}_{4} = (8,8,7,7)$ 
and $\mathbf{q}_{5} = (6,6,6,6,6)$. 

We may visualize $\mathbf{q}_{4}$ and $\mathbf{q}_{5}$ 
respectively as the left and right 
diagrams in Figure \ref{example2fig}, 
with the $i$th row of the diagram corresponding to 
the $i$th cut, $\mathbf{q}_{4}(i)$ or $\mathbf{q}_{5}(i)$. 
The numbers in the blocks in the left diagram 
of Figure \ref{example2fig} represent the order in which 
each row would be shortened to construct the last 
cut of $\mathbf{q}_4$. 

\begin{figure}[h]
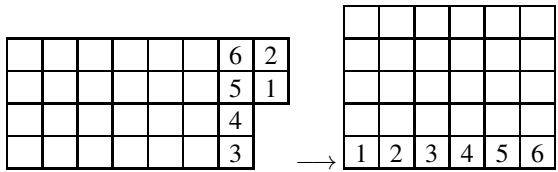

\center
\resizebox{1\linewidth}{!}{
  \begin{minipage}{\linewidth}
 \begin{align*}
\begin{Young} 
&&& & & & 6&2 \cr 
&&& & & &5 & 1  \cr 
&&& & & &  4 \cr
&&& & & &3   \cr
\end{Young} 
\longrightarrow 
\begin{Young} 
&& & & &  \cr
&& & & &  \cr
& & & & & \cr
& & & & & \cr
1&2 &3 &4 &5& 6  \cr
\end{Young} 
\end{align*}
\end{minipage} }
\caption{Constructing $\mathbf{q}_{5}$ from $\mathbf{q}_{4}$ (when $n = 30$)}
\label{example2fig}
\end{figure}

If $\mathbf{m}_\sigma^r$ is a multipermutation with four
cuts whose lengths correspond to $\mathbf{q}_4$, then applying 
Remark \ref{remark51} and Lemma \ref{oddeven}, 
$|SD(\mathbf{m}_\sigma)| + |AD(\mathbf{m}_\sigma)|  
= 492$.  By Theorem \ref{ballcalc}, this means 
$|S(\mathbf{m}_\sigma,1)| = 238$.  On the other hand, 
if $\mathbf{m}_\pi$ is a multipermutation with five cuts 
whose lengths correspond to $\mathbf{q}_5$, then 
similar methods show 
$|SD(\mathbf{m}_\pi)|+ |AD(\mathbf{m}_\pi)| = 496$, 
which implies $|S(\mathbf{m}_\pi,1)| = 234$, a smaller value. 
\end{example}

Lemma \ref{cplusone} implied that if the number of cuts 
is greater or equal to $\hat{c}$, then increasing 
cuts shrinks the overall possible sphere size. The next 
lemma is analogous. It says that if the number 
of cuts is less than or equal to $\hat{c}$, then 
reducing the number of cuts shrinks the overall possible 
sphere size. 

\begin{lemma}\label{cminusone}
Let $c \in [n-1]$ and $c \le \hat{c}$. Then 
\begin{align}\label{eqn8}
\psi(\mathbf{q}_{c-1})
-
\psi(\mathbf{q}_c)
\;\;\;>\;\;\; r-1. 
\end{align}
\end{lemma}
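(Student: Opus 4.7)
The plan is to exploit that $\mathbf{q}_c$ and $\mathbf{q}_{c-1}$ are ``as-equal-as-possible'' partitions of $n = 2r$, so that the leading contribution to $\psi(\mathbf{q}_{c-1}) - \psi(\mathbf{q}_c)$ is the continuous approximation $n^2/(4c(c-1)) = r^2/(c(c-1))$. The hypothesis $c \le \hat{c} = \sqrt{r}$ forces this leading term to exceed $r - 1$, and the technical work is to show that the $O(c)$ lower-order corrections (arising from the floor in $\psi$ and from $rem_c$, $rem_{c-1}$) do not reverse the inequality. The strategy mirrors the proof of Lemma~\ref{cplusone}.

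First I would derive an explicit closed form for $\psi(\mathbf{q}_c)$. Splitting into the three sub-cases present in the definition of $\mathbf{q}_c$ (the first case, where $q_c$ is odd and $rem_c = 0$, and the second case further split by the parity of $q_c$), and using the identity $n = c\,q_c + rem_c$ together with the elementary evaluations $\psi(m) = (m-2)^2/4$ for $m$ even and $\psi(m) = (m-1)(m-3)/4$ for $m$ odd, a direct computation yields an expression of the shape
\[
\psi(\mathbf{q}_c) \;=\; \frac{n^2 - rem_c^2}{4c} - n + B_c,
\]
where the bounded correction $B_c$ depends on the sub-case but always lies in the interval $[(3c+2)/4,\,(5c-2)/4]$ for $c \ge 3$.

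Subtracting the analogous expression for $\psi(\mathbf{q}_{c-1})$ gives
\[
\psi(\mathbf{q}_{c-1}) - \psi(\mathbf{q}_c) \;=\; \frac{r^2}{c(c-1)} + \frac{1}{4}\!\left(\frac{rem_c^2}{c} - \frac{rem_{c-1}^2}{c-1}\right) + (B_{c-1} - B_c).
\]
Using $rem_c^2/c \ge 0$ together with $rem_{c-1}^2/(c-1) \le (c-2)^2/(c-1) < c-1$, and reading off from the range in the previous step that $B_{c-1} - B_c \ge -(2c-1)/4$, this yields the clean lower bound
\[
\psi(\mathbf{q}_{c-1}) - \psi(\mathbf{q}_c) \;\ge\; \frac{r^2}{c(c-1)} - \frac{3c-2}{4}.
\]

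To finish, the hypothesis $c \le \sqrt{r}$ (with $c$ a positive integer, so $c^2 \le r$) gives $c(c-1) = c^2 - c \le r - c$, from which clearing denominators and using $r \ge c^2$ produces $r^2/(c(c-1)) \ge r + c$. Substituting,
\[
\psi(\mathbf{q}_{c-1}) - \psi(\mathbf{q}_c) \;\ge\; r + c - \frac{3c-2}{4} \;=\; r + \frac{c+2}{4} \;>\; r - 1.
\]
The main obstacle is the bookkeeping across the three parity/remainder sub-cases when deriving the closed form; the final algebraic inequality is only just strong enough to absorb the $O(c)$ correction, so these estimates cannot be softened appreciably. The small values $c \in \{1,2\}$ fall outside the range used for $B_c$ and are best dispatched by direct inspection.
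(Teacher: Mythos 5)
Your proposal is correct, and it takes a genuinely different route from the paper's argument. The paper first shows $q_c > 2(c-1)$, so that every one of the first $c-1$ cuts shrinks by at least $2$ in passing from $\mathbf{q}_{c-1}$ to $\mathbf{q}_c$; it then tracks the floor corrections through a count $k$ of cuts that shrink by more than $2$, lower-bounds $\sum_{i=1}^{c-1}(\mathbf{q}_{c-1}^2(i)-\mathbf{q}_c^2(i))$ by $2q_c^2+2q_c+3k$ via a geometric/area argument (Figures \ref{figure1} and \ref{figure2}), and closes with $q_c > n/c-1$, $r^2/c^2 \ge r$, and $k\ge 1$, handling separately the exceptional case $\mathbf{q}_c(c)=q_c-1$. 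You instead evaluate $\psi(\mathbf{q}_c)$ in closed form; I checked the three sub-cases and your expression $\psi(\mathbf{q}_c)=\frac{n^2-rem_c^2}{4c}-n+B_c$ is right, with $B_c=\frac{3c+4}{4}$, $c$, and $\frac{3c+2\,rem_c}{4}$ respectively, all lying in $\bigl[\frac{3c+2}{4},\frac{5c-2}{4}\bigr]$ for $c\ge 3$. This makes the leading term $r^2/(c(c-1))$ exact rather than estimated, confines the parity/remainder case analysis to the bounded correction $B_c$, dispenses with the geometric argument and the paper's exceptional case entirely, and your final chain $c(c-1)\le r-c \Rightarrow r^2/(c(c-1))\ge r+c$ is valid and even slightly sharper than the paper's $r^2/c^2\ge r$. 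The trade-off is the algebraic bookkeeping in deriving the closed forms and the need to dispatch $c\in\{1,2\}$ by hand (for $c=1$ the quantity $\mathbf{q}_{c-1}$ is not even defined, so the lemma implicitly assumes $c\ge 2$); the paper's method, while messier, generalizes more directly to the companion Lemma \ref{cplusone} where the cuts do not all change by the same amount.
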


The proof for Lemma \ref{cminusone} is in the appendices.
The next example demonstrates how to 
construct $\mathbf{q}_c$ from $\mathbf{q}_{c-1}$ 
when $c \le \hat{c}$. Notice that each 
cut length is decreased by at least $2$. 
\begin{example}\label{example1}
Let $n = 34$ and $c = 4$. 
Notice that $\hat{c} = \sqrt{17} \approx 4.123$ so that $c < \hat{c}$.
We also have $q_{3} = 11$ and $rem_{3} = 1$ 
while $q_4 = 8$ and $rem_4 = 2$. 
Therefore $\mathbf{q}_{3} = (12, 11, 11)$ 
and $\mathbf{q}_4 = (9, 9, 8, 8)$. 
We can visualize 
$\mathbf{q}_{3}$ and $\mathbf{q}_4$ respectively 
as the left and right 
diagrams in Figure \ref{example1fig}, 
with the $i$th row of each diagram corresponding to 
the $i$th cut, $\mathbf{q}_{3}(i)$   
 or $\mathbf{q}_4(i)$. 
The numbers in the blocks in the left diagram 
of Figure \ref{example1fig} represent the order in which 
each row would be shortened to construct the last 
cut of $\mathbf{q}_4$. 

\vspace{-.2cm}
\begin{figure}[h]
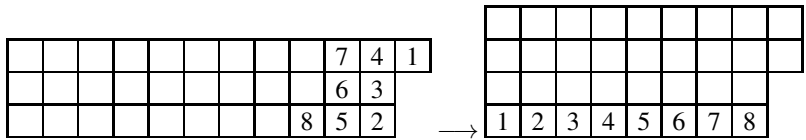

\resizebox{1\linewidth}{!}{
  \begin{minipage}{\linewidth}
\begin{align*}
\begin{Young} 
& & & & & & & & & 7&4&1 \cr 
& & & & & & & &  &6 &3 \cr 
& & & & & & & & 8&5&2\cr
\end{Young} 
\longrightarrow 
\begin{Young} 
 & & & & & & & & \cr 
 & & & & & & & & \cr 
 & & & & & & & \cr
1& 2& 3& 4& 5& 6& 7&8  \cr
\end{Young} 
\end{align*}
\end{minipage} }
\caption{Constructing $\mathbf{q}_4$ from $\mathbf{q}_{3}$ (when $n=34$)}
\label{example1fig}
\end{figure}

If $\mathbf{m}_\sigma$ is a multipermutation with three 
cuts whose lengths correspond to $\mathbf{q}_3$ above, then applying 
Remark \ref{remark51} and Lemma \ref{oddeven}, 
$|SD(\mathbf{m}_\sigma)| + |AD(\mathbf{m}_\sigma)|  
= 641$.  By Theorem \ref{ballcalc}, this means 
$|S(\mathbf{m}_\sigma,1)| = 449$.  On the other hand, 
if $\mathbf{m}_\pi$ is a multipermutation with four cuts 
whose lengths correspond to $\mathbf{q}_4$ above, then 
similar methods show 
$|SD(\mathbf{m}_\pi)|+ |AD(\mathbf{m}_\pi)| = 634$, 
which implies that $|S(\mathbf{m}_\pi,1)| = 456$, a larger value. 
\end{example}

Lemmas \ref{cplusone} and \ref{cminusone} imply 
that the number of cuts $c$ minimizing the sum 
$|SD(\mathbf{m}_\sigma^r)| + |AD(\mathbf{m}_\sigma^r)|$ 
(and thus maximizing sphere size)
observes the inequalities $\hat{c}-1 < c < \hat{c}+1$. 
This answers the question of the optimal number of cuts. 
For a particular value $r$, it is a relatively simple matter to calculate 
the exact size of the maximal Ulam multipermutation sphere. 
One simply has to determine whether 
$c = \lfloor \hat{c} \rfloor$ or $c = \lceil \hat{c}\rceil$ 
yields a smaller 
$|SD(\mathbf{m}_\sigma^r)| + |AD(\mathbf{m}_\sigma^r)|$
(here $\lceil x \rceil$ denotes the ceiling function on $x \in \mathbb{R}$, 
i.e. the least integer greater than or equal to $x$). 
Once the best choice for $c$ is ascertained, an 
application of Theorem \ref{ballcalc} will yield the 
maximum size for that particular $r$. 
The above statements are summarized in the next theorem.

\begin{theorem}\label{main2}
\begin{align*}
\underset{\mathbf{m}_\sigma^r \in 
\mathcal{M}_r(\mathbb{S}_n)}{\max}
|S(\mathbf{m}_\sigma^r,1)| 
 \; \; = \; \; 
 1 + (n-1)^2 
- \underset{c\in \{\lfloor \hat{c} \rfloor, \lceil \hat{c}\; \rceil \}}{\min}
\Bigg(
c(r-1)+(n-1)(r-1)  
+ 
\psi(\mathbf{q}_c)
\Bigg).
& \numberthis \label{eqn14}
\end{align*}
\end{theorem}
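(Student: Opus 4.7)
The plan is to chain together the preceding results to assemble the maximum sphere size. By Theorem~\ref{ballcalc}, we have
$$\underset{\mathbf{m}_\sigma^r \in \mathcal{M}_r(\mathbb{S}_n)}{\max}|S(\mathbf{m}_\sigma^r,1)| \;=\; 1 + (n-1)^2 - \underset{\mathbf{m}_\sigma^r \in \mathcal{M}_r(\mathbb{S}_n)}{\min}\bigl(|SD(\mathbf{m}_\sigma^r)| + |AD(\mathbf{m}_\sigma^r)|\bigr),$$
so it suffices to evaluate the total duplication minimum on the right. I would stratify $\mathcal{M}_r(\mathbb{S}_n)$ by the number of cuts $c$, writing this as a nested minimum first over $c \in [n-1]$ and then over $\mathbf{m}_\sigma^r \in \mathcal{M}_r^c(\mathbb{S}_n)$.

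For a fixed $c$, Remark~\ref{remark51} shows that $|SD(\mathbf{m}_\sigma^r)| = c(r-1)+(n-1)(r-1)$ is constant on $\mathcal{M}_r^c(\mathbb{S}_n)$, while Theorem~\ref{minE} shows that the minimum of $|AD(\mathbf{m}_\sigma^r)|$ on $\mathcal{M}_r^c(\mathbb{S}_n)$ equals $\psi(\mathbf{q}_c)$. Defining
$$g(c) \;:=\; c(r-1)+(n-1)(r-1) + \psi(\mathbf{q}_c),$$
the problem reduces to showing that $g$ attains its minimum on $[n-1]$ at some $c \in \{\lfloor \hat{c} \rfloor, \lceil \hat{c} \rceil\}$.

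The final step is a monotonicity argument driven by Lemmas~\ref{cplusone} and~\ref{cminusone}. Lemma~\ref{cplusone} gives, for every $c \in [\lceil \hat{c}\rceil, n-2]$,
$$g(c+1) - g(c) \;=\; (r-1) - \bigl(\psi(\mathbf{q}_c) - \psi(\mathbf{q}_{c+1})\bigr) \;\ge\; 0,$$
so $g$ is weakly increasing on $\{\lceil \hat{c}\rceil, \ldots, n-1\}$ and is minimized there at $\lceil \hat{c}\rceil$. Symmetrically, Lemma~\ref{cminusone} gives, for every $c \in [2, \lfloor \hat{c}\rfloor]$,
$$g(c-1) - g(c) \;=\; \bigl(\psi(\mathbf{q}_{c-1}) - \psi(\mathbf{q}_c)\bigr) - (r-1) \;>\; 0,$$
so $g$ is strictly decreasing on $\{1, \ldots, \lfloor \hat{c}\rfloor\}$ and is minimized there at $\lfloor \hat{c}\rfloor$. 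Since the two ranges together cover $[n-1]$, the global minimum of $g$ lies in $\{\lfloor \hat{c}\rfloor, \lceil \hat{c}\rceil\}$, which after substitution into the Theorem~\ref{ballcalc} expression yields the claimed identity.

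Because all the heavy lifting is packaged inside the cited lemmas, the main obstacle here is really bookkeeping rather than deep combinatorics: one must verify that the two monotonicity regimes abut correctly and handle the boundary case $\hat{c} \in \mathbb{Z}$ where $\lfloor \hat{c} \rfloor = \lceil \hat{c} \rceil$. Notably, when $\hat{c}$ is not an integer, neither Lemma~\ref{cplusone} nor Lemma~\ref{cminusone} directly compares $g(\lfloor \hat{c} \rfloor)$ with $g(\lceil \hat{c} \rceil)$, which is precisely the reason the theorem statement retains a minimum over both candidates instead of selecting one.
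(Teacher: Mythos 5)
Your proof is correct and takes essentially the same route as the paper, whose own proof simply declares the result an immediate consequence of Theorem \ref{ballcalc}, Theorem \ref{minE}, Remark \ref{remark51}, and Lemmas \ref{cplusone} and \ref{cminusone} — precisely the ingredients you chain together, with the stratification by cut number and the two monotonicity regimes made explicit. Your closing observation that neither lemma adjudicates between $\lfloor \hat{c} \rfloor$ and $\lceil \hat{c} \rceil$ is also accurate; the paper defers that comparison to Lemma \ref{floorrootr}.
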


\begin{proof}
The proof is an immediate consequence of Theorem \ref{ballcalc}, 
Theorem \ref{minE}, 
Remark  \ref{remark51},  and Lemmas \ref{cplusone} and \ref{cminusone},
\end{proof}

Once again we retain $n$ instead of $2r$ in the 
theorem statement for intuition purposes. 
Theorem \ref{main2} indicates that the number of 
cuts $c$ maximizing sphere size should be either 
$\lfloor \hat{c} \rfloor$ or $\lceil \hat{c} \rceil$, but 
it does not state when $\lfloor \hat{c} \rfloor$ or 
$\lceil \hat{c} \rceil$ is optimal. It turns out that 
whichever is closer to the true value of 
$\hat{c}$ will yield the maximal sphere size. 
That is, if $\hat{c} - \lfloor \hat{c} \rfloor \;\le\;  
\lceil\hat{c}\rceil - \hat{c}$, then 
$\lfloor \hat{c} \rfloor$ (appropriately chosen) cuts will 
yield the maximal sphere size and visa versa. 
Stated another way, if 
$\hat{c} \;\le\; \lfloor \hat{c} \rfloor + 0.5$, then 
$ \lfloor \hat{c} \rfloor$ cuts provides the largest possible sphere size, 
but if $\hat{c} \;\ > \; \lfloor \hat{c} \rfloor + 0.5$, then 
$\lceil \hat{c} \rceil$ cuts provides the largest possible sphere size. 
To prove these facts, the next lemma is helpful.

\begin{lemma}\label{r equivalence}
Recall $r \in \mathbb{Z}_{>0}$ and $\hat{c}=\sqrt{r}$. 
We have $r \;\le\; \lfloor \hat{c} \rfloor^2 + \lfloor \hat{c} \rfloor 
\text{ if and only if }
\hat{c} \;\le\;  \lfloor \hat{c} \rfloor + 0.5$.
\end{lemma}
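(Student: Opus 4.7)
The plan is to set $k := \lfloor \hat{c}\rfloor = \lfloor \sqrt{r}\rfloor$ and reduce the equivalence to a transparent comparison by squaring. Since $\hat{c}\ge 0$ and $k + 0.5 > 0$, squaring preserves the inequality, so $\hat{c} \le k + 0.5$ is equivalent to $r = \hat{c}^{\,2} \le (k+0.5)^2 = k^2 + k + 0.25$. The whole lemma then amounts to observing that, for an integer $r$, the inequality $r \le k^2 + k + 0.25$ is equivalent to $r \le k^2 + k$.

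First I would handle the forward direction: assume $r \le k^2 + k$. Then trivially $r \le k^2 + k < k^2 + k + 0.25 = (k+0.5)^2$, and taking square roots gives $\hat{c} = \sqrt{r} < k + 0.5$, which in particular yields $\hat{c} \le k + 0.5$.

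Next I would handle the converse: assume $\hat{c} \le k + 0.5$. Squaring both sides gives $r \le k^2 + k + 0.25$. Since $k\in\mathbb{Z}_{\ge 0}$, the value $k^2 + k$ is an integer, and the next integer above it is $k^2 + k + 1$, which strictly exceeds $k^2 + k + 0.25$. Because $r$ is a positive integer, the inequality $r \le k^2 + k + 0.25$ therefore forces $r \le k^2 + k$, as desired.

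The only step requiring any care is the backward direction, where one must invoke the integrality of $r$ (so that the $0.25$ slack collapses); otherwise the argument is just ``square both sides.'' There is no real obstacle.
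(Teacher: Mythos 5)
Your proof is correct and takes essentially the same route as the paper: square both sides and use the integrality of $r$ and $\lfloor\hat{c}\rfloor^2+\lfloor\hat{c}\rfloor$ to absorb the $0.25$ slack. The only cosmetic difference is that the paper argues one direction by contrapositive where you argue it directly.
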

\begin{proof}
We begin by showing that 
if $\hat{c} \;\le\; \lfloor \hat{c} \rfloor + 0.5$, then 
$ r \;\le\; \lfloor \hat{c} \rfloor^2 + \lfloor \hat{c} \rfloor$.
Assume that $\hat{c} \;\le\; \lfloor \hat{c} \rfloor + 0.5$. 
Squaring both sides, we have 
$
r \;\le\; \lfloor \hat{c} \rfloor^2 + \lfloor \hat{c} \rfloor + 0.25,
$
which implies that $
r \;\; \le \;\; \lfloor \hat{c} \rfloor^2 + \lfloor \hat{c} \rfloor.$

Next we will show that if 
$r \;\le\; \lfloor \hat{c} \rfloor ^2 + \lfloor \hat{c} \rfloor$, 
then 
$\hat{c} \;\le\; \lfloor \hat{c} \rfloor + 0.5$. 
We proceed by contrapositive. 
Suppose that 
$\hat{c} \;> \; \lfloor \hat{c} \rfloor + 0.5$. 
Squaring both sides, we have 
$
r \; > \; \lfloor \hat{c} \rfloor^2 + \lfloor \hat{c} \rfloor + 0.25, 
$
which implies that $r \;>\; \lfloor \hat{c} \rfloor^2 + \lfloor \hat{c} \rfloor$.
\end{proof}

Lemma \ref{r equivalence} means that if $\hat{c}$ 
is closer to $\lfloor \hat{c} \rfloor$ than it is to 
$\lceil \hat{c} \rceil$, then the inequality 
$r \le \lfloor \hat{c} \rfloor^2 + \lfloor \hat{c} \rfloor$
is satisfied. Otherwise, if $\hat{c}$ is closer to 
$\lceil \hat{c} \rceil$, then the opposite inequality, 
$r > \lfloor \hat{c} \rfloor^2 + \lfloor \hat{c} \rfloor$, is 
satisfied. Besides being useful to prove the following two 
lemmas, Lemma \ref{r equivalence} in conjunction with the 
next two lemmas allows us to easily determine the number of 
cuts that will yield the maximal sphere size for each $r$. 
This is explained after the statement of the next lemma. 

\begin{lemma}\label{floorrootr}
$\hat{c} \;\le\; \lfloor \hat{c} \rfloor + 0.5$ if and only if 
\begin{align} \label{eqn15}
\psi(\mathbf{q}_{\lfloor \hat{c} \rfloor})
- 
\psi(\mathbf{q}_{\lceil \hat{c} \rceil})
\;\;\; \le \;\;\; 
r-1
\end{align}
\end{lemma}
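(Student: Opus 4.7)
The plan is to reduce the biconditional to a purely arithmetic condition on $r$ and $c := \lfloor\hat{c}\rfloor$, then verify it by direct evaluation at the critical boundary $r = c^2 + c$ together with a monotonicity argument extending it to the rest of the range. First I would dispose of the trivial case $\hat{c} \in \mathbb{Z}$: then $\lfloor\hat{c}\rfloor = \lceil\hat{c}\rceil$, so the left-hand side of (\ref{eqn15}) is $0 \le r-1$, while simultaneously $\hat{c} = \lfloor\hat{c}\rfloor \le \lfloor\hat{c}\rfloor + 0.5$, so both sides of the biconditional hold.

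Assume now $\hat{c} \notin \mathbb{Z}$ and set $c := \lfloor\hat{c}\rfloor$, so that $c^2 < r < (c+1)^2$. By Lemma \ref{r equivalence}, the condition $\hat{c} \le c + 0.5$ is equivalent to $r \le c^2 + c$, so it suffices to show
\[
\psi(\mathbf{q}_c) - \psi(\mathbf{q}_{c+1}) \;\le\; r-1 \;\;\iff\;\; r \;\le\; c^2 + c.
\]

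The central computation is the boundary case $r = c^2 + c$. Here $n = 2r = c(2c+2) = (c+1)(2c)$, so both quotient divisions leave zero remainder: $q_c = 2c+2$ with $rem_c = 0$, and $q_{c+1} = 2c$ with $rem_{c+1} = 0$. Since $2c$ and $2c+2$ are both even, $\mathbf{q}_c$ and $\mathbf{q}_{c+1}$ fall in the non-special branch of their definition, so $\mathbf{q}_c$ consists of $c$ entries each equal to $2c+2$ and $\mathbf{q}_{c+1}$ consists of $c+1$ entries each equal to $2c$. Using $\psi(k) = \lfloor (k-2)^2/4 \rfloor$, we obtain $\psi(\mathbf{q}_c) = c\cdot c^2 = c^3$ and $\psi(\mathbf{q}_{c+1}) = (c+1)(c-1)^2$; direct expansion yields $\psi(\mathbf{q}_c) - \psi(\mathbf{q}_{c+1}) = c^2 + c - 1 = r - 1$, so equality holds at the boundary.

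To extend this to all $r \in (c^2,(c+1)^2)$, I would show that the excess $E(r) := \bigl(\psi(\mathbf{q}_c) - \psi(\mathbf{q}_{c+1})\bigr) - (r-1)$ is a non-decreasing integer-valued function of $r$. This is done by case analysis on the parity of $q_c$ and $q_{c+1}$ and whether $rem_c$ or $rem_{c+1}$ equals zero, using the identities $\psi(k+2) - \psi(k) = k-1$ and $\psi(k+1) - \psi(k) = \lfloor (k-1)/2\rfloor$ to track how a unit increment in $r$ (equivalently a two-unit increment in $n$) alters each part of $\mathbf{q}_c$ and $\mathbf{q}_{c+1}$. Combined with the boundary equality $E(c^2+c) = 0$, monotonicity gives $E(r) \le 0 \iff r \le c^2 + c$, which via Lemma \ref{r equivalence} completes the biconditional. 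The main obstacle will be the case analysis itself: the piecewise definition of $\mathbf{q}_c$ creates several subcases when $r$ crosses a quotient threshold, and tracking how the excess changes in each requires careful bookkeeping of the shifting remainders and the resulting reshuffling of cut lengths.
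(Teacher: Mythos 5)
Your overall strategy is genuinely different from the paper's: the paper fixes $r$ and bounds $\psi(\mathbf{q}_{\lfloor\hat{c}\rfloor})-\psi(\mathbf{q}_{\lceil\hat{c}\rceil})$ directly by re-running the estimates of Lemmas \ref{cplusone} and \ref{cminusone} under the sharper hypotheses $q_{c+1}<2c$ (respectively $2(c-1)\le q_c<3(c-1)$), whereas you compute the boundary case and propose to propagate it by monotonicity in $r$. Your boundary computation at $r=c^2+c$ is correct and coincides with the paper's (equations (\ref{eqn19}) and (\ref{eqn20}) give exactly $c^3$ and $c^3-c^2-c+1$, hence difference $r-1$). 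However, the concluding inference has a genuine gap: from ``$E$ is non-decreasing'' and ``$E(c^2+c)=0$'' you can conclude $E(r)\le 0$ for $r\le c^2+c$, but for $r>c^2+c$ you only get $E(r)\ge 0$, and the lemma requires the strict inequality $E(r)>0$ there (Lemma \ref{r equivalence} converts $\hat{c}>\lfloor\hat{c}\rfloor+0.5$ into $r>c^2+c$, and the ``only if'' direction of the biconditional then demands strictness). This cannot be waved away, because $E$ really does have plateaus, including plateaus at the value $0$: for $c=2$ one computes $\mathbf{q}_2=(6,4)$, $\mathbf{q}_3=(4,3,3)$ at $r=5$, giving $E(5)=0=E(6)$; and for $c=4$ one gets $E(18)=E(19)=-1$. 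So weak monotonicity is consistent with $E(c^2+c+1)=0$, which would falsify the lemma. You must separately verify $E(c^2+c+1)\ge 1$ by an explicit computation (it does hold: at $n=2c^2+2c+2$ one finds $\psi(\mathbf{q}_c)-\psi(\mathbf{q}_{c+1})=c^2+c+1=r$), and then apply monotonicity only on $[c^2+c+1,(c+1)^2)$.

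The second, larger issue is that the monotonicity of $E$ is the entire technical content of the lemma and your proposal leaves it as a sketch. It is plausible — the heuristic increment of $\psi(\mathbf{q}_c)-\psi(\mathbf{q}_{c+1})$ per unit of $r$ is roughly $q_c-q_{c+1}\approx 2r/(c(c+1))\ge 2c/(c+1)>1$ — but the floor functions, the shifting remainders, and especially the special branch of the $\mathbf{q}_c$ definition (odd $q_c$ with $rem_c=0$), which switches on and off as $r$ increments, produce increments of $E$ equal to $0$ in several places, so the case analysis must be done carefully to rule out a negative increment. Until that case analysis is carried out, the proof is incomplete; the paper sidesteps it entirely by never comparing different values of $r$.
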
 
The full proof can be found in the appendices. 
The crux of the argument is the same as 
that of Lemmas \ref{cplusone} and \ref{cminusone}.

Lemma \ref{floorrootr} characterizes 
precisely when $\lfloor \hat{c} \rfloor$ 
or $\lceil \hat{c} \rceil$ cuts is optimal for maximizing sphere size. 
The lemmas imply, as mentioned previously, that whichever of 
$\lfloor \hat{c} \rfloor$ and 
$\lceil \hat{c} \rceil$ is closer to $\hat{c}$ 
is the optimal cut value. However, also as mentioned 
previously, Lemma \ref{r equivalence} allows us to 
easily determine which is optimal by simply looking at $r$. 
Notice that $\lfloor \hat{c} \rfloor^2 + \lfloor \hat{c} \rfloor$ is 
exactly half way between 
$\lfloor \hat{c} \rfloor^2$ and 
$\lfloor \hat{c} \rfloor^2 + 2 \lfloor \hat{c} \rfloor = 
\lceil \hat{c} \rceil -1$. Hence, Lemmas \ref{r equivalence} and 
\ref{floorrootr} imply that 
for $r$ closer to $\lfloor \hat{c} \rfloor^2$, that 
$\lfloor \hat{c} \rfloor$ cuts are better, but for $r$ closer 
to $\lceil \hat{c} \rceil^2$, that $\lceil \hat{c} \rceil$ cuts are better. 
For example, if $r=11$, then $\lfloor \hat{c} \rfloor ^2 = 9$ and 
$\lceil \hat{c} \rceil^2 = 16$. Since $11$ is closer to $9$, we know that 
$\lfloor \hat{c} \rfloor = 3$ cuts is optimal. Moreover, if 
$r \in \{9,10,11,12\}$, then $\lfloor \hat{c} \rfloor = 3$ 
cuts is optimal, but if $r \in \{13,14,15,16\}$, 
then $\lceil \hat{c} \rceil = 4$ cuts is optimal. 
\\

Returning to Theorem \ref{main2}, we can also 
easily obtain an upper bound on maximum sphere 
size. This is shown in the next lemma and corollary. 

\begin{lemma}\label{continuous} ~\\
Let 
$c \in [n-1]$. Then 
\begin{align*}
\psi(\mathbf{q}_c)
\;\;\;\ge\;\;\;
 c\left( \left( 
\frac{r}{c}-1
\right)^2 
-\frac{1}{4}\right).
\end{align*}
\end{lemma}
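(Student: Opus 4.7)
The plan is to bound $\psi(\mathbf{q}_c)$ from below by first handling the floor function entry-by-entry, then applying a convexity argument (Jensen's inequality) to exploit the fact that the entries of $\mathbf{q}_c$ sum to $n = 2r$.

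First I would record a simple elementary lemma: for every positive integer $m$,
\[
\psi(m) \;=\; \left\lfloor \frac{(m-2)^2}{4}\right\rfloor \;\ge\; \frac{(m-2)^2}{4} - \frac{1}{4}.
\]
This follows from a two-case check: if $m$ is even, then $(m-2)^2$ is divisible by $4$ and $\psi(m) = (m-2)^2/4$ exactly; if $m$ is odd, then $(m-2)^2 \equiv 1 \pmod{4}$ and $\psi(m) = ((m-2)^2 - 1)/4 = (m-2)^2/4 - 1/4$.

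Next I would sum this inequality over the $c$ entries of $\mathbf{q}_c$, obtaining
\[
\psi(\mathbf{q}_c) \;=\; \sum_{i=1}^{c} \psi(\mathbf{q}_c(i)) \;\ge\; \frac{1}{4}\sum_{i=1}^{c}\bigl(\mathbf{q}_c(i) - 2\bigr)^{2} - \frac{c}{4}.
\]
A direct inspection of the two cases in the definition of $\mathbf{q}_c$ shows that the entries always sum to $n$: in the first case $(q_c+1) + (c-2)q_c + (q_c-1) = c q_c = n$ (since $rem_c = 0$), and in the other case $rem_c(q_c+1) + (c-rem_c)q_c = c q_c + rem_c = n$. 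Recalling $n = 2r$, the mean of the entries is $2r/c$.

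Now I would apply Jensen's inequality to the convex function $x \mapsto (x-2)^2$: since the entries of $\mathbf{q}_c$ have mean $2r/c$,
\[
\sum_{i=1}^{c}\bigl(\mathbf{q}_c(i) - 2\bigr)^2 \;\ge\; c\cdot\Bigl(\frac{2r}{c} - 2\Bigr)^2 \;=\; 4c\Bigl(\frac{r}{c} - 1\Bigr)^2.
\]
Combining this with the previous display yields
\[
\psi(\mathbf{q}_c) \;\ge\; c\Bigl(\frac{r}{c} - 1\Bigr)^{2} - \frac{c}{4} \;=\; c\left(\Bigl(\frac{r}{c} - 1\Bigr)^{2} - \frac{1}{4}\right),
\]
which is exactly the desired inequality. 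There is no real obstacle here: the proof is a clean two-step argument (floor slack followed by convexity), and the only small verification needed is the trivial parity-based bound on $\lfloor (m-2)^2/4 \rfloor$ and the check that entries of $\mathbf{q}_c$ sum to $n$.
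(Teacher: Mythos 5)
Your proof is correct and follows essentially the same route as the paper: a quarter-unit floor-slack bound (the paper's Remark on $\lfloor (a/2)^2\rfloor$) combined with the observation that $\sum_i (\mathbf{q}_c(i)-2)^2$ is minimized, among nonnegative tuples summing to $n=2r$, by the constant tuple. The only cosmetic difference is that you invoke Jensen's inequality where the paper uses Cauchy–Schwarz; for the quadratic these are interchangeable.
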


\begin{proof}
Suppose 
$c \in [n-1]$  and let 
$\mathbf{a} := (a(1), a(2), \dots, a(c)) \in \mathbb{R}^c_{>0}$ such that 
$\sum_{i=1}^{c}a(i) = n$. 
Note first that \\
\begin{align}\label{eqn17}
\sum_{i=1}^c\left(
\frac{a(i)-2}{2} 
\right)^2 
\;\;\;=\;\;\;
\sum_{i=1}^c\left(
\frac{a(i)^2}{4} - a(i) + 1
\right) 
\;\;\;=\;\;\;
\frac{1}{4}\sum_{i=1}^c a(i)^2 
- n + c. 
\end{align}\\ 
Note also that by applying the Cauchy-Schwarz 
inequality to $\mathbf{a}$ and $\mathbf{1} \in 
\mathbb{R}^c$, the all-$1$ vector of 
length $c$, we obtain \\
\begin{align}\label{eqn18}
\sum_{i=1}^c a(i)^2
\;\;\;\ge\;\;\; 
\frac{\sum_{i=1}^c a(i)^2}{c}
\;\;\;=\;\;\; 
\frac{n^2}{c} 
\;\;\;=\;\;\; 
\sum_{i=1}^c\left(
\frac{n}{c}
\right)^2 
  \end{align} \\
Equation (\ref{eqn17}) and inequality (\ref{eqn18}) 
imply that choosing $\mathbf{a} = (n/c, n/c, \dots, n/c)$ 
minimizes the sum on the far left of 
Equation (\ref{eqn17}). 
The minimum of the left side of Equation (\ref{eqn17}) is less than or equal to 
$\sum_{i=1}^c((\mathbf{q}_c(i)-2)/2)^2$, 
with equality only holding when $n/c \in \mathbb{Z}$. 
Thus an application of  Remark \ref{floorcalc} completes the proof. 
\end{proof}

\begin{cor} \label{spherebound}
Let $\mathbf{m}_\sigma^r$ be a binary multipermutation. 
Also define 
\\
\begin{align*}
U(r) \; \; := \; \;
1 + (n-1)^2 
 - \; 
\left(\rule{0cm}{.7cm}\right.
 (\hat{c}-1)(r-1) 
\; + \; (n-1)(r-1)  
 + \big(\hat{c}-1\big)
\left(
\bigg( 
\frac{r}{\hat{c}+1}-1
\bigg)^2 
-\frac{1}{4}
\right) 
\left)\rule{0cm}{.7cm}\right..
\end{align*} \\
Then 
\begin{align*}
|S(\mathbf{m}_\sigma^r, 1)| 
\;\;\;<\;\;\; U(r). 
\end{align*}
\end{cor}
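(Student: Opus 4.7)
The plan is to derive the corollary by combining Theorem \ref{main2}, which characterizes the exact maximum of $|S(\mathbf{m}_\sigma^r,1)|$ as a discrete minimization over $c\in\{\lfloor \hat c\rfloor,\lceil \hat c\rceil\}$, with the continuous lower bound on $\psi(\mathbf{q}_c)$ supplied by Lemma \ref{continuous}. Structurally, $U(r)$ is obtained by substituting $\hat c-1$ for the outer cut count and $\hat c+1$ for the denominator inside the continuous bound, so the proof amounts to checking that these real-valued substitutions strictly reduce the minimand relative to any integer $c\in\{\lfloor \hat c\rfloor,\lceil \hat c\rceil\}$.

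First, by Theorem \ref{main2} the claim $|S(\mathbf{m}_\sigma^r,1)|<U(r)$ reduces to establishing the strict inequality
\begin{align*}
c(r-1)+\psi(\mathbf{q}_c)\;>\;(\hat c-1)(r-1)+(\hat c-1)\!\left(\!\left(\frac{r}{\hat c+1}-1\right)^2-\frac{1}{4}\right)
\end{align*}
for each $c\in\{\lfloor\hat c\rfloor,\lceil\hat c\rceil\}$. Since $c\in\mathbb{Z}$ and $\hat c=\sqrt r$, every such $c$ satisfies $\hat c-1<c<\hat c+1$ strictly. Applying Lemma \ref{continuous} gives $\psi(\mathbf{q}_c)\ge (r-c)^2/c - c/4$, and because $x\mapsto (r-x)^2/x$ is strictly decreasing on $(0,r)$, the bound $c<\hat c+1$ yields $(r-c)^2/c>(r-\hat c-1)^2/(\hat c+1)$.

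Combining these with the separate strict bound $c(r-1)>(\hat c-1)(r-1)$ on the leading term reduces the target to the elementary inequality
\begin{align*}
\frac{(r-\hat c-1)^2}{\hat c+1}-\frac{\hat c+1}{4}\;\ge\;(\hat c-1)\!\left(\frac{(r-\hat c-1)^2}{(\hat c+1)^2}-\frac{1}{4}\right),
\end{align*}
which after clearing common factors collapses to $(r-\hat c-1)^2/(\hat c+1)^2\ge 1/4$, equivalently $2r\ge 3\sqrt r+3$.

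The main obstacle is this final inequality, together with the handful of small $r$ where it fails. The condition $2r\ge 3\sqrt r+3$ holds for all $r$ past a small threshold, so for the remaining small cases I would either check directly that $\max|S(\mathbf{m}_\sigma^r,1)|<U(r)$ by evaluating both sides using the formulas of Remark \ref{remark51} and Lemma \ref{oddeven}, or exploit the strict slack $c(r-1)>(\hat c-1)(r-1)$ to absorb any deficit in the $\psi$ comparison. Together these observations deliver the strict bound claimed in the corollary.
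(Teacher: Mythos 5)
Your proposal is correct and follows essentially the same route as the paper: combine Theorem \ref{main2} with the continuous lower bound of Lemma \ref{continuous} and the bracketing $\hat{c}-1 < \lfloor\hat{c}\rfloor \le \lceil\hat{c}\rceil < \hat{c}+1$ to replace the discrete minimizer by the real quantities appearing in $U(r)$. Your version is in fact more careful than the paper's one-line proof, since you make explicit the residual condition $2r \ge 3\sqrt{r}+3$ (equivalently $r \ge 5$) under which the algebra closes and flag that the finitely many smaller values of $r$ must be checked directly.
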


\begin{proof}
The proof follows from Theorem \ref{main2}, 
Lemma \ref{continuous}, and the fact that 
$
\hat{c}-1 < \lfloor\hat{c}\rfloor \le \lceil\hat{c}\rceil < \hat{c}+1.
$
\end{proof}

The following table compares values from 
Corollary \ref{spherebound}
versus the size of the actual largest multipermutation
sphere for given values of $r$. 
The actual values of largest sphere sizes were calculated 
using Theorem \ref{main2}.

\begin{table}[h!]
\caption{Maximum sphere size verses bounded value}
\label{}
\begin{center}
\begin{tabular}{|c|c|c|c|}
\hline
r & Max sphere size (\ref{eqn14})& Inequality (\ref{eqn17})& ratio \tabularnewline
\hline 
$10$ & $148$ & $\sim168$  & $\sim .8819$\tabularnewline
\hline 
$100$ & $18,101$ & $\sim18,423$& $\sim .9825$  \tabularnewline
\hline 
$1000$ & $1,937,753$ & $\sim1,941,489$  &$\sim .9981$ \tabularnewline
\hline 
\end{tabular}
\end{center}
\end{table}

As the table suggests, the estimated value of the 
maximum sphere size obtained by applying 
Corollary \ref{spherebound} is asymptotically good. 
By asymptotically good we mean that the
 ratio between the true maximum sphere size, 
 $\underset{ \mathbf{m}_\sigma^r \in 
 \mathcal{M}_r(\mathbb{S}_n)}{\max} 
 |S(\mathbf{m}_\sigma^r,1)|$, 
 and the upper bound value, $U(r)$ from Corollary \ref{spherebound},  
approaches $1$ as $r$ approaches infinity. 
This can be confirmed by observing that  if 
\begin{align*}
L(r)  \;\; := \; \;
1 + (n-1)^2 
 - \; 
\left(\rule{0cm}{.7cm}\right.
 (\hat{c}+1)(r-1) 
\; + \; (n-1)(r-1)  
 + \big(\hat{c}+1\big)
\bigg( 
\frac{r}{\hat{c}-1}-\frac{1}{2}
\bigg)^2 
\left)\rule{0cm}{.7cm}\right.,
\end{align*} 
then 
$L(r) \;<\; \underset{ \mathbf{m}_\sigma^r \in 
\mathcal{M}_r(\mathbb{S}_n)}{\max} 
|S(\mathbf{m}_\sigma^r,1)|$. 
After making this observation, 
the Squeeze Theorem can then be applied with 
 $(\underset{ \mathbf{m}_\sigma^r \in 
\mathcal{M}_r(\mathbb{S}_n)}{\max} 
 |S(\mathbf{m}_\sigma^r,1)|)/U(r)$ 
 being squeezed between
$L(r)/U(r)$ and $U(r)/L(r)$. 
As before, in the definitions of both 
$U(r)$ and $L(r)$, we keep $n$ in favor of $2r$.

Finally, Corollary \ref{spherebound} 
can be applied to establish a new lower bound
on perfect single-error correcting MPC$(n,r)$'s 
in the binary case. 
It can also be applied to establish 
a new Gilbert Varshamov type lower bound. 
These two bounds are stated as the last two lemmas. 

\begin{lemma}\label{bound1} 
Let $C$ be a perfect 
single-error correcting MPC$(n,r)$. Also let 
$U(r)$ be defined as in Corollary \ref{spherebound}. Then 
\[
\frac{n!}{(r!)^2(U(r))} \;\; \le \;\; |C|.
\]
\normalsize
\end{lemma}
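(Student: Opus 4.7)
The plan is to mimic the sphere-packing argument used in Lemma \ref{perfect bound}, but with the upper bound on sphere sizes coming from Corollary \ref{spherebound} (since we are now in the binary case $n/r=2$, where Lemma \ref{omegasphere} no longer applies). The key idea is that a perfect code partitions the ambient space into pairwise-disjoint radius-$1$ spheres centered at the codewords, so the total number of multipermutations is the sum of the individual sphere sizes.

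First I would note that $|\mathcal{M}_r(\mathbb{S}_n)| = n!/(r!)^{n/r} = n!/(r!)^2$, since $n/r = 2$ in the binary case. Because $C$ is a perfect single-error correcting $\mathsf{MPC}(n,r)$, every $\mathbf{m}_\sigma^r \in \mathcal{M}_r(\mathbb{S}_n)$ lies in a unique sphere $S(\mathbf{m}_c^r,1)$ with $\mathbf{m}_c^r \in \mathcal{M}_r(C)$. Hence
\[
\sum_{\mathbf{m}_c^r \in \mathcal{M}_r(C)} |S(\mathbf{m}_c^r,1)| \;=\; \frac{n!}{(r!)^2}.
\]

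Next I would invoke Corollary \ref{spherebound}: since each $\mathbf{m}_c^r$ is a binary multipermutation, we have the strict bound $|S(\mathbf{m}_c^r,1)| < U(r)$. Substituting this into the summation gives
\[
\frac{n!}{(r!)^2} \;=\; \sum_{\mathbf{m}_c^r \in \mathcal{M}_r(C)} |S(\mathbf{m}_c^r,1)| \;\le\; |C| \cdot \max_{\mathbf{m}_c^r \in \mathcal{M}_r(C)} |S(\mathbf{m}_c^r,1)| \;<\; |C| \cdot U(r),
\]
and rearranging yields the desired lower bound $n!/((r!)^2 U(r)) \le |C|$.

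There is no substantive obstacle here, as the whole argument is a direct application of the perfect-code partition identity combined with the maximum-sphere-size bound established earlier in the paper; the only care required is to correctly specialize $(r!)^{n/r}$ to $(r!)^2$ using the binary assumption $n/r=2$, and to confirm that the strict inequality in Corollary \ref{spherebound} transfers cleanly through the summation (which it does because each term in the sum is strictly less than $U(r)$, and the sum is nonempty).
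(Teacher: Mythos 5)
Your proof is correct and follows exactly the route the paper intends: the paper's own proof is just the outline ``follows from Corollary \ref{spherebound},'' and what you have written is the standard perfect-code partition argument of Lemma \ref{perfect bound} with the sphere bound replaced by $U(r)$ and $(r!)^{n/r}$ specialized to $(r!)^2$. No issues.
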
 
\begin{proofoutline}
The proof follows from Corollary \ref{spherebound}.
\hfill $\square$
\end{proofoutline}

\begin{lemma}\label{bound2}
Let 
$C$ be an MPC$_\circ(n,r,d)$. Also let 
$U(r)$ be defined as in Corollary \ref{spherebound}. Then 
\[
\frac{n!}{(r!)^2(U(r))^{d-1}} \;\;\le\;\; |C|
\]
\normalsize
\end{lemma}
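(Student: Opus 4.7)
The plan is to mirror the proof of Lemma \ref{G-V bound} exactly, with the only substantive change being that the binary-case upper bound $U(r)$ from Corollary \ref{spherebound} plays the role that $1+(n-1)^2-(r-1)n$ played in the non-binary argument. Since the statement does not require $C$ to be of maximum cardinality, I would first reduce to that case: given any MPC$_\circ(n,r,d)$, we can enlarge it to a maximal such code $C^*$, and proving the lower bound for $|C^*|$ immediately yields the bound for the original code since $|C| \le |C^*|$.

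Next, by the maximality of $C^*$, a standard covering argument shows that for every $\mathbf{m}_\sigma^r \in \mathcal{M}_r(\mathbb{S}_n)$, there must exist some $\mathbf{m}_c^r \in \mathcal{M}_r(C^*)$ with $\mathrm{d}_\circ(\mathbf{m}_\sigma^r,\mathbf{m}_c^r) \le d-1$. Otherwise $\mathbf{m}_\sigma^r$ itself could be appended to $C^*$ while preserving the minimum distance $d$, contradicting maximality. Consequently,
\[
\underset{\mathbf{m}_c^r \in \mathcal{M}_r(C^*)}{\bigcup} S(\mathbf{m}_c^r, d-1)
\;=\; \mathcal{M}_r(\mathbb{S}_n),
\]
which (since $n/r = 2$ in the binary case, so $|\mathcal{M}_r(\mathbb{S}_n)| = n!/(r!)^2$) gives
\[
\frac{n!}{(r!)^2} \;\le\; |C^*|\cdot\underset{\mathbf{m}_c^r \in \mathcal{M}_r(C^*)}{\max}|S(\mathbf{m}_c^r, d-1)|.
\]

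The remaining step is to bound the maximum radius-$(d-1)$ sphere size by $U(r)^{d-1}$. The key observation, already implicitly used in Lemma \ref{G-V bound}, is that by the triangle inequality (combined with Remark \ref{translocations}) any point of $S(\mathbf{m},k)$ lies in $S(\mathbf{m}',1)$ for some $\mathbf{m}' \in S(\mathbf{m},k-1)$, so $|S(\mathbf{m},k)| \le |S(\mathbf{m},k-1)| \cdot \max_{\mathbf{m}'} |S(\mathbf{m}',1)|$. An easy induction on $k$ combined with Corollary \ref{spherebound}, which guarantees $|S(\mathbf{m}_\pi^r,1)| < U(r)$ for every binary multipermutation $\mathbf{m}_\pi^r$, yields $|S(\mathbf{m}_c^r,d-1)| \le U(r)^{d-1}$. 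Substituting into the previous display and rearranging gives $|C| \le |C^*|$ and $|C^*| \ge n!/((r!)^2 U(r)^{d-1})$, completing the argument.

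I do not anticipate any serious obstacle here, as the proof is essentially a transcription of the Gilbert-Varshamov-style reasoning from Lemma \ref{G-V bound} into the binary setting; the only care needed is in the iterative step bounding $|S(\mathbf{m},d-1)|$ in terms of $U(r)$, which is routine.
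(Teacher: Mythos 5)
Your core argument is exactly the paper's intended proof: the paper only gives a proof outline referring the reader to Corollary \ref{spherebound} and the Gilbert--Varshamov argument of Lemma \ref{G-V bound}, and your covering argument together with the iterated bound $|S(\mathbf{m},k)|\le |S(\mathbf{m},k-1)|\cdot\max_{\mathbf{m}'}|S(\mathbf{m}',1)| \le U(r)^{k}$ is precisely that transcription. In fact you are more careful than the paper here, since Lemma \ref{G-V bound} silently bounds a radius-$(d-1)$ sphere by the $(d-1)$st power of the radius-$1$ bound, and you supply the missing induction via Remark \ref{translocations}.

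There is, however, one genuine logical error in your opening reduction. You enlarge $C$ to a maximal code $C^*$, prove $n!/((r!)^2 U(r)^{d-1})\le |C^*|$, and then claim this ``immediately yields the bound for the original code since $|C|\le|C^*|$.'' That inference runs in the wrong direction: from $L\le |C^*|$ and $|C|\le|C^*|$ you cannot conclude $L\le|C|$. Indeed the lemma as literally stated is false for non-maximal codes (a two-codeword binary code at distance $d$ has $|C|=2$, which can be far below $n!/((r!)^2U(r)^{d-1})$), so no reduction of this kind can exist. The resolution is that the hypothesis ``of maximal cardinality'' has simply been omitted from the statement of Lemma \ref{bound2}; it is present in the companion Lemma \ref{G-V bound}, and Table \ref{multipermcodesizes} confirms the bound is meant for the maximum code size. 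With that hypothesis restored, your maximality-based covering argument applies directly to $C$ itself and the rest of your proof is correct.
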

\begin{proofoutline}
The proof follows from Corollary \ref{spherebound}
and a standard Gilbert-Varshamov argument 
(see the proof of Lemma \ref{G-V bound}
for such an argument). 
\hfill $\square$
\end{proofoutline}

\section{Conclusion} \label{conclusion}
This paper first considered and answered two questions. 
The first question concerned Ulam sphere sizes and the 
second concerned the possibility of perfect codes. 
It was shown that Ulam sphere sizes can be calculated 
explicitly for reasonably small radii using an application of the 
RSK-correspondence. 
It was then shown, partially using the
afforementioned sphere-calculation 
method, that nontrivial perfect Ulam permutation codes 
do not exist. These new results are summarized in Tables 
\ref{permspheresizes} and 
\ref{permcodelimits}, found in the introduction. 

Following the discussion of permutation codes, the 
multipermutation code case was considered next, 
and two more questions were addressed.
The third question of calculating $r$-regular Ulam spheres 
was addressed for the cases when the center is $\mathbf{m}_e^r$ 
or when the radius $t=1$. 
This lead to new upper and lower bounds 
on maximal code size, providing an answer to the fourth question. 
These new results are summarized in the Tables 
\ref{multipermspheresizes} and 
\ref{multipermcodesizes}, found in the introduction. 

Many remaining problems remain. 
One problem is to find a method 
for calculating $r$-regular Ulam spheres 
for more general parameters. Our current work 
began to show how to calculate sizes for 
any radius when the center is $\mathbf{m}_e^r$ (using 
Young tableaux) or for any center when 
the radius is 1, but not for general parameters. 
While we proved the nonexistence of nontrivial 
perfect Ulam permutation codes, it is unknown 
whether or not perfect multipermutation codes exist. 
A general formula for 
any center or radii, or at least 
bounds on general sphere sizes, would help 
in understanding bounds on the size of 
multipermutation Ulam codes 
and the possibility of perfect multipermutation codes.


%

\appendices

\section{}\label{appendix A}
\textit{Proof of Remark \ref{n-l}:} \\
Let 
 $\mathbf{m}_\sigma^r, \mathbf{m}_\pi^r \in 
 \mathcal{M}_r(\mathbb{S}_n)$.
We will first show that 
$\mathrm{d}_\circ(\mathbf{m}_\sigma^r,\mathbf{m}_\pi^r) \ge n - 
\ell(\mathbf{m}_\sigma^r, \mathbf{m}_\pi^r)$.
By definition of $\mathrm{d}_\circ(\mathbf{m}_\sigma^r,\mathbf{m}_\pi^r),$ there 
exist $\sigma' \in R_r(\sigma)$ and 
$\pi' \in R_r(\pi)$ such that $\mathrm{d}_\circ(\mathbf{m}_\sigma^r,\mathbf{m}_\pi^r) 
= \mathrm{d}_\circ(\sigma', \pi') = n - \ell(\sigma', \pi').$ 
Hence if 
for all $\sigma' \in R_r(\sigma)$ and $\pi' \in R_r(\pi)$ 
we have $\ell(\sigma', \pi') \le 
\ell(\mathbf{m}_\sigma^r, \mathbf{m}_\pi^r)$, 
then $\mathrm{d}_\circ(\mathbf{m}_\sigma^r,\mathbf{m}_\pi^r) 
\ge n - \ell(\mathbf{m}_\sigma^r, \mathbf{m}_\pi^r)$
(subtracting a larger value from $n$ 
results in a smaller overall value).
Therefore it suffices to show that 
that for all $\sigma' \in R_r(\sigma)$ and 
$\pi' \in R_r(\pi),$ that $\ell(\sigma', \pi') \le 
\ell(\mathbf{m}_\sigma^r, \mathbf{m}_\pi^r)$.  
This is simple to prove 
because if two permutations have a common subsequence, 
then their corresponding $r$-regular multipermutations 
will have a related common subsequence. 
Let $\sigma' \in R_r(\sigma)$, $\pi' \in R_r(\pi)$, and 
$\ell(\sigma', \pi') = k.$  Then there exist indexes 
$1 \le i_1 < i_2 < \dots < i_k \le n$ and 
$1 \le j_1 < j_2 < \dots < j_k \le n$ such that  
for all $p \in [k],$ $\sigma'(i_p) = \pi'(j_p).$  
Of course, whenever $\sigma'(i) = \pi'(j)$, then 
$\mathbf{m}_{\sigma'}^r(i) = \mathbf{m}_{\pi'}^r(j)$.
Therefore $\ell(\sigma', \pi') =k \le 
\ell(\mathbf{m}_{\sigma'}^r, \mathbf{m}_{\pi'}^r)
= \ell(\mathbf{m}_\sigma^r, \mathbf{m}_\pi^r).$

Next, we will show that $\mathrm{d}_\circ(\mathbf{m}_\sigma^r,\mathbf{m}_\pi^r) \le
n - \ell(\mathbf{m}_\sigma^r, \mathbf{m}_\pi^r).$
Note that
\begin{align*}
\mathrm{d}_\circ(\mathbf{m}_\sigma^r,\mathbf{m}_\pi^r) ~
&= ~ \underset{\sigma' \in R_r(\sigma), \pi' \in R_r(\pi)}{\min}
\mathrm{d}_\circ(\sigma', \pi') \\ 
&= ~ \underset{\sigma' \in R_r(\sigma), \pi' \in R_r(\pi)}{\min} 
(n - \ell(\sigma', \pi')) \\
&= ~ n - \underset{\sigma' \in R_r(\sigma), \pi' \in R_r(\pi)}{\max}
\ell(\sigma', \pi').
\end{align*}
Here if $\underset{\sigma' \in R_r(\sigma), \pi' \in R_r(\pi)}{\max} 
\ell (\sigma', \pi') \ge
\ell(\mathbf{m}_\sigma^r, \mathbf{m}_\pi^r)$, 
then 
$\mathrm{d}_\circ(\mathbf{m}_\sigma^r,\mathbf{m}_\pi^r) \le 
n - \ell(\mathbf{m}_\sigma^r, \mathbf{m}_\pi^r)$ 
(subtracting a smaller value from $n$ results in 
a larger overall value). 
It is enough to show that 
 there exist $\sigma' \in R_r(\sigma)$ and 
$\pi' \in R_r(\pi)$ such that 
$\ell(\sigma', \pi') \ge 
\ell(\mathbf{m}_\sigma^r, \mathbf{m}_\pi^r)$. 
To prove this fact, we take a longest common subsequence 
of $\mathbf{m}_\sigma^r$ and $\mathbf{m}_\pi^r$ and then 
carefully choose $\sigma' \in R_r(\sigma)$ and 
$\pi' \in R_r(\pi)$ to have an equally long common subsequence.  
The next paragraph describes how this can be done. 

Let $\ell(\mathbf{m}_\sigma^r, \mathbf{m}_\pi^r) = k$ and  
let $(1 \le i_1 < i_2 < \dots < i_k \le n)$ and 
$(1 \le j_1 < j_2 < \dots < j_k \le n)$ be integer sequences 
such that for all $p \in [k],$ 
$\mathbf{m}_\sigma^r(i_p) = \mathbf{m}_\pi^r(j_p).$ 
The existence of such sequences is guaranteed 
by the definition of $\ell(\mathbf{m}_\sigma^r, \mathbf{m}_\pi^r).$ 
Now for all $p \in [k],$ define 
$\sigma'(i_p)$ to be the smallest integer $l \in [n]$ 
such that $\mathbf{m}_\sigma(l) = \mathbf{m}_\sigma(i_p)$ 
and if $q \in [k]$ with $q < p,$ then 
$\mathbf{m}_\sigma^r(i_q) = \mathbf{m}_\pi^r(i_p)$ implies 
$\sigma'(i_q) < \sigma'(i_p) = l.$  
For all $p \in [k],$ define $\pi(j_p)$ similarly.  
Then for all $p \in [k],$ $\sigma'(i_p) = \pi'(j_p).$
The remaining terms of $\sigma'$ and $\pi'$ may 
easily be chosen in such a manner that 
$\sigma' \in R_r(\sigma)$ and $\pi' \in R_r(\pi).$
Thus there exist $\sigma' \in R_r(\sigma)$ and 
$\pi' \in R_r(\pi)$ such that 
$\ell(\sigma',\pi') \ge 
\ell(\mathbf{m}_\sigma^r, \mathbf{m}_\pi^r)$. 
\hfill $\square$ 

\section{}\label{appendix B}
\textit{Proof of Remark \ref{translocations}:} \\
Suppose 
 $\mathbf{m}_\sigma^r, \mathbf{m}_\pi^r \in 
 \mathcal{M}_r(\mathbb{S}_n)$.
There exists a translocation $\phi \in \mathbb{S}_n$ 
such that $\ell(\mathbf{m}_\sigma^r \cdot\phi, \mathbf{m}_\pi^r) 
= \ell(\mathbf{m}_\sigma^r, \mathbf{m}_\pi^r) + 1$, 
since it is always possible to arrange one element with a 
single translocation.  This then implies that 
$\min \{k \in \mathbb{Z} \;:\; \text{ there exists } $
$(\phi_1, \dots, \phi_k) \; \text{ such that }\; 
\mathbf{m}_\sigma^r \cdot \phi_1 \cdots \phi_k = \mathbf{m}_\pi^r\} 
\le n - \ell(\mathbf{m}_\sigma^r, \mathbf{m}_\pi^r) = 
\mathrm{d}_\circ(\mathbf{m}_\sigma^r,\mathbf{m}_\pi^r).$
At the same time, given 
$\ell (\mathbf{m}_\sigma^r, \mathbf{m}_\pi^r) \le n,$ 
then for all translocations $\phi \in \mathbb{S}_n,$ 
we have that 
$\ell (\mathbf{m}_\sigma^r \cdot \phi, \mathbf{m}_\pi^r) 
\le \ell (\mathbf{m}_\sigma^r, \mathbf{m}_\pi^r) + 1$, 
since a single translocation can only arrange one 
element at a time.  Therefore by Remark \ref{n-l}, 
$\min \{k \in \mathbb{Z} \;:\; \text{ there exists } (\phi_1, \dots, \phi_k) \text{ s.t } 
\mathbf{m}_\sigma^r \cdot \phi_1 \cdots \phi_k = \mathbf{m}_\pi^r\} 
\; \ge \; n - \ell(\mathbf{m}_\sigma^r, \mathbf{m}_\pi^r) = 
\mathrm{d}_\circ(\mathbf{m}_\sigma^r,\mathbf{m}_\pi^r)$.
\hfill $\square$

\section{}\label{appendix C}
\textit{Proof of Lemma \ref{gsequence}:} \\
Recall that $n$ is an even integer. 
Assume 
$c \in [n-1]$ and 
$\mathbf{q} := (q(1), q(2), \dots, q(c))\in \mathbb{Z}_{>0}^c$ such that 
  $\sum_{i=1}^c q(i) = n$. 
  For the first direction, suppose there exists some $i \in [c]$ such that 
  $q(i)$ is even. Since $n$ is even, the number of odd 
  values in $\mathbf{q}$ is even, i.e. 
  $\#\{ i \in [c] \;:\; q(i) \text{ is odd} \}=2k$ for 
some nonnegative integer $k$. We will now construct 
an $\mathbf{m} \in \{1,2\}^n$ with an equal number of 
$1$'s and $2$'s, whose cuts correspond to $\mathbf{q}$. 
We begin by defining two sets: first 
$\{q^*(1),q^*(2), \dots,q^*(2k)\} := \{q(i) \;:\; q(i) \text{ is odd}\}$ 
and then 
$\{q^*(2k+1), q^*(2k+2), \dots q^*(c)\} := \{q(i) \;:\; q(i) \text{ is even}\}$.  
Then define $\mathbf{m}$ as follows: 
\begin{eqnarray*}
\mathbf{m} &:=& (
\underset{g^*(1)}{\underbrace{\mathbf{m}[a_1,b_1]}}, 
\underset{g^*(2)}{\underbrace{\mathbf{m}[a_2,b_2]}}, 
\dots,
\underset{g^*(k)}{\underbrace{\mathbf{m}[a_k,b_k]}}, \;
\underset{g^*(2k+1)}{\underbrace{\mathbf{m}[a_{k+1},b_{k+1}]}},\\
& &
\underset{g^*(k+1)}{\underbrace{\mathbf{m}[a_{k+2},b_{k+2}]}}, 
\underset{g^*(k+2)}{\underbrace{\mathbf{m}[a_{k+3},b_{k+3}]}}, 
\dots,
\underset{g^*(2k)}{\underbrace{\mathbf{m}[a_{2k+1},b_{2k+1}]}}, \;
\underset{g^*(2k+2)}{\underbrace{\mathbf{m}[a_{2k+2},b_{2k+2}]}}
\dots,
\underset{g^*(c)}{\underbrace{\mathbf{m}[a_c,b_c]}})
\end{eqnarray*}
where $\mathbf{m}(1) = 1$ and for all $j \in [c]$, $\mathbf{m}[a_j,b_j]$ is a cut. 
  
The idea here is simple. By the definition of $\mathbf{m}$, 
each of the first $k$ cuts begin and end with $1$ since they are 
all odd length cuts, and thus 
each will has one more $1$ than $2$. 
The $(k+1)$th cut, $\mathbf{m}[a_{k+1},b_{k+1}]$,
 is taken to be of even length, which 
reverses the order of the subsequent $k$ cuts. 
Hence the $k$ cuts from $\mathbf{m}[a_{k+2},b_{k+2}]$ 
through $\mathbf{m}[a_{2k+1},b_{2k+1}]$ each begin and 
end with $2$, so each will have one more $2$ than $1$. 
The remaining cuts from $\mathbf{m}[a_{k+2},b_{k+2}]$
through $\mathbf{m}[a_c,b_c]$ are even, which implies 
that the number of $1$'s and $2$'s in each of these 
cuts is equal.  Hence, we may write the following: 
\begin{eqnarray*}
\#\{i \in [n] \;:\; \mathbf{m}(i) = 1\} &=& 
\; \; \; \; \#\{ i \in [a_1,b_k] \;:\; \mathbf{m}(i) = 1\}  \\
&&+ \; \#\{i \in [a_{k+2},b_{2k+1}] \;:\; \mathbf{m}(i) = 1\}  \\ 
&&+ \; \#\{i \in [a_{k+1},b_{k+1}] \cup [a_{2k+2},b_c] \;:\; \mathbf{m}(i) = 1\} \\  \\
&=& 
\; \; \; \; \#\{ i \in [a_1,b_k] \;:\; \mathbf{m}(i) = 2\} + k \\
&&+ \; \#\{i \in [a_{k+2},b_{2k+1}] \;:\; \mathbf{m}(i) = 2\} -k \\ 
&&+ \; \#\{i \in [a_{k+1},b_{k+1}] \cup [a_{2k+2},b_c] \;:\; \mathbf{m}(i) = 2\} \\ \\
&=& 
\; \; \; \; \#\{i \in [n] \;:\; \mathbf{m}(i) = 2\}.
\end{eqnarray*}
Applying Lemma \ref{equalonestwos} completes the first direction. 

For the second direction, let 
$\mathbf{m}_\sigma \in \mathcal{M}_r(\mathbb{S}_n)$ 
be a binary multipermutation 
such that 
\[
 \mathbf{m}_\sigma^r = 
 (
{\mathbf{m}_\sigma^r[a_1,b_1]}, 
{\mathbf{m}_\sigma^r[a_2,b_2]}, 
  \dots 
{\mathbf{m}_\sigma^r[a_c,b_c]}
  ),
\]
with each $\mathbf{m}_\sigma^r[a_j,b_j]$ a 
cut whose length 
corresponds to an element of $\mathbf{q}$. 
We want to show then that there exists some 
$i \in [c]$ such that $\mathbf{q}(i)$ is even. 
We proceed by contradiction. 

Suppose that there does not exist an $i \in [c]$ such that 
$\mathbf{q}(i)$ is even. Then for all $j \in [c]$,
$\mathbf{m}_\sigma^r[a_j,b_j]$ is odd. This then implies that 
for each cut 
$\mathbf{m}_\sigma^r[a_j,b_j]$, that 
$\mathbf{m}_\sigma^r(a_j) = \mathbf{m}_\sigma^r(b_j)$. 
In other words, an odd-length cut
necessarily begins and ends with the same element. 
However, to end one cut and begin another, 
it is necessary to repeat a digit. 
This implies that for each $j \in [c]$, and cut
$\mathbf{m}_\sigma^r[a_j,b_j]$, 
\[
\#\{i \in [a_j,b_j] \;:\; \mathbf{m}_\sigma^r(i) = \mathbf{m}_\sigma^r(1)\}
= \#\{ i \in [a_j,b_j] \;:\; \mathbf{m}_\sigma^r(i) \ne \mathbf{m}_\sigma^r(1)\} + 1,
\]
which in turn implies that 
the total number of elements of $\mathbf{m}_\sigma^r$ that equal 
to $\mathbf{m}_\sigma^r(1)$ is exactly $c$ more than the number of 
elements not equal to $\mathbf{m}_\sigma^r(1)$, 
contradicting Lemma \ref{equalonestwos}. 
\hfill $\square$

\section{}\label{appendix D}
\begin{remark}\label{floorcalc}
For any integer $a \in \mathbb{Z},$ it is easily verified that \\
1) If $a$ is even, then $\lfloor (a/2)^2\rfloor = (a/2)^2$\\
2) If $a$ is odd, then $\lfloor (a/2)^2 \rfloor = (a/2)^2 - 1/4$. 
\end{remark}

\begin{remark}\label{floordiff}
Let $a \in \mathbb{Z}$. 
Then the following is a direct consequence of the previous remark.\\
1) If $a$ is even, then 
$\lfloor (a/2)^2\rfloor  - 
\lfloor ((a-1)/2)^2\rfloor = (a/2) $. \\
2) If $a$ is odd, then 
$\lfloor (a/2)^2\rfloor  - 
\lfloor ((a-1)/2)^2\rfloor = (a/2) - 1/2$.
\end{remark}

\begin{remark}\label{floordiff2}
Let $a,b \in \mathbb{Z}_{\ge1}$ such that $a -b \ge 2$. Then 
\begin{eqnarray}\label{eqn3}
\left\lfloor \left(  \frac{a-2}{2}   \right)^2 \right\rfloor + 
\left\lfloor \left(   \frac{b-2}{2}  \right)^2 \right\rfloor
 &\ge&
\left\lfloor \left(  \frac{(a-1)-2}{2}   \right)^2 \right\rfloor + 
\left\lfloor \left(   \frac{(b+1)-2}{2}  \right)^2 \right\rfloor,
 \end{eqnarray}
\end{remark}
with equality holding only if $a-b = 2$ and both 
$a$ and $b$ are odd. 

\begin{proof}
Assume that $a,b \in \mathbb{Z}_{\ge1}$ and $a-b\ge2$. 
Then inequality (\ref{eqn3}) holds if and only if the 
following inequality also holds.
\begin{eqnarray}\label{eqn4}
\left\lfloor \left(  \frac{a-2}{2}   \right)^2 \right\rfloor -
\left\lfloor \left(  \frac{(a-1)-2}{2}   \right)^2 \right\rfloor 
-
\left(
\left\lfloor \left(   \frac{(b+1)-2}{2}  \right)^2 \right\rfloor
-
\left\lfloor \left(   \frac{b-2}{2}  \right)^2 \right\rfloor
\right) \ge 0.
\end{eqnarray}
Applying Remark \ref{floordiff} to the four cases when 
$a$ is either even or odd and $b$ is either even or odd, 
a routine calculation shows the following:  
If both $a$ and $b$ are even, then the left side of 
inequality (\ref{eqn4}) equals $(a-b)/2 > 0$. If $a$ is even 
and $b$ is odd or if $a$ is odd and $b$ is even, then the 
left side of inequality (\ref{eqn4}) equals $(a-b-1)/2 >0$. 
Finally, if both $a$ and $b$ are odd, then the left 
side of inequality (\ref{eqn4}) equals $(a-b-2)/2 \ge 0$. 
\end{proof}

\section{}\label{appendix E} 
\textit{Proof of Lemma \ref{cplusone}:}  \\
Assume $c \in [n-2]$ 
and $\hat{c} \le c$.
We will split the proof into two cases, 
when $q_{c+1} \le c$ and when $q_{c+1} > c$. 
Let us first suppose $q_{c+1} \le c$. This corresponds 
roughly to the case where $\hat{c} \le c \le \sqrt{n}$. 
In this instance, for all $i \in [2,c]$, we have 
$\mathbf{q}_c(i) - \mathbf{q}_{c+1}(i) \le 1$ and 
$\mathbf{q}_c(1) - \mathbf{q}_{c+1}(1) \le 2$. 
This is because $\mathbf{q}_{c+1}$ can be 
constructed from $\mathbf{q}_c$ by shortening each 
cut $\mathbf{q}_c$ in order to create the $(c+1)$st cut. 
Since $q_{c+1} \le c$, 
each cut will decrease by at most one, 
except for one exceptional case when 
$q_c$ is an odd integer and $q_{c+1} = c$, in 
which case $\mathbf{q}_c(1) - \mathbf{q}_{c+1}(1) = 2$. 

Since for all $i \in [2,c]$, 
$\mathbf{q}_c(i) - \mathbf{q}_{c+1}(i) \le 1$ and 
$\mathbf{q}_c(1) - \mathbf{q}_{c+1}(1) \le 2$, 
the left hand side of inequality (\ref{eqn8}) 
is less than or equal to 
\begin{align}\label{eqn6}
\psi(\mathbf{q}_c) 
\;-\;
\psi(\mathbf{q}_c(1)-2)
\;-\;
\psi(\mathbf{q}_c(2)-1,\mathbf{q}_c(3)-1,\dots, \mathbf{q}_c(c)-1)
\;-\;
\psi(\mathbf{q}_{c+1}(c+1)).
\end{align}
By adding and subtracting 
$\psi(\mathbf{q}_c(1)-1)$ 
from expression (\ref{eqn12})
and also disregarding the last term, 
$
-\psi(\mathbf{q}_{c+1}(c+1))
$,
after some rearrangement 
expression (\ref{eqn12}) 
is less than or equal to 
\begin{align*}
\psi(\mathbf{q}_c)
-
\psi(\mathbf{q}_c-1)
+
\psi(\mathbf{q}_c(1)-1)
-
\psi(\mathbf{q}_c(1)-2),
\end{align*}
which by Remark \ref{floordiff}
is less than or equal to 
\begin{align*}
\overset{c}{\underset{i=1}{\sum}}
\left(\frac{\mathbf{q}_c(i)}{2} -1\right)
+
\left(
\frac{\mathbf{q}_c(1)-1}{2}-1
\right)
&\; \; \; \le \; \; \; 
\left(\frac{n}{2} - c \right) 
+ 
\left(
\frac{(q_c+1)-1}{2}-1
\right) \\~\\
&\; \; \; = \; \; \; \;  
r - c + \frac{q_c}{2} -1, 
\end{align*}
where this last expression is less than or equal 
to $r-1$ since $q_{c+1} \le c$ implies that 
$q_c \;\le\; q_{c+1} +1 \;\le\; 2c$. 
This concludes the case where $q_{c+1} \le c$. \\

Next assume that $q_{c+1} > c$. This corresponds roughly 
to the case where $\sqrt{n} < c < n-1$. 
Note that 
\[
q_{c+1} = \frac{n-rem_{c+1}}{c+1} \;\;<\;\;
\frac{n}{c} \;\;\le\;\; \frac{2r}{\hat{c}} 
\;\;=\;\; 2\hat{c} \le 2c.
\]
Since $q_{c+1}$ is strictly less than $2c$, for all 
$i \in [c]$ we have $\mathbf{q}_c(i) - \mathbf{q}_{c+1}(i) 
\le 2$. Moreover, because $c < q_{c+1} < 2c$, 
the number of $i \in [c]$ such that 
$\mathbf{q}_{c}(i) - \mathbf{q}_{c+1}(i) = 2$ 
is $q_{c+1} - c$. This means then that the 
number of $i \in [c]$ such that 
$\mathbf{q}_c(i) - \mathbf{q}_{c+1}(i) = 1$ 
is equal to $c - (q_{c+1}-c) $. 
Similarly to before, the reasoning for these set sizes 
comes from constructing $\mathbf{q}_{c+1}$ from 
$\mathbf{q}_{c}$ and considering how much each cut 
length is decreased to construct the final cut, 
$\mathbf{q}_{c+1}(c+1)$. 
Example (\ref{example2}) helps here to aid comprehension. 

It should be noted that there is one exceptional case, 
when $q_{c+1}$ is an odd integer and $rem_{c+1} = 0$. In 
this instance, $\mathbf{q}_{c+1}(c+1) = q_{c+1}-1$, which 
means the number of $i \in [c]$ such that 
$\mathbf{q}_c(i) - \mathbf{q}_{c+1}(i) = 2$ is decreased by 
one, while the number of $i \in [c]$ such that 
$\mathbf{q}_{c}(i) - \mathbf{q}_{c+1}(i) = 1$ 
is increased by one. It is easily 
shown that the final effect on the size of the left hand side of 
inequality (\ref{eqn8}) is a decrease, so it is enough to 
prove the inequality in the typical case,  
when 
$\#\{i \in [c] \;:\; \mathbf{q}_{c}(i) - \mathbf{q}_{c+1}(i) = 2\}
= q_{c+1} - c$ and 
$\#\{i \in [c] \;:\; \mathbf{q}_{c}(i) - \mathbf{q}_{c+1}(i) = 1\}
= c - (q_{c+1} - c)$. These set sizes imply that the left 
hand side of inequality (\ref{eqn11}) is less than or equal to 
\begin{align*}
\psi(\mathbf{q}_c)
&\;-\;\;
\psi(\mathbf{q}_c(1)-2, \mathbf{q}_c(2)-2, \dots, \mathbf{q}_c(q_{c+1}-c)-2)\\
&\;-\;\;
\psi(\mathbf{q}_c(q_{c+1}-c+1)-1, 
  \mathbf{q}_c(q_{c+1}-c+2)-1, \dots,
  \mathbf{q}_c(c)-1 )
)
\;\;-\;\;
\psi(\mathbf{q}_{c+1}(c+1)).
\end{align*}
By adding and subtracting 
$\sum_{i=1}^{q_{c+1}-c}
\lfloor \left(
(\mathbf{q}_c(i)-3)/2
\right)^2\rfloor$, 
after some rearrangement 
expression (\ref{eqn13}) can be rewritten as 
\begin{align*}
&
\psi(\mathbf{q}_c)
\;\;\;+\;\;\;
\psi(\mathbf{q}_c(1)-1, \mathbf{q}_c(2)-1, \dots, 
\mathbf{q}_c(q_{c+1}-c)-1) 
\;\;\;-\;\;\;
\psi(\mathbf{q}_{c+1}(c+1)) \\
-\;\;\;
&
\psi(\mathbf{q}_c-1)
\;\;\;-\;\;\;
\psi(\mathbf{q}_c(1)-2, \mathbf{q}_c(2) -2, \dots, 
\mathbf{q}_c(q_{c+1}-c)-2)
\end{align*}
which by Remark \ref{floordiff}
is less than or equal to 
\begin{align*}
&\overset{c}{\underset{i=1}{\sum}}
\left(\frac{\mathbf{q}_c(i)}{2} -1\right)
\;\;\;+\;\;\; 
\overset{q_{c+1}-c}{\underset{i=1}{\sum}}
\left( 
\frac{\mathbf{q}_{c}(i)-1}{2} - 1
\right)
\;\;\;-\;\;\;
\psi(\mathbf{q}_{c+1}(c+1))
& \\~\\
\le \;\;\;&
\left(\frac{n}{2}- c\right)
\;\;\;+\;\;\;
\left(q_{c+1}-c\right)\left(\frac{(q_c+1)-1}{2} - 1\right)
\;\;\;-\;\;\;
\left(\left(\frac{q_{c+1}-2}{2}\right)^2 - \frac{1}{4}\right)& \\~\\
\le\;\;\; &
(r -c) 
\;\;\;+\;\;\;
(q_c - c - 1)\left( \frac{q_c}{2} - 1\right) 
\;\;\;-\;\;\;
\left(\left(\frac{q_c-3}{2}\right)^2 - \frac{1}{4})\right),& 
\end{align*}
which reduces to 
$r + {q_c^2}/{4} - {(cq_c)}/{2} - 1$. Because of the fact that 
$q_{c+1} < 2c$ implies $q_c \;\le\; q_{c+1} + 1 \;\le\; 2c$,
this final expression 
is guaranteed to be less than or equal to $r-1$, completing the proof. 

\section{}\label{appendix F} 
\textit{Proof of Lemma \ref{cminusone}:}  \\
Assume 
$c \in [n-1]$ and 
$c \le \hat{c}$. 
The left hand side of inequality 
(\ref{eqn8}) depends on the difference 
between $\mathbf{q}_{c-1}(i)$ and 
$\mathbf{q}_c(i)$ for each $i \in [c-1]$. 
We wish to show that these differences are 
sufficiently large to cause inequality (\ref{eqn8}) to be satisfied. 
Note that 
\[
q_c \;\;=\;\; \frac{n-rem_c}{c} \;\;>\;\; \frac{n-c}{c} \;\;=\;\; 
\frac{n}{c}-1 \;\;>\;\; \frac{n}{c^2}(c-1).
\]
Since $c \le \hat{c}$, we have $n/c^2 = 2r/c^2 \ge 2$, 
which implies that $q_c > 2(c-1)$. 
Therefore, for all $i \in [c-1]$, we have 
$\mathbf{q}_{c-1}(i) - \mathbf{q}_c(i) \ge 2$. 
This is because $\mathbf{q}_c$ can be constructed 
from $\mathbf{q}_{c-1}$ by shortening each cut of 
$\mathbf{q}_{c-1}$ in order to create the 
$c$th cut of $\mathbf{q}_c$, whose length 
is at least $q_c-1$. 
Example \ref{example1} helps comprehension here.

Next, let $k := \#\{i \in [c-1] \;:\; \mathbf{q}_{c-1}(i) 
- \mathbf{q}_c(i) >2\}$. In other words, $k$ is the number 
of cuts in $\mathbf{q}_{c-1}$ that are decreased by more 
than $2$ in the construction of $\mathbf{q}_c$ from $\mathbf{q}_{c-1}$. 
Notice that if $\mathbf{q}_{c-1}(i) - \mathbf{q}_{c}(i) = 2$ then 
by Remark \ref{floorcalc}, 
\[
\psi(\mathbf{q}_{c-1})
-
\psi(\mathbf{q}_c)
\;\;\; = \;\;\;
\left(\frac{\mathbf{q}_{c-1}(i)-2}{2}\right)^2
-
\left(\frac{\mathbf{q}_{c}(i)-2}{2}\right)^2.
\] 
Moreover, Remark \ref{floorcalc} also implies that in all 
other instances, 
\[
\psi(\mathbf{q}_{c-1}
-
\psi(\mathbf{q}_c)
\;\;\; \ge \;\;\;
\left(\frac{\mathbf{q}_{c-1}(i)-2}{2}\right)^2
-
\left(\frac{\mathbf{q}_{c}(i)-2}{2}\right)^2 - \frac{1}{4}.
\] 
Hence the left hand side of inequality (\ref{eqn8})
is greater than or equal to 
\begin{align}\label{eqn9}
\overset{c-1}{\underset{i=1}{\sum}}
\left( 
\frac{\mathbf{q}_{c-1}(i)-2}{2}
\right)^2
-
\overset{c-1}{\underset{i=1}{\sum}}
\left( 
\frac{(\mathbf{q}_{c}(i)-2}{2}
\right)^2
- \frac{k}{4}
- 
\left(
\frac{\mathbf{q}_c(c)-2}{2}
\right)^2.
\end{align}

At this point, we split the remainder of the proof into two 
possibilities, the general case where $\mathbf{q}_c(c) = q_c$ 
and the exceptional case where $\mathbf{q}_c(c) = q_c-1$, 
which only occurs when $n/c$ is an odd integer. 
We will treat the general case first and then end with some 
comments about the exceptional case. 
Since we are first assuming that $\mathbf{q}_c(c) = q_c$, 
expression (\ref{eqn9}) is equal to
\begin{align*}
&\overset{c-1}{\underset{i=1}{\sum}}
\left( 
\frac{\mathbf{q}_{c-1}(i)-2}{2}
\right)^2
-
\overset{c-1}{\underset{i=1}{\sum}}
\left( 
\frac{(\mathbf{q}_{c}(i)-2}{2}
\right)^2
- \frac{k}{4}
- 
\left(
\frac{q_c-2}{2}
\right)^2& \\ ~\\
= \;\;\;&
\overset{c-1}{\underset{i=1}{\sum}}
\left( 
\frac{\mathbf{q}_{c-1}^2(i)}{4} - \mathbf{q}_{c-1} + 1
\right)
- 
\overset{c-1}{\underset{i=1}{\sum}}
\left( 
\frac{\mathbf{q}_{c}^2(i)}{4} - \mathbf{q}_{c} + 1
\right)
- \frac{k}{4}
-
\frac{{g}_c^2}{4} + q_c - 1
& \\~\\
= \;\;\;&
\overset{c-1}{\underset{i=1}{\sum}}
\left(
\frac{\mathbf{q}_{c-1}^2(i)}{4}
\right)
-n 
+ (c-1)
-
\left[
\overset{c-1}{\underset{i=1}{\sum}}
\left(
\frac{\mathbf{q}_{c}^2(i)}{4}
\right)
- (n-q_c) + (c-1) 
\right]
- \frac{k}{4} 
- \frac{q_c^2}{4} + q_c -1
& \\~\\
=\;\;\; &
\frac{1}{4} 
\overset{c-1}{\underset{i=1}{\sum}}
\left(
\mathbf{q}_{c-1}^2(i) - \mathbf{q}_c^2(i) 
\right)
- \frac{k}{4} - \frac{q_c^2}{4} - 1. \numberthis \label{eqn10}
&
\end{align*}

From here, we focus on the summation 
$\sum_{i=1}^{c-1}
(\mathbf{q}_{c-1}^2(i) - 
\mathbf{q}_{c}^2(i))$
to prove that the overall expression is sufficiently large. 
The summation can be viewed as the sum of all shaded areas 
in Figure \ref{figure1}. 
In the figure, squares of area $\mathbf{q}_{c-1}^2(i)$ 
(with $i \in [c-1]$), are placed along the 
diagonal of an $n$-by-$n$ square. 
Within the bottom left corner of each of these squares
is placed another square of area $\mathbf{q}_{c}^2(i)$
(again $i \in [c-1]$). 
By carefully examining the total area of all shaded 
regions in the figure, we can lower bound 
$\sum_{i=1}^{c-1}
(\mathbf{q}_{c-1}^2(i) - 
\mathbf{q}_{c}^2(i))$
to satisfy the desired inequality. 

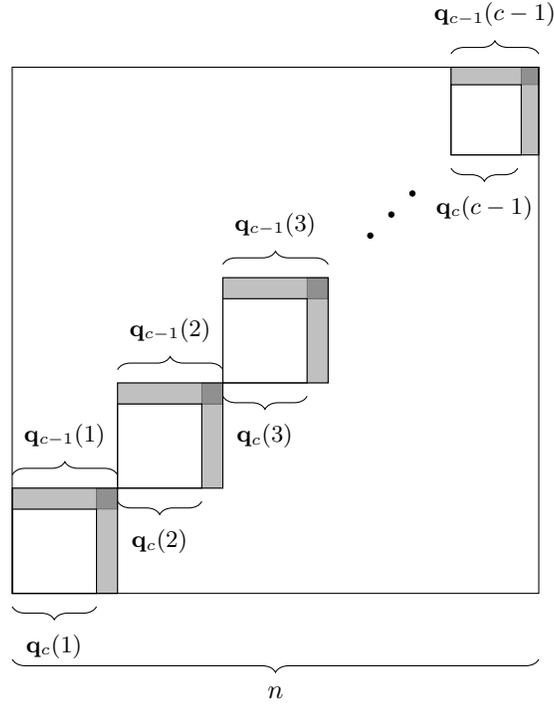
\begin{figure}[h]
\begin{center}
\begin{tikzpicture}[scale = .7]
\foreach \FullWidth/\FullHeight in {10/10} { 
	\draw (0, 0) -- ++(\FullWidth, 0) 
	  -- ++ (0, \FullHeight)
	  -- ++({-\FullWidth}, 0) 
	  -- cycle;
	\foreach \BoxWidth/\BoxHeight/\SideWidth/\SideHeight in 
	{{\FullWidth/5}/{10/5}/0.2/0.2} { 
	\foreach \BoxNumber in {1, 2,3 } {		
		\draw ({(\BoxNumber-1)*\BoxWidth}, {(\BoxNumber-1)*\BoxHeight}) 
		-- ++(\BoxWidth, 0) -- ++(0, \BoxHeight) -- ++({-\BoxWidth}, 0) -- cycle;
		\draw [fill = black, opacity = 0.25] ({\BoxNumber*\BoxWidth},
		 {(\BoxNumber-1)*\BoxHeight}) -- ++({-\SideWidth*\BoxWidth}, 0) 
		 -- ++(0, \BoxHeight) -- ++({\SideWidth*\BoxWidth}, 0) -- cycle; 
		\draw [fill = black, opacity = 0.25] ({(\BoxNumber-1)*\BoxWidth}, 
		{\BoxNumber*\BoxHeight}) 
		  -- ++(0, {-\SideHeight*\BoxHeight}) 
		  -- ++(\BoxWidth, 0) 
		  -- ++(0, {\SideHeight*\BoxHeight}) -- cycle;	
		  };
		\foreach \BoxNumber in {1, 2,3} { 
			\draw[decoration = {brace, mirror, amplitude = 5pt, raise = 5pt}, decorate] 
			({\BoxNumber*\BoxWidth}, {\BoxNumber*\BoxHeight}) -- node[above = 12pt] 
			{\small{\(\mathbf{q}_{c-1}({\BoxNumber})\)}} ++({-\BoxWidth}, 0); 
		};
		\foreach \BoxNumber in {1,2,3} { 
			\draw[decoration = {brace, amplitude = 5pt, raise = 5pt}, decorate] 
			({\BoxNumber*\BoxWidth-.4}, {(\BoxNumber-1)*\BoxHeight}) 
			-- node[below = 12pt] 
			{\small{\(\mathbf{q}_{c}({\BoxNumber})\)}} ++({-\BoxWidth+.4}, 0); 
		};
	};
	
	\draw (0, 0) -- ++(1.6, 0) -- ++(0, 1.6) -- ++({-1.6}, 0) -- cycle;
	\draw (2, 2) -- ++(1.6, 0) -- ++(0, 1.6) -- ++({-1.6}, 0) -- cycle;
	\draw (4, 4) -- ++(1.6, 0) -- ++(0, 1.6) -- ++({-1.6}, 0) -- cycle;
	\draw (10 - 10/6, 10-10/6) --++ (4/3,0) --++(0,4/3) --++({-4/3},0) --cycle;
	
\draw [fill = black] (7.6,7.6) circle (.05 cm);
\draw [fill = black] (7.2,7.2) circle (.05 cm);
\draw [fill = black] (6.8, 6.8) circle (.05 cm);

	\foreach \BoxWidth/\BoxHeight/\SideWidth/\SideHeight in 
	{{\FullWidth/6}/{10/6}/0.2/0.2} { 
	\foreach \BoxNumber in {6 } {
		\draw ({(\BoxNumber-1)*\BoxWidth}, {(\BoxNumber-1)*\BoxHeight}) 
		-- ++(\BoxWidth, 0) -- ++(0, \BoxHeight) -- ++({-\BoxWidth}, 0) -- cycle;
		\draw [fill = black, opacity = 0.25] ({\BoxNumber*\BoxWidth},
		 {(\BoxNumber-1)*\BoxHeight}) -- ++({-\SideWidth*\BoxWidth}, 0) 
		 -- ++(0, \BoxHeight) -- ++({\SideWidth*\BoxWidth}, 0) -- cycle; 
		\draw [fill = black, opacity = 0.25] ({(\BoxNumber-1)*\BoxWidth}, 
		{\BoxNumber*\BoxHeight}) 
		  -- ++(0, {-\SideHeight*\BoxHeight}) 
		  -- ++(\BoxWidth, 0) 
		  -- ++(0, {\SideHeight*\BoxHeight}) -- cycle;	
		  };
		\foreach \BoxNumber in {6} { 
			\draw[decoration = {brace, mirror, amplitude = 5pt, raise = 5pt}, decorate] 
			({\BoxNumber*\BoxWidth}, {\BoxNumber*\BoxHeight}) -- node[above = 12pt] 
			{\small{\(\mathbf{q}_{c-1}({c - 1})\)}} ++({-\BoxWidth}, 0); 
		};
		\foreach \BoxNumber in {6} { 
			\draw[decoration = {brace, amplitude = 5pt, raise = 5pt}, decorate] 
			({\BoxNumber*\BoxWidth-.4}, {(\BoxNumber-1)*\BoxHeight}) 
			-- node[below = 12pt] 
			{\small{\(\mathbf{q}_{c}({c - 1})\)}} ++({-\BoxWidth+.4}, 0); 
		};
	};
	\draw[decoration = {brace, mirror, amplitude = 5pt, raise = 5pt}, 
	decorate] (0, -1) -- node[below = 12pt] {\(n\)} ++(\FullWidth, 0); 
};
\end{tikzpicture}
\end{center}
\caption{Diagram of 
$\sum_{i=1}^{c-1} \left( 
\mathbf{q}_{c-1}^2(i) - \mathbf{q}_c^2(i)
\right) $ }
\label{figure1}
\end{figure}

\begin{figure}[h]
\begin{center}
\begin{tikzpicture}[scale = .7]
\foreach \BoxWidth/\BoxHeight/\SideWidth/\SideHeight in {5/5/0.2/0.2} { 
	\draw (0, 0) -- ++(\BoxWidth, 0) 
	  -- ++(0, \BoxHeight)
	   -- ++({-\BoxWidth}, 0) -- cycle;
	\draw (0, 0) -- ++(4, 0) -- ++(0, 4) -- ++({-4}, 0) -- cycle;
	\draw [fill = black, opacity = 0.25] (\BoxWidth, 0) -- ++
	  ({-\SideWidth*\BoxWidth}, 0) -- ++
	  (0, \BoxHeight) -- ++({\SideWidth*\BoxWidth}, 0) 
	  -- cycle; 
	  
	\draw [fill = black, opacity = 0.25] (0, \BoxHeight) 
	  -- ++(0, {-\SideHeight*\BoxHeight}) 
	  -- ++(\BoxWidth, 0) 
	  -- ++(0, {\SideHeight*\BoxHeight}) -- cycle;
	
	\draw[decoration = {brace, mirror, amplitude = 5pt, raise = 5pt}, decorate] 
	  (0, 0) -- node[below = 12pt] {\small{\(\mathbf{q}_c(i)\)}}
	  ++({(1-\SideWidth)*\BoxWidth}, 0); 
	\draw[decoration = 
	{brace, amplitude = 5pt, raise = 5pt}, decorate] 
	(0, 5) -- node[above = 12pt] {\small{\(\mathbf{q}_{c-1}(i)\)}} 
	  ++(\BoxWidth, 0);
	\draw[decoration = {brace, mirror, amplitude = 5pt, raise = 5pt}, decorate] 
	  (\BoxWidth, 0) -- node[right = 12pt] {\(\mathbf{q}_c(i) \geq{g}_c\)} 
	  ++(0, {(1-\SideHeight)*\BoxHeight});
	\draw[decoration = {brace, mirror, amplitude = 5pt, raise = 5pt}, decorate] 
	  (\BoxWidth, {(1-\SideHeight)*\BoxHeight}) -- node[right = 12pt] {\(\geq 2\)} 
	    ++(0, {\SideHeight*\BoxHeight});
};
\end{tikzpicture}
\end{center}
\caption{Diagram of $\mathbf{q}_{c-1}^2(i) - \mathbf{q}_c^2(i)$}
\label{figure2}
\end{figure}
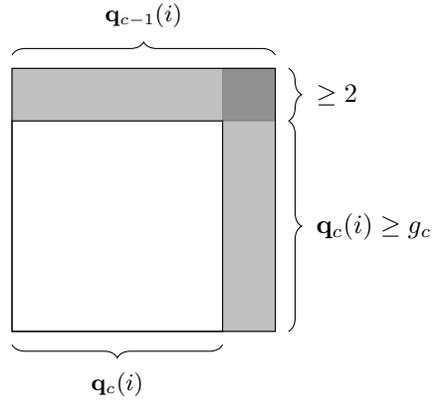

Figure \ref{figure2} depicts  
the difference $\mathbf{q}_{c-1}^2(i) - \mathbf{q}_c^2(i)$ 
for an individual $i \in [c-1]$. This is a closer view 
of one of the individual squares along the main 
diagonal in Figure \ref{figure1}. 
From Figure \ref{figure2}, we can observe that 
the value $(\mathbf{q}_{c-1}^2(i) - 
\mathbf{q}_{c}^2(i))$  can be visualized 
geometrically as the combined areas of 
two types of shapes - the rectangles 
shaded light gray and the square shaded dark gray. 
First, note that there are two identical rectangles (shaded 
light grey) whose dimensions are 
$\mathbf{q}_{c-1}(i) - \mathbf{q}_c(i)$ by $\mathbf{q_c}$. 
We know that for all $i \in [c-1]$, that 
$\mathbf{q}_c(i) \ge q_c$, 
and that 
$\sum_{i=1}^{c-1}
(\mathbf{q}_{c-1}(i)-\mathbf{q}_c(i)) 
= \mathbf{q}_c(c) = q_c$. 
Hence, the sum of the area of all lightly shaded 
rectangles in Figure \ref{figure1} is at least 
$2q_c^2$. 

Next, we will focus on the square shaped region 
(shaded dark grey) in Figure \ref{figure2}. 
The dimensions of this square are
$(\mathbf{q}_{c-1}(i)-\mathbf{q}_c(i))$ by 
$(\mathbf{q}_{c-1}(i)-\mathbf{q}_c(i))$. 
Therefore 
\[
\sum_{i=1}^{c-1} 
\left(
\mathbf{q}_{c-1}^2(i) - \mathbf{q}_c^2(i) 
\right)
\;\;\;\ge \;\;\;
2q_c^2 + \sum_{i=1}^{c-1}(\mathbf{q}_{c-1}(i)-\mathbf{q}_c(i))^2.
\]
We saw previously that 
$\mathbf{q}_{c-1}(i)-\mathbf{q}_c(i) \ge 2$, and 
again using the fact that 
$\sum_{i=1}^{c-1}
(\mathbf{q}_{c-1}(i)-\mathbf{q}_c(i)) 
= \mathbf{q}_c(c) = q_c$, 
the total area of all the sum of all dark grey shaded square 
regions in Figure \ref{figure1} is at least 
$2q_c$. Moreover, each time the 
difference between $\mathbf{q}_{c-1}(i)$ and 
$\mathbf{q}_c(i)$ is greater than $2$, this means 
an overall increase of $(\mathbf{q}_{c-1}(i) - \mathbf{q}_c(i))^2$ 
by at least $3$. Hence $\sum_{i=1}^{c-1}(\mathbf{q}_{c-1}(i) - \mathbf{q}_{c}(i))^2
\ge 2q_c + 3k$, which implies that expression 
(\ref{eqn10}) is greater than or equal to 
\begin{align}\label{eqn11}
\frac{q_c^2}{2} + \frac{q_c}{2} + \frac{3k}{4} 
- \frac{q_c^2}{4} - \frac{k}{4} - 1 
\;\;\;=\;\;\; \frac{q_c^2}{4} +\frac{q_c}{2} + \frac{k}{2} -1.
\end{align}

To complete the proof in the general case, recall that 
$q_c > n/c-1$. Thus, by 
replacing $q_c$ with $n/c-1$, 
we have that the right hand side of 
equation (\ref{eqn11}), afer some basic reduction, 
is greater than  
\begin{align}\label{eqn12}
\frac{r^2}{c^2} - \frac{1}{4} + \frac{k}{2} - 1. 
\end{align}
In this final expression, since $c \le \hat{c}$, 
we have $r^2/c^2 \ge r$. Also, since 
$q_c$ was strictly greater than $2(c-1)$, we
know that $k \ge 1$, completing the proof in the general case. \\

For the exceptional case, when $\mathbf{q}_c(c) = q_c-1$, 
we can follow the same argument as in the general case 
with slight modification. In this instance the last term in 
expression (\ref{eqn10}) is reduced since $\mathbf{q}_c(c) = q_c-1$ 
rather than $q_c$, resulting in a larger overall value. 
Using this fact, we can then show that whenever 
$\mathbf{q}_c(c) = q_c-1$, expression (\ref{eqn10}) is greater or equal to 
\[
\frac{1}{4} 
\overset{c-1}{\underset{i=1}{\sum}}
\left(
\mathbf{q}_{c-1}^2(i) - \mathbf{q}_c^2(i) 
\right)
- \frac{k}{4} - \frac{q_c^2}{4} + \frac{q_c}{2} - \frac{5}{4}. \numberthis \label{eqn13}
\]
By a similar argument to the general case, we can also show that 
if $\mathbf{q}_c(c) = q_c-1$, then 
\[
\sum_{i=1}^{c-1} 
\left(
\mathbf{q}_{c-1}^2(i) - \mathbf{q}_c^2(i) 
\right)
\;\;\; \ge \;\;\;
2q_c^2 +2. 
\]
This fact, along with the fact that $q_c > n/c-1$, 
implies that expression (\ref{eqn10}), 
and therefore the left hand side of inequality (\ref{eqn8}), is greater than 
\[
\frac{r^2}{c^2} + \frac{k}{2} - 1 \;\;\;\ge\;\;\; r -1.
\]

\section{}\label{appendix G}
\textit{Proof of Lemma \ref{floorrootr}:}  \\
We first show that 
$\hat{c} \;\le\; \lfloor \hat{c} \rfloor + 0.5$ implies
\begin{align*} 
\psi(\mathbf{q}_{\lfloor\hat{c}\rfloor})
- 
\psi(\mathbf{q}_{\lceil \hat{c} \rceil}) 
\;\;\; \le \;\;\; 
r-1.
\end{align*}

Let $\hat{c} \;\le\;\lfloor\hat{c}\rfloor + 0.5$. Then 
by Lemma \ref{r equivalence}, we know that 
$r \;\le\; \lfloor \hat{c} \rfloor^2 + \lfloor \hat{c} \rfloor$.
From here, we will split the proof into two possibilities: (1)
where
$r \; < \; \lfloor \hat{c} \rfloor^2 + \lfloor \hat{c} \rfloor$;
and (2) where 
$r \;=\; \lfloor \hat{c} \rfloor^2 + \lfloor \hat{c} \rfloor$.

First, suppose that 
$r \;<\; \lfloor \hat{c} \rfloor^2 + \lfloor \hat{c} \rfloor$.
In the event that $\hat{c} = \lfloor \hat{c} \rfloor$, then 
$\lfloor \hat{c} \rfloor = \lceil \hat{c} \rceil$, which implies that 
the left hand side of inequality (\ref{eqn15}) is equal to $0$, 
so that the conclusion holds trivially. Thus we will assume that 
$\lfloor \hat{c} \rfloor < \hat{c}$, which implies that 
$\lceil \hat{c} \rceil = \lfloor \hat{c} \rfloor + 1$. 
From here, for ease of notation, and in order to see the 
connection to Lemma \ref{cplusone} more clearly, 
let $c : = \lfloor \hat{c} \rfloor$ 
so that $c+1 = \lceil \hat{c} \rceil$. 

Next, note that 
\[
q_{c+1} \;=\; \frac{n-rem_{c+1}}{c+1} 
\;\le\;
\frac{n}{c+1} 
\;=\;
\frac{2r}{c+1}, 
\]
and that 
\[
\frac{2r}{c+1} \;<\; 2c  
\text{ \; if and only if \; } 
r \;<\; c(c+1) \;= \; \lfloor \hat{c}\rfloor^2 + \lfloor \hat{c}\rfloor, 
\]
which is true by assumption. 
Therefore $q_{c+1} < 2c$. 
From this point, the proof for the case where 
$r \;<\; \lfloor \hat{c} \rfloor^2 + \lfloor \hat{c} \rfloor$ 
is the same as that of Lemma \ref{cplusone}. 
\\

We now consider the second possibility. Assume that 
$r \;=\; \lfloor \hat{c} \rfloor^2 + \lfloor \hat{c} \rfloor$. 
We claim that this implies that the left hand side of 
inequality (\ref{eqn15}) is exactly equal to $r-1$. 
This also has the implication that $\lfloor \hat{c} \rfloor$ 
and $\lceil \hat{c} \rceil$ both yield the same maximum sphere size. 
To see why the claim is true, note first that 
\[
r \;=\; \lfloor \hat{c} \rfloor^2 + \lfloor\hat{c} \rfloor 
\;=\;
\lfloor \hat{c} \rfloor(\lfloor \hat{c} \rfloor + 1) 
\;=\;
\lfloor \hat{c} \rfloor \cdot \lceil\hat{c}\rceil. 
\]
This implies that 
\[
q_{\lfloor\hat{c}\rfloor} \;= \;\frac{n}{\lfloor\hat{c}\rfloor}
\;=\; 2\lceil \hat{c} \rceil 
\text{ \; and that \;}
q_{\lceil\hat{c}\rceil} \;=\; \frac{n}{\lceil\hat{c}\rceil} 
\;=\; 2\lfloor \hat{c}\rfloor.
\]
Therefore we have 
\begin{align*}
\psi(\mathbf{q}_{\lfloor \hat{c} \rfloor}) 
\;\;=\;\; &
\psi( (
\underset{\lfloor \hat{c} \rfloor}{\underbrace{
2\lceil \hat{c} \rceil, 2\lceil \hat{c} \rceil, \dots 2\lceil \hat{c} \rceil}}
))
\;\;=\;\; 
\lfloor \hat{c}\rfloor \left( \lfloor \hat{c} \rfloor\right)^2 
\;\;=\;\; 
\lfloor \hat{c} \rfloor ^3, \numberthis \label{eqn19}
\end{align*}
and similarly, 
\begin{align*} 
\psi(\mathbf{q}_{\lceil \hat{c} \rceil})
\;\;=\;\; &
\psi( (
\underset{\lceil \hat{c} \rceil}{\underbrace{
2\lfloor \hat{c} \rfloor, 2\lfloor \hat{c} \rfloor, \dots 2\lfloor \hat{c} \rfloor}}
))
\;\;=\;\; 
\left(\lfloor \hat{c} \rfloor + 1\right)\left(\lfloor \hat{c}\rfloor - 1\right)^2
\;\;=\;\;
\lfloor \hat{c} \rfloor ^3 - \lfloor \hat{c} \rfloor^2 - \lfloor \hat{c} \rfloor + 1.
\numberthis \label{eqn20}
\end{align*}
Finally, subtracting (\ref{eqn19}) and (\ref{eqn20}), 
we have that the left hand side of inequality 
(\ref{eqn15}) is equal to 
$\lfloor \hat{c} \rfloor^2 + \lfloor \hat{c} \rfloor - 1$, 
which is equal to $r-1$ by the assumption that 
$r = \lfloor \hat{c} \rfloor^2 + \lfloor \hat{c} \rfloor$. 
This completes the first half of the proof. 
 \\~\\~\\
We next show that 
$\hat{c} \;\ >\; \lfloor \hat{c} \rfloor + 0.5$ implies 
\begin{align}\label{eqn16}
\psi(\mathbf{q}_{\lfloor \hat{c} \rfloor})
- 
\psi(\mathbf{q}_{\lceil \hat{c} \rceil})
\;\;\; > \;\;\; 
r-1.
\end{align}

Let $\hat{c} \;>\; \lfloor \hat{c} \rfloor + 0.5.$ 
For ease of notation and to see the 
connection to Lemma \ref{cminusone}, 
let $c := \lceil \hat{c} \rceil$ so that 
$c-1 = \lfloor \hat{c} \rfloor$.
By
Lemma \ref{r equivalence}, we have 
\[
r \;\;> \;\; (c-1)^2 + (c-1)
\;\;=\;\; (c-1)(c)
\text{ \; which implies that \; }
\frac{r}{c} \;\;>\;\;c-1.
\]
 Note that 
\[
q_c \;\;=\;\; \frac{n-rem_c}{c} 
\;\;>\;\;
\frac{n}{c} -1 
\;\;=\;\;
\frac{2r}{c} -1 
\;\; > \; \;
2(c-1) -1, 
\text{ \; which implies that \;} 
q_c \ge 2(c-1). 
\]
At the same time,  note that 
\[
q_c \;\;=\;\; \frac{n-rem_c}{c}
 \;\; \le \;\;
 \frac{n}{c}
 \;\;=\;\; 
 \frac{2r}{c}  
 \;\; < \;\; 
 \frac{2r}{\hat{c}-1}, 
\]
which is easily shown to be 
strictly less than $3(\hat{c}-1)$ 
as long as $r \ge 30$, and clearly 
$3(\hat{c}-1)$ is strictly less than $3(c-1)$. 
It is also easily verified numerically that 
inequality (\ref{eqn16}) is satisfied for $r <30$ 
(assuming that $\hat{c} \;>\; \lfloor\hat{c}\rfloor +0.5)$.  
Hence, for the remainder of the proof, 
we shall assume that $r \ge 30$, which means that 
\[
2(c-1) \;\;\le\;\; q_c \;\;<\;\;3(c-1).
\]
From here we split into two cases, 
(1) when $q_c = 2(c-1)$ exactly;  and 
(2) when $q_c > 2(c-1)$. 

First, assume that $q_c = 2(c-1)$. 
Then following similar logic to the proof of 
Lemma \ref{cminusone}, 
we know that for all $i \in [c-1]$, that 
$\mathbf{q}_{c-1}(i) - \mathbf{q}_c(i) \;=\;2$. 
Technically this is assuming that we are not in the 
special case when $n/(c-1)$ is an odd integer, in 
which case $\mathbf{q}_{c-1}(1) -\mathbf{q}_c(1) = 3$ 
and $\mathbf{q}_{c-1}(c-1) - \mathbf{q}_c(c-1) = 1$. 
However, this would result in an overall increase of the 
left hand side of inequality (\ref{eqn16}), so it 
is enough to consider the general case when $n/(c-1)$ 
is not an odd integer. 

Since $\mathbf{q}_{c-1}(i) - \mathbf{q}_c(i) = 2$ for each 
$i \in [c-1]$, then the left hand side of inequality (\ref{eqn16}) 
is equal to 
\begin{align*}
\psi(\mathbf{q}_{c-1})
-
\psi(\mathbf{q}_{c-1}-2)
- 
\psi(\mathbf{q}_c(c))
\end{align*}
By Remark \ref{floordiff} and 
the fact that $\mathbf{q}_c(c) = q_c = 2(c-1)$, the above expression 
is equal to 
\begin{align*}
&\overset{c-1}{\underset{i=1}{\sum}}
\left( 
\frac{\mathbf{q}_{c-1}(i)-2}{2}
\right)^2
-
\overset{c-1}{\underset{i=1}{\sum}}
\left( 
\frac{(\mathbf{q}_{c-1}(i)-4}{2}
\right)^2
- 
\left(
\frac{(2(c-1)-2}{2}
\right)^2 \\
\;\;=\;\; &
\overset{c-1}{\underset{i=1}{\sum}}
\left( 
\frac{\mathbf{q}^2_{c-1}(i)}{4}
- \mathbf{q}_{c-1}(i) + 1
\right)
- 
\overset{c-1}{\underset{i=1}{\sum}}
\left( 
\frac{\mathbf{q}^2_{c-1}(i)}{4}
- 2\mathbf{q}_{c-1}(i) + 4
\right)
- \left( c-1 \right)^2 \\
\;\;=\;\;&
\overset{c-1}{\underset{i=1}{\sum}}
\left( 
\mathbf{q}_{c-1}(i) - 3 
\right)
- \left( c^2 - 4c+4\right)^2  \\
\;\;=\;\; &n - 3(c-1) - c^2 + 4c - 4 \\
\;\;=\;\; 
& 2r + c -c^2 -1  \\
\;\; = \;\; &r-1 + r-(c-1)c. \numberthis \label{eqn21}
\end{align*}
We saw earlier that $r > (c-1)c$, 
so expression \ref{eqn21} is greater than $r-1$. 
This completes the proof for the case when 
$q_c = 2(c-1)$. \\

Next suppose that $q_c > 2(c-1)$. Let 
$k := \# \{ i \in [c-1] \;:\; \mathbf{q}_{c-1}(i) - \mathbf{q}_c(i) \;=\;3\}.$ 
In other words, $k$ is the number of cuts in $\mathbf{q}_{c-1}$
that are decreased by $3$ in the construction of $\mathbf{q}_c$ 
from $\mathbf{q}_{c-1}$. Recall that 
$q_c < 3(c-1)$, so for all $i \in [c-1]$, we have 
$\mathbf{q}_{c-1}(i) - \mathbf{q}_c(i) \le 3$. 
Since we are also assuming that $q_c > 2(c-1)$, we know 
that $q_c = 2(c-1) + k$. This is because $q_c$ is equal to  
$2$ times the number of cuts in $\mathbf{q}_{c-1}$ decreased 
by $2$, plus $3$ times the number of cuts in $\mathbf{q}_{c-1}$ 
decreased by $3$ in the construction of $\mathbf{q}_c$. 

Following the same reasoning as in the proof of 
Lemma \ref{cminusone}, we can then show that 
the left hand side of inequality \ref{eqn16} is greater 
or equal to 
\begin{align}\label{eqn22}
\frac{q_c^2}{2} + \frac{q_c}{2} + \frac{k}{2} - 1.
\end{align}
Expression \ref{eqn22} is the same expression obtained as 
the right hand side of equation \ref{eqn11} in the proof 
of Lemma \ref{cminusone}. At this stage, however, we recall 
the fact that under the current assumptions, $q_c = 2(c-1)+k$. 
Substituting $2(c-1)+k$ for $q_c$ and simplifying, 
after some rearranging we
obtain that expression \ref{eqn22} is equal to 
\[
c^2 - 1 + c(k-1) + \frac{k^2}{4},
\]
which is greater than 
$r-1 + c(k-1) + k^2/4$ since by the assumption that 
$\hat{c} > \lfloor \hat{c} \rfloor + 0.5$, we know 
$ c \;=\; \lceil \hat{c} \rceil \;>\; \hat{c}$. 
Finally, this last expression is greater than $r-1$ as long as 
$k \ge 1$, which we know is true since 
$q_c \;>\; 2(c-1)$. 
\hfill $\square$

\section*{Acknowledgment}
The author would like to thank M. Hagiwara for 
his valuable comments and discussion. 
This paper is partially supported by 
KAKENHI 16K12391 and KAKENHI 26289116.

\ifCLASSOPTIONcaptionsoff
  \newpage
\fi

\end{document}